\documentclass[11pt,a4paper]{article}
\usepackage{authblk}
\usepackage[top=25truemm,bottom=25truemm,left=25truemm,right=25truemm,footskip=10truemm]{geometry}
\usepackage{booktabs}
\usepackage{amsmath}
\usepackage{amsthm}
\usepackage{float}
\usepackage{amssymb}
\usepackage{amsfonts}
\usepackage{scalefnt}
\usepackage{here}
\usepackage{indentfirst}
\usepackage{algorithmic}
\usepackage{algorithm}
\usepackage{setspace}
\usepackage{color}
\usepackage{graphicx}
\usepackage{subcaption}
\usepackage{lscape}
\usepackage{natbib}
\usepackage{url}
\usepackage{placeins}
\usepackage{tikz}
\usepackage{makecell}
\usetikzlibrary{arrows.meta, positioning, calc, fit}
\DeclareMathOperator*{\argmin}{arg\,min}

\newtheorem{theorem}{Theorem}[section]
\newtheorem{proposition}[theorem]{Proposition}
\newtheorem{corollary}[theorem]{Corollary}
\newtheorem{lemma}[theorem]{Lemma}

\newtheorem{definition}{Definition}[section]
\newtheorem{remark}{Remark}[section]

\title{
Estimating Consensus Epidemic Trajectories via a Constrained Power Fr\'echet Mean with Functional Registration
}
\author[1,*]{Yui Tomo}
\author[1,2]{Shu Tamano}
\author[1,3]{Daisuke Yoneoka}
\affil[1]{Department of Epidemiology, National Institute of Infectious Diseases, Japan Institute for Health Security, 1-23-1 Toyama, Shinjuku-ku, Tokyo 162-0052, Japan}
\affil[2]{Department of Multidisciplinary Sciences, Graduate School of Arts and Sciences, The University of Tokyo, 3-8-1 Komaba, Meguro-ku, Tokyo 153-8902, Japan}
\affil[3]{Department of Global Health Policy, Graduate School of Medicine, The University of Tokyo, 7-3-1 Hongo, Bunkyo-ku, Tokyo 113-0033, Japan}
\affil[*]{E-mail: tomo.y@jihs.go.jp}
\date{}

\begin{document}
\maketitle

\begin{abstract}
In infectious disease modeling during the early phase of a pandemic, SEIR-type compartmental models are standard tools, and they require epidemiological parameters as inputs.
Because these parameters are subject to uncertainty, different research groups often report different epidemic curves, and obtaining a representative curve that captures the characteristics of these trajectories is important for decision-making.
However, simple pointwise summaries can attenuate epidemic peaks under temporal misalignment and generally do not preserve the dynamical structure of the underlying compartmental models.
To address these limitations, we propose a method for summarizing multiple solutions to SEIR-type compartmental models on a functional space by computing a constrained power Fr\'echet mean with temporal shift registration.
In our method, we regard the pairs of exposed and infectious compartments as objects in a Hilbert space, and the consensus curve is defined as the solution to a constrained optimization problem.
Differential equation constraints and population constraints are incorporated in the optimization to preserve a partially mechanistic interpretation regarding the infectious compartment.
We develop an implementable block-optimization algorithm based on basis function expansion.
In simulation studies based on early COVID-19 parameter estimates, the proposed method produced consensus curves with a single epidemic peak, whereas pointwise summaries exhibited attenuated or multiple peaks.
We further applied the method to six literature-derived parameter sets from early COVID-19 studies and obtained a representative trajectory with interpretable epidemiological parameters.
The proposed approach provides a generalized trajectory-summarization method that includes mean- and median-type estimators and preserves mechanistic interpretability.
\end{abstract}

\noindent
\textbf{Keywords}: Constrained optimization, Fr\'echet Mean, Fr\'echet Median, Functional data analysis, Infectious disease modeling

\section{Introduction}

Mathematical modeling of infectious diseases plays an important role in understanding epidemic dynamics, predicting outbreak trajectories, and evaluating the potential impact of interventions \citep{white2010mathematical}.
Among various modeling approaches, compartmental models, such as the susceptible--infectious--removed (SIR) model and its extensions, have been widely used due to their interpretability, analytical tractability, and applicability to diverse infectious diseases \citep{kermack1927contribution}.
These models describe the time evolution of subpopulations (or compartments) such as susceptible, infectious, and removed individuals, governed by systems of ordinary differential equations.
The parameters in these models reflect epidemiological quantities such as transmission rates, incubation periods, and recovery rates.
The susceptible--exposed--infectious--removed (SEIR) model extends the basic SIR model by incorporating an exposed compartment to model the latent period between infection and infectiousness, which is essential for diseases with incubation periods \citep{aron1984seasonality}.
Therefore, respiratory infectious diseases such as COVID-19 and influenza are typically modeled using SEIR-based approaches \citep{he2020seir,etbaigha2018seir}.

In practice, the compartmental models require input parameters such as transmission rates, incubation periods, and recovery rates.
These parameters are typically estimated from epidemiological data and hence are subject to uncertainty.
For example, Figure~\ref{fig:trajectories_I_naive_mean_median} displays five infectious population curves from the SEIR model with different parameter values, depicted as gray lines.
These trajectories were generated by sampling five realizations from probability distributions constructed based on the parameter estimates and confidence intervals reported in \cite{li2020early}.
Refer to Section~\ref{sec:simulation} for detailed simulation settings.
This figure illustrates that stochastic variations in parameter estimates can lead to large discrepancies across the resulting curves.
This variability poses a challenge for researchers and policymakers: how to summarize a set of heterogeneous epidemic trajectories into a single, consensus curve that facilitates decision-making.
Furthermore, in practice, the parameter estimates may depend on the availability and quality of data.
Therefore, depending on the data source and estimation method, different researchers or research groups may obtain different trajectories.
This leads to the problem of summarizing a set of such trajectories in a representative way.

\begin{figure}[htbp]
    \centering
    \includegraphics[width=0.8\textwidth]{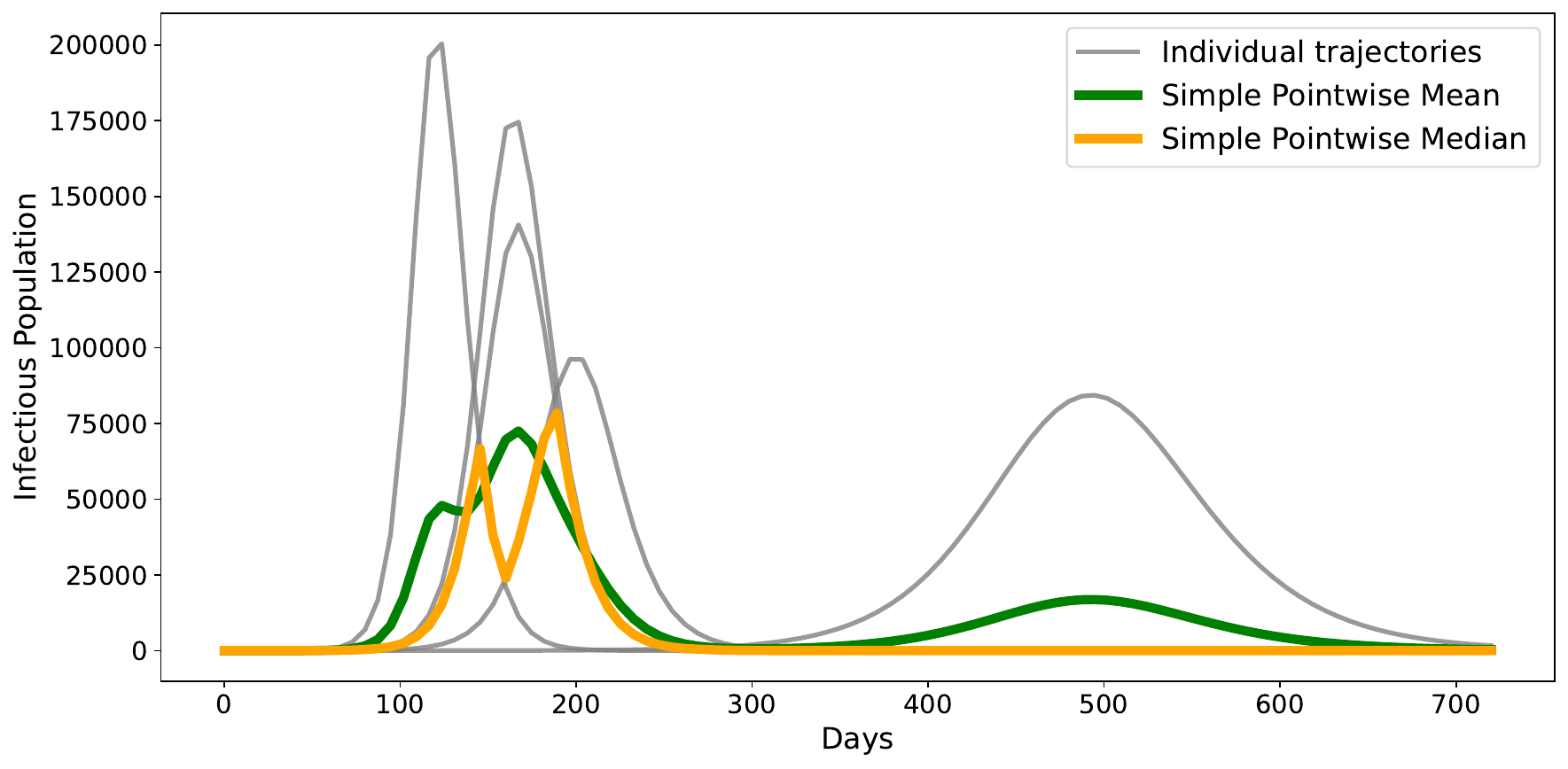}
    \caption{Simulated SEIR trajectories (gray) with the simple pointwise mean (green) and the median (orange).}
    \label{fig:trajectories_I_naive_mean_median}
\end{figure}

When the target of summarization is itself an epidemic trajectory, it is natural to evaluate a representative summary through a loss function defined directly on the space of trajectories.
From a decision-theoretic perspective, the representative trajectory can then be viewed as an action chosen to minimize its discrepancy from the collection of model outputs.
This perspective motivates performing the aggregation in the trajectory space.
Since the map from parameters to trajectories is nonlinear, the trajectory generated by the averaged parameters is not the average of the trajectories.
The former is not guaranteed to be representative in terms of the peak timing and magnitude that are of primary interest for decision-making.
Moreover, when different research groups report trajectories, their parameters are not necessarily comparable because the underlying model structures and parameterizations may differ.

A naive approach to summarizing curves is to compute the pointwise mean of the curves, which preserves the smoothness of the original curves.
However, this approach can be sensitive to outlier curves that take extreme values. 
Moreover, due to the misalignment of curves, peaks may be desynchronized across curves, causing the averaged peak intensity to be attenuated \citep{bigot2013frechet}. 
On the other hand, taking the pointwise median does not guarantee the smoothness of the resulting curve, since the median at each time point may come from different curves.
In Figure \ref{fig:trajectories_I_naive_mean_median}, the simple pointwise mean and median are depicted by the green and orange lines, respectively.
As shown, both naive summarized curves exhibit attenuated and multiple peaks.
Specifically, the pointwise mean displays a peak driven by the rightmost outlier trajectory, while the median lacks smoothness.
Crucially, neither the pointwise mean nor the median yields a curve that satisfies the underlying differential equations that describe the dynamics of infectious compartment.
Consequently, the resulting curve lacks mechanistic interpretability.

To address this issue, we formulate the summarization in a functional space of epidemic curves, while accounting for both temporal misalignment and the underlying disease dynamics.
\cite{bigot2013frechet} proposed an approach based on Fr\'echet means to summarize a collection of temporally misaligned curves by jointly estimating deformation parameters and a representative shape.
Inspired by this work, we propose integrating solutions to SEIR-type compartment models via a power Fr\'echet mean defined on a functional space \citep{schotz2022strong}.
Here, the power Fr\'echet mean is a generalization of the standard Fr\'echet mean, that includes Fr\'echet median and continuously intermediate barycenters.
In particular, we focus on the exposed and infectious compartments and define the target estimator as the solution to the power Fr\'echet mean-type optimization problem subject to the differential equations constraint that govern their dynamics.
We also incorporate a temporal functional registration to account for variability in the timing of epidemic peaks across trajectories by introducing temporal shift parameters.
We develop an estimation algorithm based on techniques from functional data analysis \citep{ramsay2005functional,hsing2015theoretical,wang2016functional}.
Based on our formulation and algorithm, we can jointly estimate the rate at which exposed individuals become infectious and the removal rate of infectious individuals.
From the estimated exposed and infectious trajectories, the susceptible and removed compartments can also be reconstructed.
Furthermore, we demonstrate the utility of our approach through numerical studies based on estimated parameter values from early epidemiological studies on the COVID-19 pandemic.

The remainder of this paper is organized as follows:
Section~\ref{sec:relatedworks} introduces the formulation of compartmental models and describes related works on ensemble approaches for epidemic curves and statistics on metric space.
Section~\ref{sec:cpfm-definition} introduces formulations of constrained power Fr\'echet mean and establishes several basic properties.
Section~\ref{sec:repcurve} introduces the formulation of our proposed summarization approach by defining a functional space for solutions from compartmental models and a consensus curve based on the constrained power Fr\'echet mean.
Section~\ref{sec:algorithm} develops an implementable algorithm to obtain the consensus curve, recover a full trajectory, and estimate epidemiological parameters based on basis function expansion.
Section~\ref{sec:simulation} illustrates the proposed methods via a simulation study using parameters of the COVID-19 epidemic reported during an early phase.
Section~\ref{sec:application} applies our approach to obtain consensus epidemic trajectories from literature-derived epidemiological parameter sets from multiple research groups.
Section~\ref{sec:discussion} discusses the advantages and limitations of our approach.

\section{Preliminaries and Related Works}
\label{sec:relatedworks}

\subsection{SEIR-type Compartmental Models}

The SIR model was originally proposed by \cite{kermack1927contribution} and has been widely used as a fundamental model for describing infectious disease dynamics.
However, for respiratory infectious diseases typically characterized by a latent incubation period, compartmental models that explicitly include an exposed population are often preferred to capture the delay between infection and infectiousness.
In this study, we focus on SEIR-type compartmental models, including such an exposed compartment.

We introduce notations.
Let $N \in \mathbb{N}$ denote the total population.
Let $S, E, I, R:[0, T] \to [0, N]$ ($T > 0$) be continuously differentiable functions representing the population sizes of the susceptible, exposed, infectious, and removed compartments, respectively.
The SEIR model is a standard formulation for models consisting of these four compartments.
Specifically, the SEIR model is defined as the following system of ordinary differential equations:
\begin{align}
    \begin{cases}
        \displaystyle \frac{dS}{dt} = -\beta\, \frac{S(t)I(t)}{N}
        ,
        \\[10pt]
        \displaystyle \frac{dE}{dt} = \beta\, \frac{S(t)I(t)}{N} - \sigma\, E(t)
        ,
        \\[10pt]
        \displaystyle \frac{dI}{dt} = \sigma\, E(t) - \gamma\, I(t)
        ,
        \\[10pt]
        \displaystyle \frac{dR}{dt} = \gamma\, I(t)
        ,
    \end{cases}
    \label{eq:SEIR}
\end{align}
where $\beta > 0$ is the disease transmission rate, $\sigma > 0$ is the rate at which exposed individuals become infectious, and $\gamma > 0$ is the removal rate.
The initial conditions are given by $(S(0),E(0),I(0),R(0))=(N-E_0-I_0,E_0,I_0,0)$, with $N-E_0-I_0>0$, $E_0 \geq 0$, $I_0 \geq 0$, and either $E_0 > 0$ or $I_0 > 0$.

Various extensions of the SEIR model have been proposed to incorporate more complex epidemiological features, such as the SEIUR model where unreported cases are considered, the SEIQR model that incorporates a quarantined compartment, or the SEIRD model that distinguishes between recovered and deceased individuals \citep{griette2021can,gerberry2009seiqr,weitz2015modeling}.
The schematic diagrams of these models are provided in Figure \ref{fig:diagram_compartmental_models}.
In any such model, the dynamics of the infectious compartment $I$ are commonly governed by a differential equation where the rate of increase is proportional to $E$ and the rate of decrease is proportional to $I$ itself.
Furthermore, it is assumed that the total population remains constant across all models.
Specifically, when considering the four compartments $S, E, I,$ and $R$, we impose the constraint that $S(t) + E(t) + I(t) + R(t) = N$ for all $t \in [0, T]$.

\begin{figure}[htbp]
    \centering
    \tikzset{
        node distance=1.2cm and 1.0cm,
        box/.style={rectangle, draw, minimum width=0.8cm, minimum height=0.8cm, thick},
        arrow/.style={-Stealth, thick},
        frame/.style={draw=black, thin, inner sep=12pt}
    }

    \begin{subfigure}[b]{0.45\textwidth}
        \centering
        \begin{tikzpicture}[baseline=(current bounding box.center)]
            \node[box] (S) {$S$};
            \node[box, right=of S] (E) {$E$};
            \node[box, right=of E] (I) {$I$};
            \node[box, right=of I] (R) {$R$};

            \draw[arrow] (S) -- (E);
            \draw[arrow] (E) -- node[above] {$\sigma$} (I);
            \draw[arrow] (I) -- node[above] {$\gamma$} (R);
            
            \node[frame, fit=(S) (R)] (F1) {};
            \path (F1.south) ++(0,-0.8cm);
            \path (F1.north) ++(0,0.8cm);
        \end{tikzpicture}
        \caption{SEIR model}
        \label{fig:model_seir}
    \end{subfigure}
    \hfill
    \begin{subfigure}[b]{0.45\textwidth}
        \centering
        \begin{tikzpicture}[baseline=(current bounding box.center)]
            \node[box] (S) {$S$};
            \node[box, right=of S] (E) {$E$};
            \node[box, right=of E] (I) {$I$};
            \node[box, right=of I] (R) {$R$};
            \node[box, below=0.8cm of I] (U) {$U$};

            \draw[arrow] (S) -- (E);
            \draw[arrow] (E) -- node[above] {$f \sigma$} (I);
            \draw[arrow] (I) -- node[above] {$\gamma$} (R);
            \draw[arrow] (E) -- node[left] {$(1-f)\sigma$} (U);
            \draw[arrow] (U) -- node[right] {$\gamma$} (R);
            
            \node[frame, fit=(S) (R) (U)] {};
        \end{tikzpicture}
        \caption{SEIUR model}
        \label{fig:model_seiur}
    \end{subfigure}

    \vspace{1cm}

    \begin{subfigure}[b]{0.45\textwidth}
        \centering
        \begin{tikzpicture}[baseline=(current bounding box.center)]
            \node[box] (S) {$S$};
            \node[box, right=of S] (E) {$E$};
            \node[box, right=of E] (I) {$I$};
            \node[box, right=of I] (R) {$R$};
            \node[box, below=0.8cm of I] (Q) {$Q$};

            \draw[arrow] (S) -- (E);
            \draw[arrow] (E) -- node[above] {$\sigma$} (I);
            \draw[arrow] (I) -- node[above] {$\gamma$} (R);
            \draw[arrow] (I) -- node[left] {$\nu$} (Q);
            \draw[arrow] (Q) -| node[pos=0.7, right] {$\xi$} (R);
            
            \node[frame, fit=(S) (R) (Q)] {};
        \end{tikzpicture}
        \caption{SEIQR model}
        \label{fig:model_seiqr}
    \end{subfigure}
    \hfill
    \begin{subfigure}[b]{0.45\textwidth}
        \centering
        \begin{tikzpicture}[baseline=(current bounding box.center)]
            \node[box] (S) {$S$};
            \node[box, right=of S] (E) {$E$};
            \node[box, right=of E] (I) {$I$};
            \node[box, right=0.8cm of I] (R) {$R$};
            \node[box, below=0.8cm of R] (D) {$D$};

            \draw[arrow] (S) -- (E);
            \draw[arrow] (E) -- node[above] {$\sigma$} (I);
            \draw[arrow] (I) -- node[above] {$f \gamma$} (R);
            \draw[arrow] (I) -- node[below left] {$(1-f) \gamma$} (D);
            
            \node[frame, fit=(S) (R) (D)] {};
        \end{tikzpicture}
        \caption{SEIRD model}
        \label{fig:model_seird}
    \end{subfigure}

    \caption{Schematic diagrams of the SEIR model and its extensions.
    (a) The standard SEIR model.
    (b) The SEIUR model, which incorporates the unreported compartment $U$ and a proportion $0 \leq f \leq 1$.
    (c) The SEIQR model, which includes the quarantined compartment $Q$ and the transition rates $0 < \nu$ and $0 < \xi$.
    (d) The SEIRD model, which considers the deceased compartment $D$ and a proportion $0 \leq f \leq 1$.
    }
    \label{fig:diagram_compartmental_models}
\end{figure}

\subsection{Model Aggregation Methods for Epidemic Curves}

The challenge of aggregating multiple epidemic trajectories has led to the development of various ensemble techniques.
A common strategy for integrating epidemiological models is to form a weighted ensemble of predictive distributions or trajectories.
Simple unweighted methods, such as taking the pointwise mean or median of submitted forecasts, have been used in influenza and COVID-19 forecasting \citep{mcgowan2019collaborative, ray2023comparing, fox2024optimizing}.
More sophisticated approaches assign weights based on historical performance (performance-based weights) or predictive density stacking \citep{ray2018prediction, reich2019accuracy}.
Recent studies have also explored adaptive weighting schemes where weights are updated over time as an epidemic progresses \citep{chowell2020real, mcandrew2021adaptively}.
For instance, \cite{taylor2022interval} and \cite{taylor2023combining} evaluated various combination methods for COVID-19 mortality forecasts, including trimmed means and inverse-score weighting, finding that weighted ensembles often outperform individual models.
Bayesian approaches provide a formal probabilistic method for model integration.
Bayesian Model Averaging and its variants have been applied to dengue and other infectious disease forecasts to reconcile differences between competing models \citep{yamana2016superensemble, yamana2017individual}.
While these methods are effective for point and interval predictions, they do not explicitly preserve the mathematical structure of the compartmental dynamics in the aggregated results.

\subsection{Statistics on Metric Space and Functional Data Analysis}

The field of statistics that deals with statistical methods and theory for objects taking values in a metric space is referred to as metric statistics \citep{muller2016peter,dubey2024metric}.
We introduce the power Fr\'echet mean \citep{schotz2022strong}.
For a given metric space $(\Omega, d_{\Omega})$, the power Fr\'echet mean of random objects $u$ in $\Omega$ is defined as $\bar{u}:=\argmin_{v \in \Omega} \mathbb{E}[d_{\Omega}(v, u)^{q}]$ with $q > 0$.
$\bar{u}$ is consistent with the standard Fr\'echet mean when $q=2$ and the Fr\'echet median when $q=1$.
Furthermore, we can interpret it as the mode and circumcenter when $q \to 0$ and $q \to \infty$, respectively.
In this study, we consider an empirical version of the power Fr\'echet mean.
Specifically, for random objects $u_1,\ldots , u_{J} \in \Omega$, an empirical power Fr\'echet mean is defined as 
\begin{align*}
    \hat{u}:=\argmin_{v \in \Omega} \frac{1}{J}\sum_{j=1}^{J} d_{\Omega}(v, u_{j})^{q}
\end{align*}
We call this empirical version simply the power Fr\'echet mean in the following parts of this paper.

Metric statistics encompasses general metric spaces beyond Euclidean spaces, providing a systematic approach for the statistical analysis of complex data such as compositional data or functional data.
In the context of functional data, an object is treated as a function when multiple observations are obtained over a continuous window, such as the passage of time.
By defining a metric space for functions, the analytical methods and theoretical results in metric statistics can be applied.
Specifically, functional data analysis provides a practical implementation for handling such functional data \citep{ramsay2005functional,hsing2015theoretical,wang2016functional}.

Since epidemic curves are inherently functional objects, functional data analysis offers a natural approach for handling them.
For example, \cite{boschi2021functional} used functional data analysis to extract patterns from COVID-19 mortality curves across different regions.
Building upon this functional perspective, we propose a method that integrates multiple solutions to compartmental models by solving a constrained optimization problem within a functional space.
Unlike existing ensemble approaches above, our approach incorporates a governing differential equation as constraints, thereby ensuring that the resulting consensus curves preserve a part of the mathematical structure of the compartmental models.
This also allows for the simultaneous estimation of the underlying epidemiological parameters alongside the integrated trajectories.

\section{Constrained Power Fr\'echet Mean}
\label{sec:cpfm-definition}

In this section, we introduce a constrained power Fr\'echet mean on a metric space.

\begin{definition}[Constrained power Fr\'echet mean]
\label{def:cpfm_metric}
Let $(\Omega,d_{\Omega})$ be a metric space, let $u_1,\ldots,u_J \in \Omega$, let $q>0$, and let $C \subset \Omega$ be a non-empty set.
We define a constrained power Fr\'echet mean $\hat{u}_{q}$ on the constraint set $C$ as
\begin{align}
    \hat{u}_{q}
    \in
    \argmin_{u \in C}
    \frac{1}{J}\sum_{j=1}^{J} d_{\Omega}(u,u_j)^{q}
    .
    \label{eq:const-power-frechet-q-mean}
\end{align}
\end{definition}

In our later application, the feasible set itself depends on additional parameters to be optimized.
This motivates the following extension of Definition \ref{def:cpfm_metric}.

\begin{definition}[An extended constrained power Fr\'echet mean]
\label{def:extended_cpfm_metric}
Let $(\Omega,d_{\Omega})$ be a metric space, let $\Theta \subset \mathbb{R}^{p}$ be a $p$-dimensional parameter space, let $u_1,\ldots,u_J : \Theta \to \Omega$, and let $q>0$.
For each $\theta \in \Theta$, let $\mathcal{C}(\theta) \subset \Omega$ be a nonempty constraint set.
We define an extended constrained power Fr\'echet mean as any minimizer $\hat{u}_q$ of 
\begin{align}
    (\hat{u}_q,\hat\theta)
    \in
    \argmin_{(u,\theta)\in \operatorname{Graph}(\mathcal{C})}
    \frac{1}{J}\sum_{j=1}^{J} d_{\Omega}(u,u_j(\theta))^q
    ,
    \label{eq:extended-const-power-frechet-q-mean}
\end{align}
where
$
    \operatorname{Graph}(\mathcal{C})
    :=
    \left\{
        (u,\theta)\in \Omega\times\Theta
        \mid
        u\in \mathcal{C}(\theta)
    \right\}
$
is the graph of the set-valued map $\mathcal{C}:\Theta \rightrightarrows \Omega$.
We note that Definition~\ref{def:cpfm_metric} is recovered when each $u_j$ is a constant map.
\end{definition}

For notational convenience, we write the objective functions in Definitions~\ref{def:cpfm_metric} and \ref{def:extended_cpfm_metric} as
\begin{align*}
    F_{q}(u)
    :=
    \frac{1}{J}\sum_{j=1}^{J} d_{\Omega}(u,u_j)^{q}
    ,
    \quad
    \text{and}
    \quad
    F_{q}(u,\theta)
    :=
    \frac{1}{J}\sum_{j=1}^{J} d_{\Omega}(u,u_j(\theta))^{q}
    ,
\end{align*}
respectively.
We then prove some basic properties such as existence, uniqueness under $q>1$, and robustness under $q=1$ of the constrained power Fr\'echet mean that hold in an arbitrary Hilbert space.
These properties are used in Sections~\ref{sec:repcurve} and \ref{sec:algorithm} to justify the proposed estimator and its algorithm.
See Appendix~\ref{sec:proof} for the proofs.

We first establish the existence of the constrained power Fr\'echet mean.
The three statements below correspond to a compact constraint set, a closed convex constraint set in a Hilbert space, and the extended problem of Definition~\ref{def:extended_cpfm_metric}, respectively.

\begin{proposition}[Existence]
\label{prop:existence_general}
Let $(\Omega,d_{\Omega})$ be a metric space and let $q>0$.
\begin{enumerate}
    \item[(i)] Let $u_1,\ldots,u_J\in\Omega$. If $C\subset\Omega$ is a non-empty and compact set, then a minimizer $\hat{u}_q$ of \eqref{eq:const-power-frechet-q-mean} exists.
    \item[(ii)] Suppose that $\Omega$ is a Hilbert space with $d_{\Omega}(u,v)=\|u-v\|_{\Omega}$.
    Let $u_1,\ldots,u_J\in\Omega$.
    If $C\subset\Omega$ is a non-empty, closed, and convex set, then a minimizer $\hat{u}_q$ of \eqref{eq:const-power-frechet-q-mean} exists.
    \item[(iii)] Suppose that $\Omega$ is a Hilbert space with $d_{\Omega}(u,v)=\|u-v\|_{\Omega}$.
    Let $\Theta\subset\mathbb R^{p}$ be a compact and non-empty set, let
    $u_1,\ldots,u_J:\Theta\to\Omega$ be continuous maps, and let
    $\mathcal{C}:\Theta\rightrightarrows\Omega$ have non-empty values.
    If $\operatorname{Graph}(\mathcal{C})$ is weakly sequentially closed as defined in
    Definition~\ref{def:wsc_graph}, then a minimizer $(\hat{u}_q,\hat\theta)$ of
    \eqref{eq:extended-const-power-frechet-q-mean} exists.
\end{enumerate}
\end{proposition}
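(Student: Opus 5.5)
The three parts rest on the same direct-method argument: a lower semicontinuous objective attains its infimum on a (weakly) compact set. Each part only needs the right topology.

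\emph{Part (i).} The map $u\mapsto d_{\Omega}(u,u_j)$ is $1$-Lipschitz, hence continuous. Since $t\mapsto t^{q}$ is continuous on $[0,\infty)$, $F_q$ is a finite sum of continuous functions and is continuous on $\Omega$. A continuous real function attains its minimum on a non-empty compact set, which gives (i) by Weierstrass. Sequential compactness is the same as compactness in a metric space, so no subtlety arises.

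\emph{Part (ii).} Here $C$ need not be compact, so I pass to the weak topology. Let $m=\inf_C F_q\ge 0$ and take a minimizing sequence $(v_n)\subset C$.

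First, coercivity. For every $n$,
\[
\|v_n-u_1\|^{q}\le J\,F_q(v_n),
\]
so $(v_n)$ is norm-bounded. Bounded sequences in a Hilbert space have weakly convergent subsequences, so $v_{n_k}\rightharpoonup \hat u$.

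Next, the limit is feasible. $C$ is closed and convex, so by Mazur's lemma it is weakly closed, and therefore $\hat u\in C$.

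Finally, lower semicontinuity. The norm is weakly lower semicontinuous, so
\[
\|\hat u-u_j\|\le\liminf_k\|v_{n_k}-u_j\|.
\]
Because $t\mapsto t^q$ is nondecreasing and continuous, the same inequality holds after raising to the power $q$. Summing over $j$ gives $F_q(\hat u)\le\liminf_k F_q(v_{n_k})=m$, so $\hat u$ is a minimizer. Note that convexity is used only to obtain weak closedness, not convexity of $F_q$; this matters because $F_q$ is nonconvex when $q<1$.

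\emph{Part (iii).} I run the same argument on $\Omega\times\Theta$ with the product of the weak and Euclidean topologies, which is the topology in which Definition~\ref{def:wsc_graph} presumably phrases weak sequential closedness. Take a minimizing sequence $(v_n,\theta_n)\in\operatorname{Graph}(\mathcal C)$.

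First, boundedness. $\Theta$ is compact and each $u_j$ is continuous, so $M:=\max_j\sup_{\Theta}\|u_j(\theta)\|<\infty$. Then
\[
\|v_n\|\le \bigl(J F_q(v_n,\theta_n)\bigr)^{1/q}+M,
\]
which is bounded along the minimizing sequence.

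Next, extract a subsequence with $\theta_{n_k}\to\hat\theta\in\Theta$ (compactness of $\Theta$) and $v_{n_k}\rightharpoonup\hat u$ (boundedness in Hilbert space). Weak sequential closedness of the graph gives $(\hat u,\hat\theta)\in\operatorname{Graph}(\mathcal C)$.

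Lower semicontinuity is the step I expect to need the most care, since the reference points $u_j(\theta_{n_k})$ move with $k$. Continuity of $u_j$ gives $u_j(\theta_{n_k})\to u_j(\hat\theta)$ in norm. Combined with $v_{n_k}\rightharpoonup\hat u$, this yields $v_{n_k}-u_j(\theta_{n_k})\rightharpoonup \hat u-u_j(\hat\theta)$: pairing with any fixed $w$, the weak term converges and the strong perturbation vanishes. Weak lower semicontinuity of the norm and monotonicity of $t\mapsto t^q$ then give $F_q(\hat u,\hat\theta)\le\liminf_k F_q(v_{n_k},\theta_{n_k})=\inf F_q$. Hence $(\hat u,\hat\theta)$ is a minimizer.
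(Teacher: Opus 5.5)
Your proposal is correct and follows the paper's proof essentially step for step. Part (i) is continuity plus compactness. For (ii) and (iii) you use the same ingredients: a minimizing sequence bounded through the $u_1$ term (using $\sup_{\Theta}\|u_1(\theta)\|<\infty$ in (iii)), weak subsequential extraction together with compactness of $\Theta$, weak (sequential) closedness of $C$ or of the graph, and weak lower semicontinuity of the norm combined with monotonicity of $t\mapsto t^q$. The moving references $u_j(\theta_{n_k})$ are handled via their strong convergence, exactly as in the paper.
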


We next consider the uniqueness of the constrained power Fr\'echet mean.
The following proposition shows that the power $q>1$ yields a unique minimizer.

\begin{proposition}[Uniqueness]
\label{prop:uniqueness}
Suppose that $\Omega$ is a Hilbert space with $d_{\Omega}(u,v)=\|u-v\|_{\Omega}$.
If $C\subset\Omega$ is a non-empty, closed, and convex set and $q>1$, then the minimizer of \eqref{eq:const-power-frechet-q-mean} is unique.
\end{proposition}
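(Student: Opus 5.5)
Existence is already guaranteed by Proposition~\ref{prop:existence_general}(ii), so only uniqueness needs an argument. The plan is to show that $F_q$ is strictly convex on $\Omega$ when $q>1$. Since $C$ is convex, two distinct minimizers would then have a midpoint in $C$ with strictly smaller objective value, which is a contradiction.

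\textbf{Step 1: strict convexity of each summand.} For fixed $u_j$, write $g_j(u)=\phi(\|u-u_j\|_{\Omega})$ with $\phi(s)=s^q$. The function $\phi$ is convex and strictly increasing on $[0,\infty)$, and strictly convex there when $q>1$. Take $v\neq w$ in $\Omega$ and $\lambda\in(0,1)$, and set $a=v-u_j$, $b=w-u_j$. The triangle inequality gives $\|\lambda a+(1-\lambda)b\|\le \lambda\|a\|+(1-\lambda)\|b\|$. Applying monotonicity and then convexity of $\phi$ yields $g_j(\lambda v+(1-\lambda)w)\le \lambda g_j(v)+(1-\lambda)g_j(w)$.

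\textbf{Step 2: the equality case.} This is the main obstacle, because the norm itself is not strictly convex along rays. Suppose equality holds throughout the chain in Step 1.
\begin{itemize}
\item[(a)] Strict convexity of $\phi$ forces $\|a\|=\|b\|$.
\item[(b)] Strict monotonicity of $\phi$ forces equality in the triangle inequality.
\end{itemize}
In a Hilbert space, equality in the triangle inequality means $a$ and $b$ are nonnegative multiples of one other; this follows from the equality case of Cauchy--Schwarz after expanding $\|\lambda a+(1-\lambda)b\|^2$. Combined with $\|a\|=\|b\|$, this gives $a=b$, i.e.\ $v=w$, a contradiction. The degenerate case where one of $a,b$ is zero is handled the same way: then $\|a\|=\|b\|$ forces both to be zero. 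Hence each $g_j$ is strictly convex, and so is the average $F_q$.

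\textbf{Step 3: conclusion.} Suppose $\hat u\neq\tilde u$ are both minimizers over $C$. By convexity, $m=(\hat u+\tilde u)/2\in C$. Strict convexity gives $F_q(m)<\tfrac12 F_q(\hat u)+\tfrac12F_q(\tilde u)=\min_C F_q$, which contradicts minimality. Closedness of $C$ is used only through Proposition~\ref{prop:existence_general}(ii) to ensure the minimizer exists. An alternative route for Step 2 would use the parallelogram law (uniform convexity) directly, but the equality-case argument above is cleaner.
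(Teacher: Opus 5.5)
Your proof is correct and follows essentially the same route as the paper: take the midpoint of two minimizers, pass it through the triangle inequality and the strict convexity of $t\mapsto t^q$ to force $\|\hat u-u_j\|=\|\tilde u-u_j\|$ for every $j$, and then use Hilbert-space geometry to conclude that the two minimizers coincide. The differences are cosmetic. You first prove strict convexity of each summand for general $\lambda\in(0,1)$ and close with the equality case of Cauchy--Schwarz, whereas the paper works directly at $\lambda=1/2$ and closes with the parallelogram identity (the alternative you mention), which avoids your separate check of the degenerate case.
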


\begin{remark}
We note that $\mathcal{F}$ in Section~\ref{sec:joint_frechet_q_mean} is not convex because of the bilinear term $\sigma E$ in the differential equation constraint.
Proposition~\ref{prop:uniqueness} therefore applies to the inner problem for a fixed $(\sigma,\gamma,\boldsymbol\delta)$, which is exactly the convex program solved in Appendix~\ref{sec:simplified_optimization}, and not to \eqref{eq:joint_shift_frechet} as a whole.
\end{remark}

The following proposition shows that the minimizer under $q=1$ remains bounded even when one of the objects diverges.

\begin{proposition}[Robustness to an outlier object under $q=1$]
\label{prop:robustness}
Suppose that $\Omega$ is a Hilbert space with $d_{\Omega}(u,v)=\|u-v\|_{\Omega}$.
Suppose that $C\subset\Omega$ is a non-empty, closed, and convex set.
Fix $u_1,\ldots,u_{J-1} \in \Omega$ with $J\geq 3$ and take a sequence $(u_J^{(n)}|~n\in\mathbb N)\subset\Omega$.
For each $n$, define
\begin{align*}
    \hat{u}_q^{(n)}
    \in
    \argmin_{u \in C}
    \frac{1}{J}\left(d_{\Omega}(u,u_J^{(n)})^{q} + \sum_{j=1}^{J-1} d_{\Omega}(u,u_j)^{q}\right)
    .
\end{align*}
Under these settings, if $q=1$, then it holds that $\sup_{n\in\mathbb N}\|\hat{u}_1^{(n)}\|_{\Omega}<\infty$ even when $\|u_J^{(n)}\|_{\Omega} \to \infty$.
\end{proposition}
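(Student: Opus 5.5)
Fix an arbitrary reference point $c_0\in C$, which exists because $C$ is non-empty. For $q=1$ the plan is to compare the objective value at $\hat{u}_1^{(n)}$ with the value at $c_0$ and use the triangle inequality to cancel the divergent term $\|u-u_J^{(n)}\|_\Omega$. Minimizers exist for every $n$ by Proposition~\ref{prop:existence_general}(ii), so the $\hat{u}_1^{(n)}$ are well defined. Write $G_n(u):=\|u-u_J^{(n)}\|_\Omega+\sum_{j=1}^{J-1}\|u-u_j\|_\Omega$, which is $J$ times the objective. Optimality gives
\begin{align*}
G_n(\hat{u}_1^{(n)})\le G_n(c_0).
\end{align*}

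The key step is to apply the reverse triangle inequality to the outlier term:
\begin{align*}
\|c_0-u_J^{(n)}\|_\Omega-\|\hat{u}_1^{(n)}-u_J^{(n)}\|_\Omega\le \|\hat{u}_1^{(n)}-c_0\|_\Omega .
\end{align*}
Substituting this into the optimality inequality, the $u_J^{(n)}$ terms disappear and we obtain
\begin{align*}
\sum_{j=1}^{J-1}\|\hat{u}_1^{(n)}-u_j\|_\Omega\le \|\hat{u}_1^{(n)}-c_0\|_\Omega+\sum_{j=1}^{J-1}\|c_0-u_j\|_\Omega .
\end{align*}
Set $x_n:=\|\hat{u}_1^{(n)}\|_\Omega$. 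Each term on the left is at least $x_n-\|u_j\|_\Omega$, and the first term on the right is at most $x_n+\|c_0\|_\Omega$. This yields
\begin{align*}
(J-1)x_n-\sum_{j<J}\|u_j\|_\Omega\le x_n+\|c_0\|_\Omega+\sum_{j<J}\|c_0-u_j\|_\Omega,
\end{align*}
that is,
\begin{align*}
(J-2)\,x_n\le \|c_0\|_\Omega+\sum_{j<J}\bigl(\|u_j\|_\Omega+\|c_0-u_j\|_\Omega\bigr).
\end{align*}
The right-hand side does not depend on $n$.

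This is where $J\ge3$ enters: it makes $J-2\ge1>0$, so we can divide and obtain $\sup_n x_n<\infty$. This is the only delicate point. When $J=2$ the coefficient vanishes, and indeed a median of two points can follow the outlier. Neither convexity of $C$ nor the Hilbert structure is needed beyond existence and the triangle inequality, so there is no real obstacle apart from keeping track of the constants.
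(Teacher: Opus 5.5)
Your proposal is correct and follows essentially the same argument as the paper. The paper also compares the objective at the minimizer with a fixed feasible point $u_0\in C$, cancels the outlier term via the reverse triangle inequality, and divides by $J-2\geq 1$ to obtain an $n$-independent bound on $\|\hat{u}_1^{(n)}\|_{\Omega}$.
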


\section{Formulation of Proposed Curve Summarization}
\label{sec:repcurve}

In this section, we define a functional space for $(E, I)$ and define a consensus curve as a constrained power Fr\'echet mean.

\subsection{Functional Space}
\label{sec:deffuncspace}

We assume that each compartment function $S$, $E$, $I$, and $R$ belongs to the Sobolev space $H^1([0,T])$.
Here, the Sobolev space $H^1([0, T])$ is the space of functions $u \in L^2([0, T])$ such that their weak derivatives $u^{\prime}$ exist and also belong to $L^2([0, T])$.
From Theorem 8.2 of \cite{brezis2011functional}, there exists a function $\tilde{u} \in C([0,T])$ such that $u = \tilde{u}$ a.e. on $[0,T]$ and $\tilde{u}(t_2) - \tilde{u}(t_1) = \int_{t_1}^{t_2} u^{\prime}(t) dt$ for all $t_1, t_2 \in [0, T]$ for each $u \in H^1([0, T])$.
Here, $C([0, T])$ is a function space of continuous functions defined on $[0,T]$.
Consequently, we identify the compartment functions with their continuous representatives and consider them to be defined for all $t \in [0, T]$.

We focus on the exposed and infectious compartments, $E$ and $I$, respectively, and the vector function: 
\begin{align*}
    y
    :=
    (E,I)^{\top} \in \mathcal{Y}
    :=
    \left\{u:[0, T] \rightarrow \mathbb{R}^2 \mid u_1, u_2 \in H^1([0, T])\right\}
    .
\end{align*}
Given $E$, $I$, and a fixed parameter $\gamma$, the remaining compartment functions $S$ and $R$ are uniquely determined by defining the removal process as ${dR}/{dt} = \gamma I$, and the conservation law $S+E+I+R=N$.
Therefore, it is sufficient to focus on the pair $(E(t), I(t))$ when constructing consensus trajectories or defining distances between them.
It should be noted that the trajectories $(S, E, I, R)$ constructed in this manner do not necessarily satisfy the transmission dynamics specified by the SEIR model, where additional differential equations are imposed on the susceptible and exposed populations.

Let $y_1(t)=(E_1(t),I_1(t))^{\top}$ and $y_2(t)=(E_2(t),I_2(t))^{\top}$ be two trajectories in $\mathcal Y$, where $E_i(t)$ and $I_i(t)$ denote the exposed and infectious compartments for the sample $i=1,2$, respectively, and $E_i^{\prime}(t)$, $I_i^{\prime}(t)$ are their time derivatives.
Since $H^{1}([0, T])$ is a Hilbert space (Proposition 8.1 in \cite{brezis2011functional}), the product space $\mathcal{Y}$ equipped with the following inner product is also a Hilbert space (Section II.1, Example 5 of \cite{reed2012methods}):
\begin{align*}
    \langle y_1, y_2 \rangle_{\mathcal Y}
    :=
    \int_{0}^{T} \Bigl( E_1 E_2 + \rho E_1^{\prime} E_2^{\prime} + I_1 I_2 + \rho I_1^{\prime} I_2^{\prime} \Bigr)\,dt
    ,
\end{align*}
where $\rho > 0$ is a scaling constant.
The induced norm is $\|y\|_{\mathcal Y} := \sqrt{\langle y, y \rangle_{\mathcal Y}}$.
We measure the dissimilarity between trajectories using the metric derived from this norm:
\begin{align*}
    d(y_1,y_2)&:=\|y_1-y_2\|_{\mathcal Y} \\
    &=
    \biggl(
    \int_{0}^{T}\Bigl[|E_1-E_2|^{2}+ \rho |E_1^{\prime}-E_2^{\prime}|^{2}+|I_1-I_2|^{2}+ \rho |I_1^{\prime}-I_2^{\prime}|^{2}\Bigr]\,dt\biggr)^{1/2}
    .
\end{align*}
We formulate all subsequent optimization problems in this Hilbert space $(\mathcal Y, \langle \cdot, \cdot \rangle_{\mathcal Y})$.

\subsection{Shift Parameter for Registration}
\label{sec:shift_registration}

We introduce a shift parameter $\delta\in\mathbb R$ to align curves, and define the shift operator
\begin{align*}
    (\mathcal T_{\delta} y_j)(t) := y_j^{\mathrm{ext}}(t+\delta)
    ,
    \quad\text{for}~
    t\in[0,T]
    ,
\end{align*}
where $y_j^{\mathrm{ext}}(t)$ is the constant extension for each trajectory $y_j=(E_j,I_j)^\top\in\mathcal Y$:
\begin{align*}
    y_j^{\mathrm{ext}}(t):=
    \begin{cases}
        y_j(0), & t<0
        ,\\
        y_j(t), & 0\le t\le T
        ,\\
        y_j(T), & t>T
        .
    \end{cases}
\end{align*}
We note that $\mathcal T_{\delta}y_j\in \mathcal Y$ for any $\delta\in\mathbb R$.

\subsection{Constrained Power Fr\'echet Mean with Functional Registration}
\label{sec:joint_frechet_q_mean}

Assume that we have $J$ trajectories $y_1, \ldots, y_J \in \mathcal Y$.
Let $\boldsymbol{\delta}:=(\delta_{1},\ldots,\delta_{J}) \in \mathbb{R}^{J}$ be the vector of shift parameters for each trajectory and define
\begin{align*}
    \Delta
    :=
    \left\{
        \boldsymbol{\delta}
        =
        (\delta_1,\ldots,\delta_J)
        \in[-\delta_{\max},\delta_{\max}]^J
        \left|
        \sum_{j=1}^J\delta_j=0
        \right.
    \right\}
    .
\end{align*}
Define the feasible set
\begin{align*}
    \mathcal F
    :=
    \left\{
        \begin{aligned}
            &(y,\sigma,\gamma,\boldsymbol{\delta}) \\
            &\in \mathcal Y \times [\sigma_{\min},\sigma_{\max}] \times [\gamma_{\min},\gamma_{\max}] \times \Delta
        \end{aligned}
        ~\left|~
        \begin{array}{l}
        y = (E,~I)^\top,\\
        E\geq 0,\quad I\geq 0,\\
        N - E - I - \gamma\displaystyle\int_0^\cdot I(s)\,ds\geq 0
        ,\\
        I' = \sigma E - \gamma I
        ,
        \quad\text{on}~[0,T]
        \end{array}
        \right.
    \right\}.
\end{align*}
We then define the consensus curve as the solution to the following constrained optimization problem:
\begin{align}
    (\hat y_{q}, \hat{\sigma}, \hat{\gamma}, \hat{\boldsymbol{\delta}})
    &:=
    \argmin_{\substack{
    (y,\sigma,\gamma, \boldsymbol{\delta})\in \mathcal F
    }}
    \frac{1}{J}\sum_{j=1}^{J} d\left(y,\mathcal T_{\delta_j} y_j\right)^{q}
    \label{eq:joint_shift_frechet}
\end{align}
where $q > 0$, $0<\sigma_{\min}<\sigma_{\max}<\infty$, $0<\gamma_{\min}<\gamma_{\max}<\infty$.
The objective function is formulated to yield the power Fr\'echet mean.
The constraints are constructed based on the dynamical constraints regarding the infectious population $I(t)$ and the constraints on the total population.
Additionally, the constraint of the shift parameter $\sum_{j=1}^J \delta_j = 0$ is imposed to ensure identifiability against global translation in $t$, and each $\delta_j$ is restricted to a bounded interval $[-\delta_{\max},\delta_{\max}]$ to prevent the curves from shifting outside the observation window, which may result in a pathological solution.

By estimating the representative trajectory while simultaneously aligning the curves, this approach avoids the attenuation of peaks caused by the superposition of misaligned curves, which is a common issue in simple methods such as the pointwise mean or median.
Furthermore, by jointly estimating the trajectories $E$, $I$, and the parameters $\sigma$ and $\gamma$ under these dynamical constraints, we can uniquely reconstruct the full epidemiological state $(S, E, I, R)$, as discussed in Section \ref{sec:deffuncspace}.
Additionally, under the assumption of the transmission dynamics equation ${dE}/{dt} = \beta {S I}/{N} - \sigma E$, the parameter $\beta$ can be estimated by fitting to the reconstructed trajectories $(S, E, I)$.

The constrained optimization problem \eqref{eq:joint_shift_frechet} is formulated as the extended constrained power Fr\'echet mean given in Definition \ref{def:extended_cpfm_metric}.
Therefore, we establish the existence of the minimizer of \eqref{eq:joint_shift_frechet} as a corollary of Proposition~\ref{prop:existence_general}(iii).

\begin{corollary}[Existence of the consensus curve]
\label{cor:existence_consensus_curve}
Let $q > 0$.
Then, we have $\mathcal F\neq\emptyset$ and there exists a minimizer $(\hat y_q,\hat\sigma,\hat\gamma,\hat{\boldsymbol\delta})\in\mathcal F$ of \eqref{eq:joint_shift_frechet}.
\end{corollary}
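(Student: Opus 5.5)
We will apply Proposition~\ref{prop:existence_general}(iii) with $\Omega=\mathcal Y$, parameter $\theta=(\sigma,\gamma,\boldsymbol\delta)$ and $\Theta=[\sigma_{\min},\sigma_{\max}]\times[\gamma_{\min},\gamma_{\max}]\times\Delta\subset\mathbb R^{J+2}$. Here $u_j(\theta)=\mathcal T_{\delta_j}y_j$ and $\mathcal C(\theta)$ is the set of $y=(E,I)^\top$ satisfying the constraints in $\mathcal F$ for that $\theta$. Then $\operatorname{Graph}(\mathcal C)$ coincides with $\mathcal F$, and \eqref{eq:joint_shift_frechet} is exactly \eqref{eq:extended-const-power-frechet-q-mean}. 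Four hypotheses must be checked: $\Theta$ is compact and non-empty; each $\mathcal C(\theta)$ is non-empty (this also gives $\mathcal F\neq\emptyset$); the maps $\theta\mapsto\mathcal T_{\delta_j}y_j$ are continuous into $\mathcal Y$; and $\mathcal F$ is weakly sequentially closed.

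The first two are immediate. $\Theta$ is a product of compact intervals and the closed set $\Delta$, which is bounded and contains $\mathbf 0$. For non-emptiness, take $y\equiv 0$. Then $E=I=0\ge 0$, the conservation inequality reads $N\ge 0$, and $I'=0=\sigma\cdot 0-\gamma\cdot 0$. So $(0,\theta)\in\mathcal F$ for every $\theta\in\Theta$.

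For continuity of the shifts, write $(\mathcal T_\delta y_j)(t)=y_j^{\mathrm{ext}}(t+\delta)$. The extension $y_j^{\mathrm{ext}}$ lies in $H^1_{\mathrm{loc}}(\mathbb R)$, and its derivative is $y_j'$ on $[0,T]$ and zero elsewhere. Continuity of translation in $L^2$ applies both to $y_j^{\mathrm{ext}}$ and to its derivative restricted to a bounded window. Together these give $\|\mathcal T_{\delta}y_j-\mathcal T_{\delta'}y_j\|_{\mathcal Y}\to 0$ as $\delta'\to\delta$, and hence continuity in $\theta$.

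The main obstacle is weak sequential closedness of $\mathcal F$ (Definition~\ref{def:wsc_graph}): if $(y_n,\theta_n)\in\mathcal F$, $y_n\rightharpoonup y$ weakly in $\mathcal Y$ and $\theta_n\to\theta$, we must show $(y,\theta)\in\mathcal F$. The plan uses the compact embedding $H^1([0,T])\hookrightarrow C([0,T])$ (Rellich--Kondrachov in one dimension, e.g.\ \cite{brezis2011functional}). This upgrades $E_n\to E$ and $I_n\to I$ to uniform convergence, while $I_n'\rightharpoonup I'$ weakly in $L^2$.
\begin{itemize}
\item The pointwise inequalities $E\ge0$ and $I\ge0$ pass to the uniform limit.
\item The integral inequality $N-E-I-\gamma\int_0^\cdot I\ge0$ also passes to the limit, since $\gamma_n\int_0^t I_n\to\gamma\int_0^tI$ uniformly in $t$.
\item For the bilinear constraint, $\sigma_nE_n-\gamma_nI_n\to\sigma E-\gamma I$ strongly in $L^2$, because it is a product of a convergent scalar and a uniformly convergent function. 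Since $I_n'=\sigma_nE_n-\gamma_nI_n$, uniqueness of weak limits in $L^2$ yields $I'=\sigma E-\gamma I$.
\end{itemize}
The bilinear term $\sigma E$ is exactly where convexity fails, and compactness of the embedding is what rescues closedness. Proposition~\ref{prop:existence_general}(iii) then delivers the minimizer.
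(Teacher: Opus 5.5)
Your proposal is correct and follows the paper's proof essentially step for step. You use the same application of Proposition~\ref{prop:existence_general}(iii) with $\Theta=[\sigma_{\min},\sigma_{\max}]\times[\gamma_{\min},\gamma_{\max}]\times\Delta$, the zero trajectory as a witness that every $\mathcal C(\theta)$ is non-empty, continuity of translation for $\theta\mapsto\mathcal T_{\delta_j}y_j$, and weak sequential closedness of $\mathcal F$.

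The one variation is in the closedness step. You invoke the compact embedding $H^1([0,T])\hookrightarrow C([0,T])$ to get uniform convergence. The paper uses only that $u\mapsto u(t)$ and $u\mapsto\int_0^t u$ are continuous linear functionals, and that $\sigma_nE_n-\gamma_nI_n\rightharpoonup\sigma E-\gamma I$ weakly in $L^2([0,T])$ because $\sigma_n,\gamma_n$ are convergent scalars. Your closing remark therefore overstates things: compactness is not needed here, and would become essential only if a constraint multiplied two unknown functions, such as $SI$.
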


Similarly, the robustness result of Proposition~\ref{prop:robustness} can be extended to the consensus curve.

\begin{corollary}[Robustness of the consensus curve to an outlier curve under $q=1$]
\label{cor:robustness_consensus_curve}
Fix $y_1,\ldots,y_{J-1} \in \mathcal Y$ with $J\geq 3$ and take a sequence $(y_J^{(n)}|~n\in\mathbb N)\subset\mathcal Y$.
For each $n$, let $(\hat y^{(n)},\hat\sigma^{(n)},\hat\gamma^{(n)},\hat{\boldsymbol\delta}^{(n)})$ be a minimizer of \eqref{eq:joint_shift_frechet} with $q=1$ for $y_1,\ldots,y_{J-1},y_J^{(n)}$.
Then, it holds that $\sup_{n\in\mathbb N}\|\hat y^{(n)}\|_{\mathcal Y}<\infty$ even when $\|y_J^{(n)}\|_{\mathcal Y} \to \infty$.
\end{corollary}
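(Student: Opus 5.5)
The plan is to argue directly from the minimality of the consensus curve rather than through Proposition~\ref{prop:robustness}. That proposition needs a convex constraint set, and $\mathcal F$ is not convex; moreover, the shifts $\boldsymbol\delta$ enter the objective. The key observation is that the zero trajectory is feasible for every choice of the other variables. Indeed, $E=I=0$ gives $E,I\ge 0$, $N-0-0-\gamma\int_0^\cdot 0\,ds=N\ge 0$, and $I'=0=\sigma\cdot 0-\gamma\cdot 0$. Hence $(0,\hat\sigma^{(n)},\hat\gamma^{(n)},\hat{\boldsymbol\delta}^{(n)})\in\mathcal F$ for every $n$.

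Fix $n$ and write $\hat\delta_j:=\hat\delta_j^{(n)}$ and $z_j:=\mathcal T_{\hat\delta_j}y_j$ for $j<J$, and $z_J:=\mathcal T_{\hat\delta_J}y_J^{(n)}$. Comparing the objective at the minimizer with the objective at the competitor above, which uses the \emph{same} shifts, gives
\begin{align*}
\|\hat y^{(n)}-z_J\|_{\mathcal Y}+\sum_{j=1}^{J-1}\|\hat y^{(n)}-z_j\|_{\mathcal Y}
\le \|z_J\|_{\mathcal Y}+\sum_{j=1}^{J-1}\|z_j\|_{\mathcal Y}.
\end{align*}
Keeping the shifts fixed is essential: the outlier term $\|z_J\|_{\mathcal Y}$ then appears identically on both sides. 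We apply the reverse triangle inequality as $\|\hat y^{(n)}-z_J\|\ge \|z_J\|-\|\hat y^{(n)}\|$ for the outlier and $\|\hat y^{(n)}-z_j\|\ge \|\hat y^{(n)}\|-\|z_j\|$ for $j<J$. The outlier norm then cancels and we obtain
\begin{align*}
(J-2)\,\|\hat y^{(n)}\|_{\mathcal Y}\le 2\sum_{j=1}^{J-1}\|\mathcal T_{\hat\delta_j}y_j\|_{\mathcal Y}.
\end{align*}
Since $J\ge 3$, we can divide by $J-2>0$.

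It remains to bound $\|\mathcal T_{\delta}y\|_{\mathcal Y}$ uniformly over $|\delta|\le\delta_{\max}$ for each fixed $y\in\mathcal Y$. This is the only step with genuine content, and we expect it to be the main obstacle. Take a component $u\in H^1([0,T])$ of $y$ and its constant extension $u^{\mathrm{ext}}$.

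For the derivative, the weak derivative of $u^{\mathrm{ext}}(\cdot+\delta)$ equals $u'(\cdot+\delta)$ where the shifted argument lies in $[0,T]$, and $0$ elsewhere. This holds because the constant extension of a continuous $H^1$ function is absolutely continuous. Hence the $L^2$ norm of the derivative is at most $\|u'\|_{L^2}$.

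For the function values, we bound the contribution from the part of $[0,T]$ mapped into $[0,T]$ by $\|u\|_{L^2}$. On the remaining part, of length at most $\min(|\delta|,T)$, the value is $u(0)$ or $u(T)$, so this contribution is at most $T\|u\|_\infty^2$. By the Sobolev embedding $H^1([0,T])\hookrightarrow C([0,T])$ (Theorem 8.8 of \cite{brezis2011functional}), $\|u\|_\infty\le c_T\|u\|_{H^1}$.

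Combining these bounds yields $\|\mathcal T_\delta y\|_{\mathcal Y}\le K\|y\|_{\mathcal Y}$, with a constant $K$ depending only on $T$ and $\rho$. Hence
\begin{align*}
\sup_n\|\hat y^{(n)}\|_{\mathcal Y}\le \frac{2K}{J-2}\sum_{j=1}^{J-1}\|y_j\|_{\mathcal Y}<\infty.
\end{align*}
This bound is independent of $y_J^{(n)}$ and of the estimated $\hat\sigma^{(n)},\hat\gamma^{(n)},\hat{\boldsymbol\delta}^{(n)}$. Existence of each minimizer is supplied by Corollary~\ref{cor:existence_consensus_curve}.
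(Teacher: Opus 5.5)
Your proof is correct. Up to the inequality $(J-2)\|\hat y^{(n)}\|_{\mathcal Y}\le 2\sum_{j=1}^{J-1}\|\mathcal T_{\hat\delta_j^{(n)}}y_j\|_{\mathcal Y}$ it is the paper's argument. The paper also compares with the competitor $(0,\hat\sigma^{(n)},\hat\gamma^{(n)},\hat{\boldsymbol\delta}^{(n)})$, keeping the shifts fixed so that the outlier term cancels. It likewise reruns the proof of Proposition~\ref{prop:robustness} with $u_0=0$ rather than invoking the proposition.

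The only difference is the final step. The paper bounds $M_j:=\sup_{|\eta|\le\delta_{\max}}\|\mathcal T_\eta y_j\|_{\mathcal Y}$ qualitatively, using continuity of $\eta\mapsto\mathcal T_\eta y_j$ (Lemma~\ref{lem:shift_continuity}, which rests on continuity of translation in $L^2$) together with compactness of $[-\delta_{\max},\delta_{\max}]$. For the paper this costs one line, because that lemma was already needed for the existence corollary.

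You instead prove a quantitative estimate $\|\mathcal T_\delta y\|_{\mathcal Y}\le C\|y\|_{\mathcal Y}$, with $C$ depending only on $T$ and $\rho$. It needs only a change of variables and the embedding $\|u\|_\infty\le c_T\|u\|_{H^1}$. Your route is self-contained and gives the explicit bound $\frac{2C}{J-2}\sum_{j<J}\|y_j\|_{\mathcal Y}$. Because the estimate is uniform over all $\delta\in\mathbb R$, it also shows that the box constraint $|\delta_j|\le\delta_{\max}$ plays no role in the robustness property; that constraint matters only for compactness in the existence proof.

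A cosmetic point: rename your constant, since $K$ already denotes the number of basis functions in the paper.
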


See Appendix~\ref{sec:proof} for the Proofs of these corollaries.

\section{Algorithm Based on Techniques from Functional Data Analysis}
\label{sec:algorithm}

In this section, we develop an algorithm for \eqref{eq:joint_shift_frechet}.  
Using basis function expansion, we construct an implementable algorithm by representing trajectories.
Here, we introduce the main algorithm in 4 steps for numerically solving \eqref{eq:joint_shift_frechet} and an optional step for recovering the full trajectory.

\paragraph{Step 1: Basis Function Expansion.}
For each shifted trajectory $(\mathcal T_{\delta_j} y_j)(t)=(\mathcal T_{\delta_j} E_j,\mathcal T_{\delta_j} I_j)^\top$ with any shift parameter $\delta_j$, we approximate
\begin{align*}
    (\mathcal T_{\delta_j} E_j)(t)\approx \sum_{k=1}^K c_{jk}^{(E)}(\delta_j)\phi_k(t),
    \qquad
    (\mathcal T_{\delta_j} I_j)(t)\approx \sum_{k=1}^K c_{jk}^{(I)}(\delta_j)\phi_k(t),
\end{align*}
using basis functions $\{\phi_k\}_{k=1}^K$, and $\boldsymbol{c}_{j}^{(E)}(\delta_j) := (c_{j 1}^{(E)}(\delta_j),\ldots, c_{j K}^{(E)}(\delta_j))^{\top} \in \mathbb{R}^{K}$ and $\boldsymbol{c}_{j}^{(I)}(\delta_j) := (c_{j 1}^{(I)}(\delta_j),\ldots, c_{j K}^{(I)}(\delta_j))^{\top} \in \mathbb{R}^{K}$ are coefficients vectors.
Here, let $\boldsymbol{c}_{j}(\delta_j) := ((\boldsymbol{{c}}_{j}^{(E)}(\delta_j))^{\top}, (\boldsymbol{{c}}_{j}^{(I)}(\delta_j))^{\top})^{\top}$.
We also represent the unknown target trajectory $y=(E,I)^\top$ by
\begin{align*}
    E(t)\approx \sum_{k=1}^K c_{k}^{(E)}\phi_k(t),
    \qquad
    I(t)\approx \sum_{k=1}^K c_{k}^{(I)}\phi_k(t),
\end{align*}
with a coefficient vector $\boldsymbol{c} := ((\boldsymbol{{c}}^{(E)})^{\top}, (\boldsymbol{{c}}^{(I)})^{\top})^{\top}$, where $\boldsymbol{c}^{(E)},\boldsymbol{c}^{(I)}\in\mathbb{R}^K$.

Using this representation, the $H^1$-norm between the two trajectories is approximated as
\begin{align*}
    &d(y, \mathcal{T}_{\delta_j}y_j)
    \\
    &\quad
    \approx
    \left\{
        (\boldsymbol{c}^{(E)} - \boldsymbol{c}_{j}^{(E)}(\delta_j))^\top \mathbf{G}^{H^1} (\boldsymbol{c}^{(E)} - \boldsymbol{c}_{j}^{(E)}(\delta_j)) +  (\boldsymbol{c}^{(I)} - \boldsymbol{c}_{j}^{(I)}(\delta_j))^\top \mathbf{G}^{H^1} (\boldsymbol{c}^{(I)} - \boldsymbol{c}_{j}^{(I)}(\delta_j))
    \right\}^{1/2}
    \\
    &\quad
    = \left\{
        (\boldsymbol{c}-\boldsymbol{c}_j(\delta_j))^\top
        \mathbf{G}^{\mathcal{Y}}
        (\boldsymbol{c}-\boldsymbol{c}_j(\delta_j))
    \right\}^{1/2}
    \\
    &\quad
    =: D(\boldsymbol{c},\boldsymbol{c}_{j}(\delta_j))
    ,
\end{align*}
where $\mathbf{G}^{H^1}$ is the Gram matrix of the basis system under the $H^1$ inner product:
\begin{align*}
    \mathbf{G}^{H^1}_{k\ell}
    =
    \int_0^T \phi_k(t)\phi_{\ell}(t)\,dt + \rho \int_0^T {\phi'}_k(t){\phi'}_{\ell}(t)\,dt
    ,
\end{align*}
and
\begin{align*}
    \mathbf{G}^{\mathcal{Y}}
    :=
    \begin{pmatrix}
        \mathbf{G}^{H^1} & \mathbf{0}\\
        \mathbf{0} & \mathbf{G}^{H^1}
    \end{pmatrix}
    \in\mathbb{R}^{2K\times 2K}
    .
\end{align*}

Furthermore, the constraints in \eqref{eq:joint_shift_frechet} can be expressed as follows:
\begin{align}
    \begin{cases}
        \displaystyle \sum_{k=1}^K c_{k}^{(E)} \phi_k(t) \geq 0, \\
        \displaystyle \sum_{k=1}^K c_{k}^{(I)} \phi_k(t) \geq 0, \\
        \displaystyle N - \sum_{k=1}^K c_{k}^{(E)} \phi_k(t) - \sum_{k=1}^K c_{k}^{(I)} \phi_k(t) - \gamma \sum_{k=1}^K c_{k}^{(I)} \Phi_k(t) \geq 0, \\
        \displaystyle\sum_{k=1}^K c_{k}^{(I)} {\phi'}_k(t) = \sigma \sum_{k=1}^K c_{k}^{(E)} \phi_k(t) - \gamma \sum_{k=1}^K c_{k}^{(I)} \phi_k(t),
    \end{cases}
    \forall t \in [0,T]
    ,
    \label{eq:constraints_joint_shift_frechet}
\end{align}
where ${\phi'}_k(t) := {d\phi_k(t)}/{dt}$ and $\Phi_k(t) = \int_{0}^{t} \phi_{k}(s) ds$.

When B-splines are adopted as the basis functions, $\phi_k(t)$, their first derivatives ${\phi'}_k(t)$ and integrals $\Phi_k(t)$ are available in closed form.
Furthermore, the integrals of products of basis functions and their derivatives, and consequently the Gram matrix, can be computed analytically.
This prevents approximation errors associated with numerical differentiation and integration.

\paragraph{Step 2: Discretization of Evaluation Time.}
For implementation, we discretize the evaluation time.
Specifically, let $0=t_0<t_1<\cdots<t_M=T$ be an equally spaced grid on $[0,T]$.

Now we prepare matrix forms on this evaluation grid.
For $m=0,1,\ldots,M,\ k=1,\ldots,K$, define the design matrices
\begin{align}
    \begin{cases}
    \mathbf{B} &:= \{\phi_k(t_m)\} \in \mathbb{R}^{(M+1) \times K},
    \\
    \mathbf{B}' &:= \{{\phi'}_k(t_m)\} \in \mathbb{R}^{(M+1) \times K},
    \\
    \mathbf{\Phi} &:= \{\Phi_k(t_m)\} = \left\{\int_0^{t_m}\phi_k(s)\,ds\right\} \in \mathbb{R}^{(M+1) \times K}.
    \end{cases}
    \label{eq:bspline_design_matrices}
\end{align}
For the coefficient vector $\boldsymbol{c}=((\boldsymbol{c}^{(E)})^\top,(\boldsymbol{c}^{(I)})^\top)^\top\in\mathbb{R}^{2K}$, the grid evaluations satisfy
\begin{align*}
    \bigl(E(t_m)\bigr)_{m=0}^M
    =
    \mathbf{B}\boldsymbol{c}^{(E)},
    \quad
    \bigl(I(t_m)\bigr)_{m=0}^M
    =
    \mathbf{B}\boldsymbol{c}^{(I)},
    \quad
    \left(\int_0^{t_m} I(s)\,ds\right)_{m=0}^M
    =
    \mathbf{\Phi}\boldsymbol{c}^{(I)}.
\end{align*}

Using the matrix formulation \eqref{eq:bspline_design_matrices}, the constraints in \eqref{eq:constraints_joint_shift_frechet} evaluated on the grid can be expressed as follows:
\begin{align}
    \begin{cases}
        \displaystyle \mathbf{B}\boldsymbol{c}^{(E)} \geq \boldsymbol{0}, \\
        \displaystyle \mathbf{B}\boldsymbol{c}^{(I)} \geq \boldsymbol{0}, \\
        \displaystyle N\boldsymbol{1} - \mathbf{B}\boldsymbol{c}^{(E)} - \mathbf{B}\boldsymbol{c}^{(I)} - \gamma\,\mathbf{\Phi}\boldsymbol{c}^{(I)} \geq \boldsymbol{0}, \\
        \displaystyle \mathbf{B}'\boldsymbol{c}^{(I)} = \sigma\,\mathbf{B}\boldsymbol{c}^{(E)} - \gamma\,\mathbf{B}\boldsymbol{c}^{(I)}.
    \end{cases}
    \label{eq:bspline_constraints}
\end{align}

Therefore, we solve the following optimization problem:
\begin{align}
    (\hat{\boldsymbol{c}}, \hat{\sigma}, \hat{\gamma}, \hat{\boldsymbol{\delta}})
    &:=
    \argmin_{({\boldsymbol{c}},\sigma,\gamma,\boldsymbol{\delta}) \in \mathcal {F}_K}
    \frac{1}{J}\sum_{j=1}^{J}D(\boldsymbol{c},\boldsymbol{c}_{j}(\delta_j))^{q}
    \label{eq:joint_shift_frechet_discretized}
    ,
\end{align}
where $\mathcal {F}_K$ is the finite-dimensional feasible set defined as
\begin{align*}
    \mathcal {F}_K
    :=
    \left\{
        (\boldsymbol{c},\sigma,\gamma,\boldsymbol{\delta})
        \in
        \mathbb R^{2K}
        \times
        [\sigma_{\min},\sigma_{\max}]
        \times
        [\gamma_{\min},\gamma_{\max}]
        \times
        \Delta
        ~ \left| ~
        \eqref{eq:bspline_constraints}
        \right.
    \right\}
    .
\end{align*}

\paragraph{Step 3: Alternating Minimization.}
We solve \eqref{eq:joint_shift_frechet_discretized} by alternating updates of $(\boldsymbol{c},\sigma,\gamma)$ and $\boldsymbol{\delta}:=(\delta_{1},\ldots,\delta_{J})$.
Specifically, iterate the following (I) and (II) steps alternately until convergence.

\begin{itemize}
    \item[(I)] \emph{Update of $\boldsymbol{c},\sigma,\gamma$.}
    Given shifts $\hat{\boldsymbol{\delta}}^{(r)}$ ($r=0,1,\ldots,r_{\text{max}}$) with the maximum iteration number $r_{\text{max}}$, we obtain $\boldsymbol{c}_1(\hat{\delta}_{1}^{(r)}),\ldots,\boldsymbol{c}_J(\hat{\delta}_{J}^{(r)})$ by the basis expansion via e.g. a least-square method for $\mathcal{T}_{\hat{\delta}^{(r)}_1}y_1,\ldots,\mathcal{T}_{\hat{\delta}^{(r)}_J}y_J$, respectively.
    We then update $(\boldsymbol{c},\sigma,\gamma)$ by solving
    \begin{align}
        \label{eq:rep_update_given_shift}
        &(\hat{\boldsymbol{c}}, \hat{\sigma}, \hat{\gamma})^{(r+1)}
        :=
        \argmin_{(\boldsymbol{c}, \sigma, \gamma) \in \mathbb{R}^{2K} \times [\sigma_{\min},\sigma_{\max}] \times [\gamma_{\min},\gamma_{\max}]} \left\{ \frac{1}{J} \sum_{j=1}^J D(\boldsymbol{c}, \boldsymbol{c}_j(\hat{\delta}_{j}^{(r)}))^q \right\}
        \quad\text{s.t. \eqref{eq:bspline_constraints}} 
        .
    \end{align}
    This constrained optimization problem can be solved by a constrained nonlinear optimization method, such as sequential quadratic programming, if $q>1$.
    Otherwise, the formulation can be separated into a block optimization process for $\boldsymbol{c}$ and ($\sigma,\gamma$).
    We note that for a fixed ($\sigma,\gamma$), the constrained optimization problem for ($\boldsymbol{c}^{(E)},\boldsymbol{c}^{(I)}$) is reduced to a convex optimization problem.
    We describe details of this approach in Appendix \ref{sec:simplified_optimization}.
    \item[(II)] \emph{Update of $\boldsymbol{\delta}$.}
    Given $(\hat{\boldsymbol{c}}^{(E)}, \hat{\boldsymbol{c}}^{(I)}, \hat{\sigma}, \hat{\gamma})^{(r+1)}$, update each shift independently by the one-dimensional optimization problem
    \begin{align}
        \tilde{\delta}_j
        &:=
        \argmin_{\delta\in[-\delta_{\max},\delta_{\max}]}
        D(\hat{\boldsymbol{c}}^{(r+1)},\boldsymbol{c}_{j}({\delta}))^{q}
        ,
        \qquad j=1,\ldots,J,
        \label{eq:shift_update_1d}
    \end{align}
    which can be solved by standard 1-dimensional optimization methods such as Brent's method.
    At each evaluation point of $\delta_j$, we obtain $\boldsymbol{c}_{j}({\delta})$ by the basis expansion.
    To pose the constraint $\sum_{j=1}^J \delta_j = 0$ while satisfying $\delta_j \in [-\delta_{\max}, \delta_{\max}]$, we find an adjustment parameter $\mu_{\delta} \in \mathbb{R}$ that satisfies
    \begin{align*}
        \sum_{j=1}^{J} \max\left( -\delta_{\max}, \min\left( \delta_{\max}, \tilde{\delta}_j - \mu_{\delta} \right) \right)
        =
        0
        .
    \end{align*}
    Since the function $f(\mu_{\delta}) := \sum_{j=1}^{J} \max( -\delta_{\max}, \min( \delta_{\max}, \tilde{\delta}_j - \mu_{\delta} ) )$ is monotonically decreasing and continuous, the unique root $\mu_{\delta}$ can be efficiently found via a root-finding algorithm such as bisection method.
    Finally, we update the shift parameters by
    \begin{align*}
        \hat{\delta}_j^{(r+1)}
        \leftarrow
        \max\left( -\delta_{\max}, \min\left( \delta_{\max}, \tilde{\delta}_j - \mu_{\delta} \right) \right)
        .
    \end{align*}
\end{itemize}    

For initialization of the optimization problem \eqref{eq:rep_update_given_shift} in the first iteration ($r=0$), we solve the following reduced problem
\begin{align*}
    &\hat{\boldsymbol{c}}_{\text{(init)}}
    :=
    \argmin_{\boldsymbol{c} \in \mathbb{R}^{2K}} \frac{1}{J} \sum_{j=1}^J D(\boldsymbol{c}, \boldsymbol{c}_j(\hat{\delta}_j^{(0)}))^q
    \quad
    \text{s.t.}
    \quad
    \begin{cases}
        \displaystyle \mathbf{B}\boldsymbol{c}^{(E)} \geq \boldsymbol{0}, \\
        \displaystyle \mathbf{B}\boldsymbol{c}^{(I)} \geq \boldsymbol{0}, \\
        \displaystyle N\boldsymbol{1} - \mathbf{B}\boldsymbol{c}^{(E)} - \mathbf{B}\boldsymbol{c}^{(I)} \geq \boldsymbol{0}
        .
    \end{cases}
\end{align*}
Given these coefficients, we obtain the initial estimates of $(\hat{\sigma}_{\text{(init)}}, \hat{\gamma}_{\text{(init)}})$ as a solution to a least-squares problem for the differential equation ${dI}/{dt} = \sigma E - \gamma I$ over $t \in [0, T]$, i.e.,
\begin{align*}
    &(\hat{\sigma}_{\text{(init)}}, \hat{\gamma}_{\text{(init)}})
    \\
    &\quad
    :=
    \argmin_{(\sigma, \gamma) \in [\sigma_{\min},\sigma_{\max}] \times [\gamma_{\min},\gamma_{\max}]}
    \sum_{m=0}^{M}
    \left(\sum_{k=1}^K \hat{c}_{k \text{(init)}}^{(I)} {\phi'}_k(t_m) - \sigma \sum_{k=1}^K \hat{c}_{k \text{(init)}}^{(E)} \phi_k(t_m) + \gamma \sum_{k=1}^K \hat{c}_{k \text{(init)}}^{(I)} \phi_k(t_m)\right)^2
    .
\end{align*}
We take the initial shifts $\hat{\delta}_j^{(0)}$ for \eqref{eq:shift_update_1d} by the alignment of curves on the first most intensive peak and center them to satisfy
$\sum_{j=1}^J\hat{\delta}_j^{(0)}=0$.

\paragraph{Step 4: Recovery of the Full Trajectory.}

Let $(\hat{\boldsymbol{c}}^{(E)}, \hat{\boldsymbol{c}}^{(I)}, \hat{\sigma}, \hat{\gamma})$ be the final solution.
The exposed and infectious compartments are recovered from the estimated coefficients as:
\begin{align*}
    \hat{E}(t) = \sum_{k=1}^K \hat{c}_{k}^{(E)} \, \phi_k(t)
    ,
    \quad
    \hat{I}(t) = \sum_{k=1}^K \hat{c}_{k}^{(I)} \, \phi_k(t)
    .
\end{align*}
The full trajectory is recovered from $(\hat{E}, \hat{I})$ and $\hat{\gamma}$ via
\begin{align*}
    \hat{R}(t)
    = \hat{\gamma} \int_0^t \hat{I}(s)\,ds
    ,
    \quad
    \hat{S}(t) = N - \hat{E}(t) - \hat{I}(t) - \hat{R}(t)
    .
\end{align*}

\paragraph{Optional Step: Estimation of Transmission Rate.}

Furthermore, by assuming the SEIR model and constant transmission rate $\beta$, we estimate $\beta$ using the integral form of the differential equation to avoid numerical instability.
Integrating the equation
\begin{align*}
    \frac{dE(t)}{dt} = \beta \frac{S(t) I(t)}{N} - \sigma E(t).
\end{align*}
yields:
\begin{align*}
    \hat{E}(t) - \hat{E}(0)
    =
    \beta \int_{0}^{t} \frac{\hat{S}(s) \hat{I}(s)}{N} ds
    - \hat{\sigma} \int_{0}^{t} \hat{E}(s) ds.
\end{align*}
Accordingly, we estimate $\beta$ by solving the following optimization problem:
\begin{align*}
    \hat{\beta} := \argmin_{\beta\in [\beta_{\text{min}},\beta_{\text{max}}]} \sum_{m=0}^{M} \left(
    \hat{E}(t_m) - \hat{E}(0)
    + \hat{\sigma} \int_{0}^{t_m} \hat{E}(\tau) d\tau
    - \beta \int_{0}^{t_m} \frac{\hat{S}(\tau)\hat{I}(\tau)}{N} d\tau
    \right)^2
    ,
\end{align*}
where $0<\beta_{\text{min}}<\beta_{\text{max}}<\infty$.

\section{Simulation Study with Parameters under Uncertainty}
\label{sec:simulation}

In this section, we conduct a simulation to illustrate the behavior of our estimator using SEIR trajectories generated under varying model parameters.

\subsection{Simulation Settings}

We simulated $J = 5$, $10$, and $50$ SEIR trajectories over the interval $[0, T]$ with $T = 720$.
We set $N=1,000,000$.
For each trajectory, the incubation period $D_E$, infectious period $D_I$, and basic reproductive number $R_0$ were sampled independently from normal distributions derived from published 95\% confidence intervals reported by \cite{li2020early}.
The estimated incubation period was 5.2 days (95\% confidence interval (CI): 4.1--7.0 days), the estimated infectious period was 7.5 days (95\% CI: 5.3--19.0 days), and the estimated basic reproductive number $R_0$ was 2.2 (95\% CI: 1.4--3.9).

Let $\hat\theta$ denote a reported point estimate with a 95\% confidence interval $[L,U]$.
To accommodate asymmetric uncertainty, we define left and right standard deviations
\begin{align*}
    S^{L}
    := \frac{\hat\theta - L}{1.96}
    ,
    \quad
    S^{U}
    := \frac{U - \hat\theta}{1.96},
\end{align*}
and sampled $\theta$ as the two-piece normal random numbers generated as
\begin{align*}
    Z \sim \mathcal{N}(0,1)
    ,
    \quad
    \theta
    :=
    \begin{cases}
        \hat\theta + S^{L} Z, & Z<0,\\
        \hat\theta + S^{U} Z, & Z\geq 0.
    \end{cases}
\end{align*}
By this formulation, the 2.5th and 97.5th percentiles of $\theta$ match $L$ and $U$, respectively, while the median was fixed at $\hat\theta$.
Using the reported confidence intervals, we set:
\begin{align*}
    D_E &\sim \mathrm{TPN}\left(5.2;\, S_{E}^{L}, S_{E}^{U}\right),
    \quad (S_{E}^{L}, S_{E}^{U}) = (0.5612,\, 0.9184),
    \\
    D_I &\sim \mathrm{TPN}\left(7.5;\, S_{I}^{L}, S_{I}^{U}\right),
    \quad (S_{I}^{L}, S_{I}^{U}) = (1.1224,\, 5.8673),
    \\
    R_0 &\sim \mathrm{TPN}\left(2.2;\, S_{R}^{L}, S_{R}^{U}\right),
    \quad (S_{R}^{L}, S_{R}^{U}) = (0.4082,\, 0.8673),
\end{align*}
where $\mathrm{TPN}(\mu; S_L, S_U)$ denotes the above two-piece normal sampling rule. 
If a negative value was sampled, it was replaced with $10^{-4}$.
Each parameter set $(\beta_j, \sigma_j, \gamma_j)$ was then computed deterministically from these sampled epidemiological quantities as
\begin{align*}
    \sigma_j = \frac{1}{D_{E,j}}
    ,
    \quad
    \gamma_j = \frac{1}{D_{I,j}}
    ,
    \quad
    \beta_j = R_{0,j} \cdot \gamma_j
    .
\end{align*}
The initial conditions $(E_0, I_0) = (1,\,0)$ were fixed across all simulations. Each trajectory is obtained by numerically integrating the SEIR model over the time interval $[0, T]$ using an adaptive-step Runge--Kutta method via \texttt{scipy}'s \texttt{solve\_ivp} routine.
For each set of sampled parameters $(\beta_j, \sigma_j, \gamma_j)$, the full state vector $(S_j(t), E_j(t), I_j(t), R_j(t))$ was computed over a common grid of evaluation times.

We then applied the power Fr\'echet mean estimation algorithm described in the previous section, employing a B-spline basis with $K=30$ basis functions of degree $3$ for smoothing and representing each trajectory in functional form.
The alternating minimization algorithm was terminated when the maximum relative change in parameters dropped below $10^{-3}$.
The consensus trajectory was estimated in the reduced space $\mathcal{Y}$, and subsequently reconstructed to recover the full system for the case of $J=10$, based on the Fr\'echet median curves.

The experiments were repeated $1,000$ times.
We show typical figures of consensus curves among the repeated experiments and report the mean and Monte Carlo standard deviations for the estimated parameters.
We provide sensitivity analyses results for other settings of hyperparameters $K$ and $\rho$ in Appendix \ref{sec:sensanal}.

\subsection{Results}

Figure~\ref{fig:trajectories_EI} shows the individual SEIR trajectories (gray lines) together with the Fr\'echet mean ($q=2$, blue line), Fr\'echet $1.5$-mean ($q=1.5$, purple line), and the Fr\'echet median ($q=1$, red line), along with the pointwise mean (green) and the pointwise median (orange).
Note that the pointwise methods show multiple peaks and attenuation caused by misalignment peaks, whereas the Fr\'echet approaches capture a single peak with plausible magnitude and timing.

In Figure~\ref{fig:trajectories_FULL}, the curve reconstructed from the median ($q=1$) of $E$ and $I$ is compared with the SEIR curves fitted using point estimates reported in the literature (dashed lines): $\beta = 2.2/7.5 \simeq 0.2933$, $\sigma = 1/5.2 \simeq 0.1923$, and $\gamma = 1/7.5 \simeq 0.1333$.
Furthermore, as shown in Table~\ref{tab:estimated_params_extended}, the estimated epidemiological quantities from the consensus trajectories were plausible.
This demonstrates the utility of our method in producing interpretable summaries.
The input infectious period $D_I$ followed a right-skewed distribution characterized by an asymmetric confidence interval (5.3 -- 19.0 days).
As shown in Table~\ref{tab:estimated_params_extended}, the standard Fr\'echet mean ($q=2$) tended to yield a longer estimate for the infectious period ($1/\gamma$) than other values of $q$.
In contrast, the Fr\'echet median ($q=1$) provides estimates closer to the input median of 7.5 days.
This suggests that for real-world data containing outliers or asymmetric distributions, the $q=1$ approach may be more robust and appropriate for determining a representative value.
Furthermore, the estimated consensus curves and estimated parameters demonstrate stability regardless of the sample size ($J$).
This implies that our method is expected to provide useful summaries even in situations where the number of available or reliable curves is limited, such as during the early stages of an emerging infectious disease outbreak.
Overall, the simulation study shows that our method successfully integrates trajectories in a representative manner.

\begin{figure}[htbp]
    \centering
    \begin{subfigure}[t]{0.48\textwidth}
        \centering
        \includegraphics[width=\textwidth]{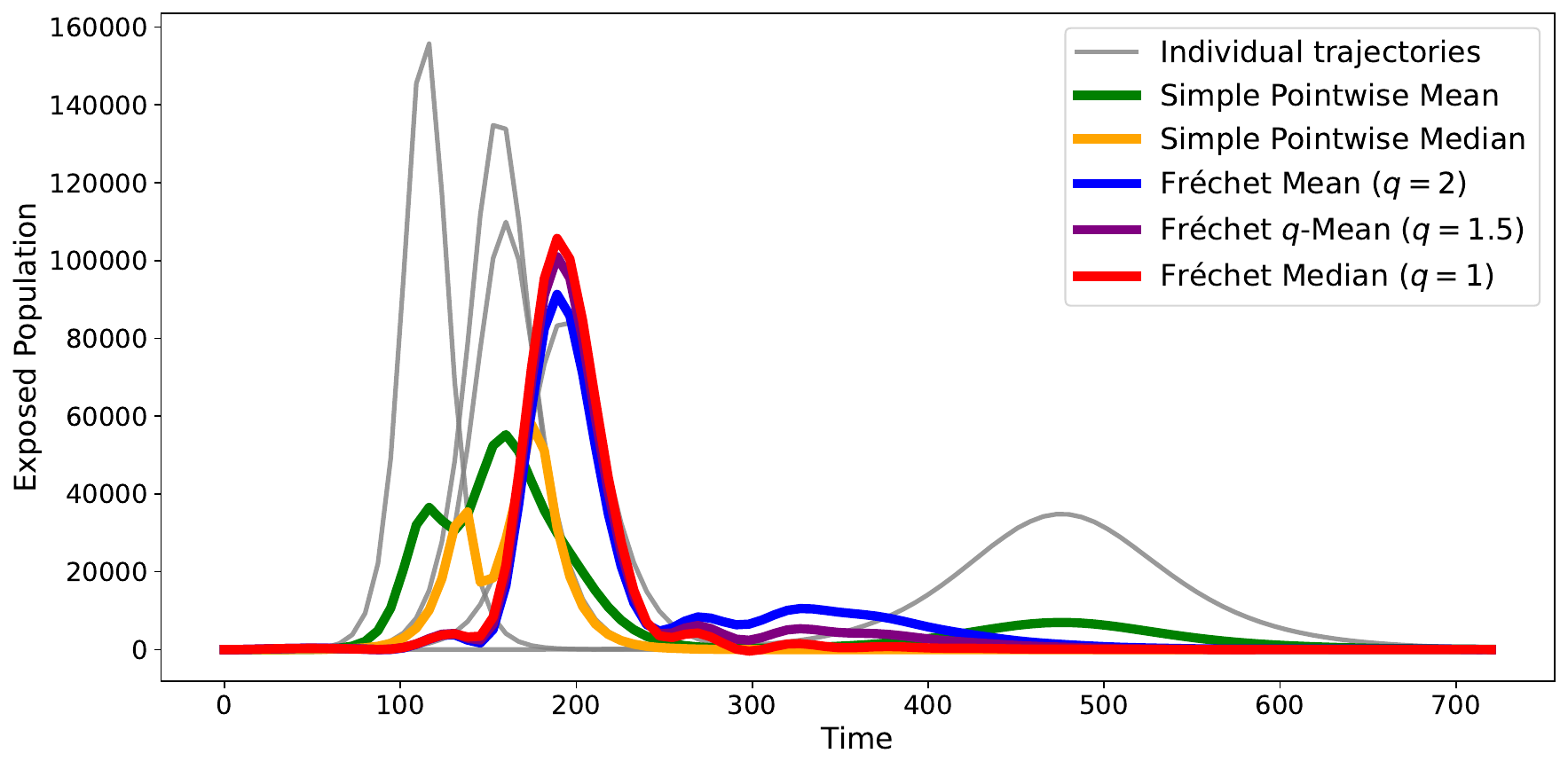}
        \caption{Exposed population ($J=5$)}
        \label{fig:trajectories_E}
    \end{subfigure}
    \hfill
    \begin{subfigure}[t]{0.48\textwidth}
        \centering
        \includegraphics[width=\textwidth]{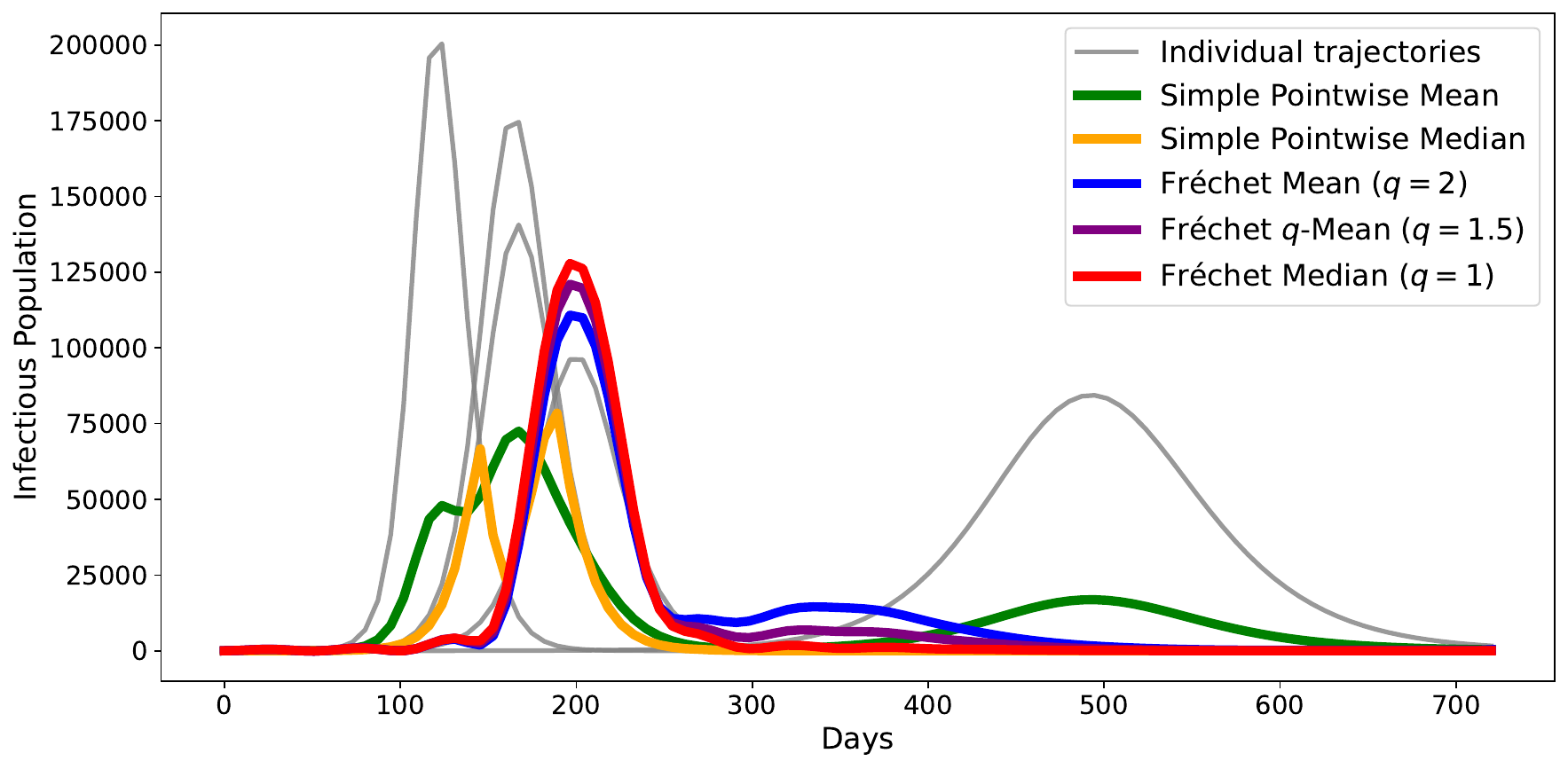}
        \caption{Infectious population ($J=5$)}
        \label{fig:trajectories_I}
    \end{subfigure}
    \hfill
    \begin{subfigure}[t]{0.48\textwidth}
        \centering
        \includegraphics[width=\textwidth]{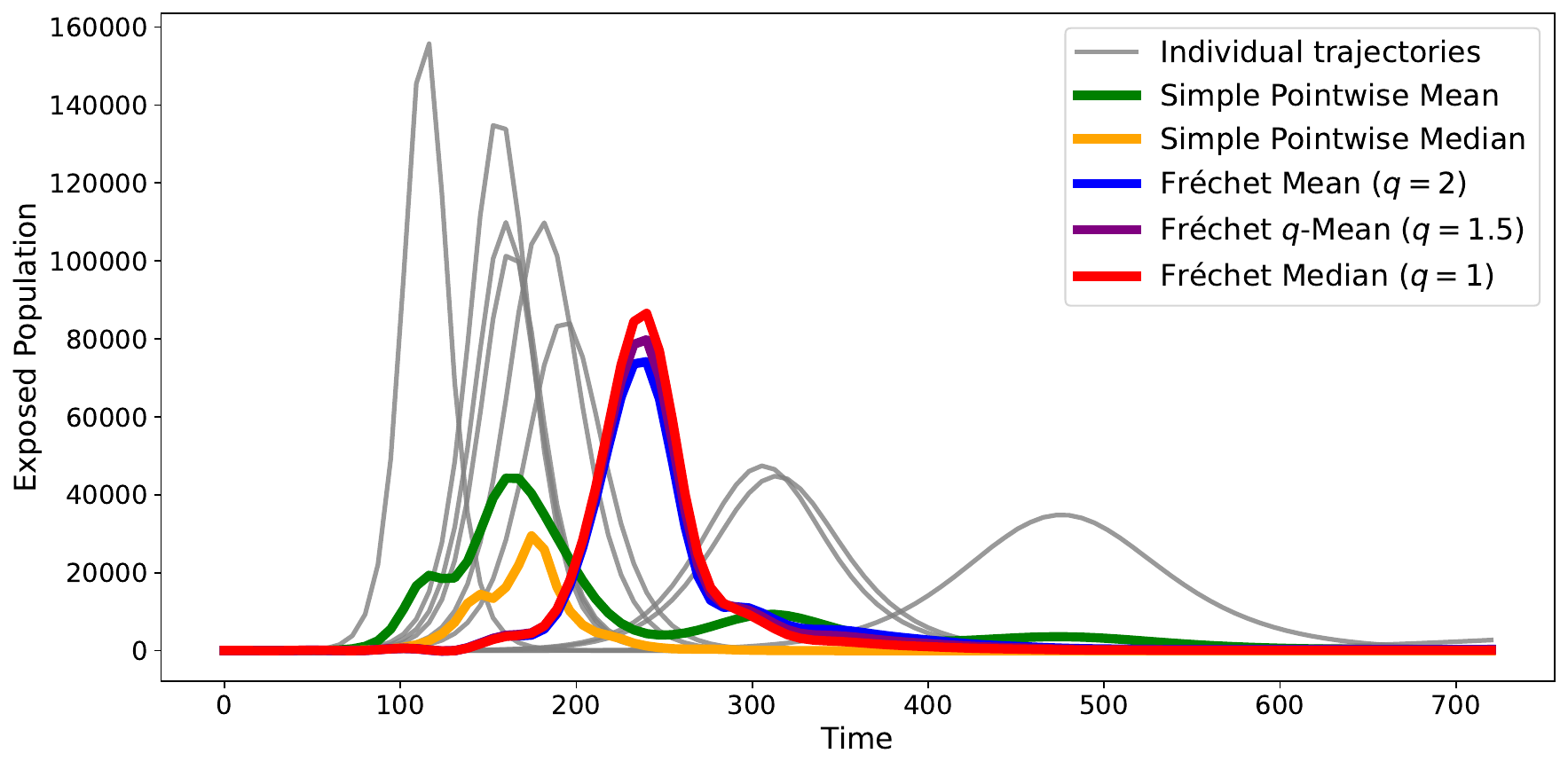}
        \caption{Exposed population ($J=10$)}
        \label{fig:trajectories_E_10}
    \end{subfigure}
    \hfill
    \begin{subfigure}[t]{0.48\textwidth}
        \centering
        \includegraphics[width=\textwidth]{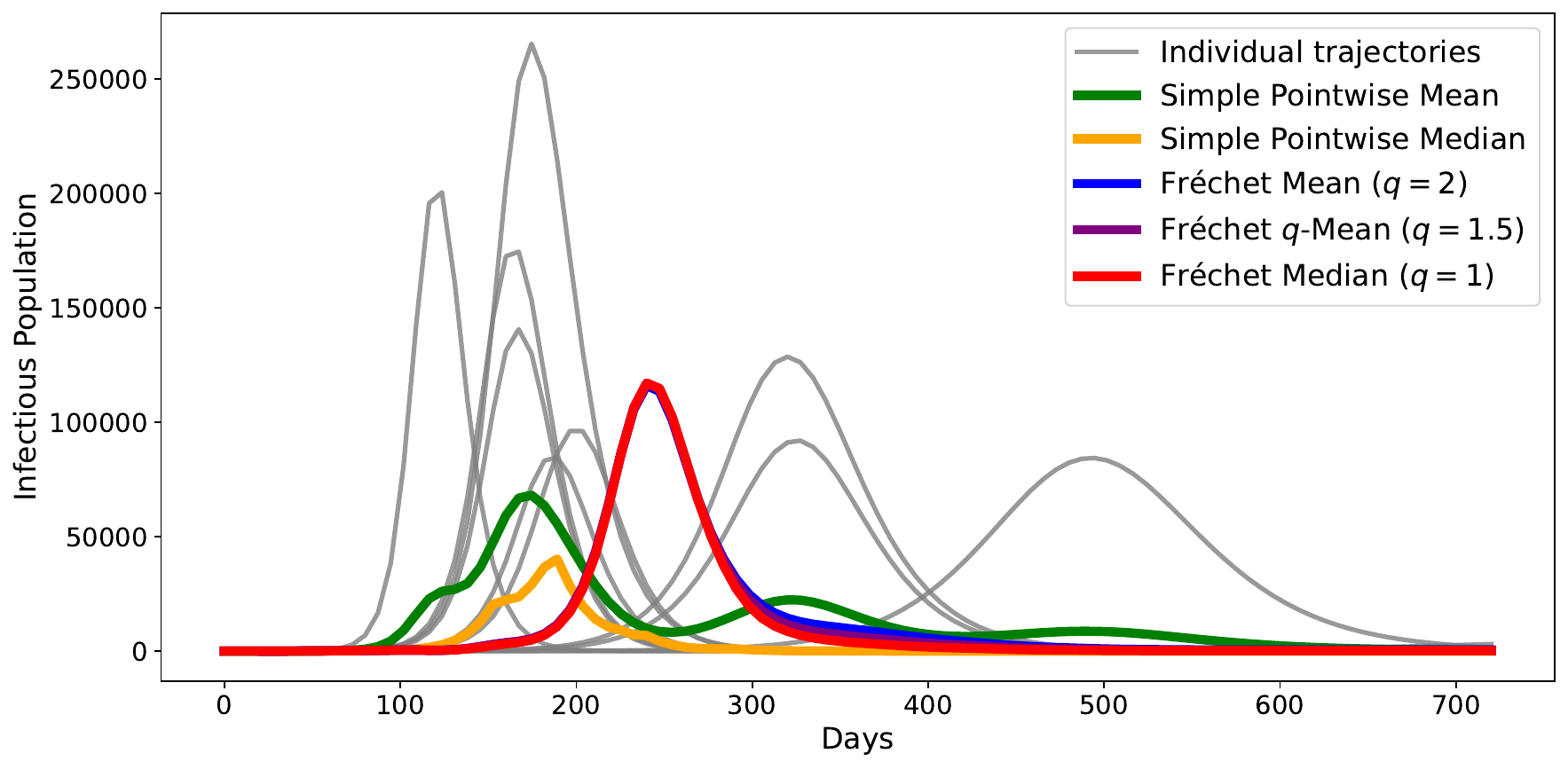}
        \caption{Infectious population ($J=10$)}
        \label{fig:trajectories_I_10}
    \end{subfigure}
    \hfill
    \begin{subfigure}[t]{0.48\textwidth}
        \centering
        \includegraphics[width=\textwidth]{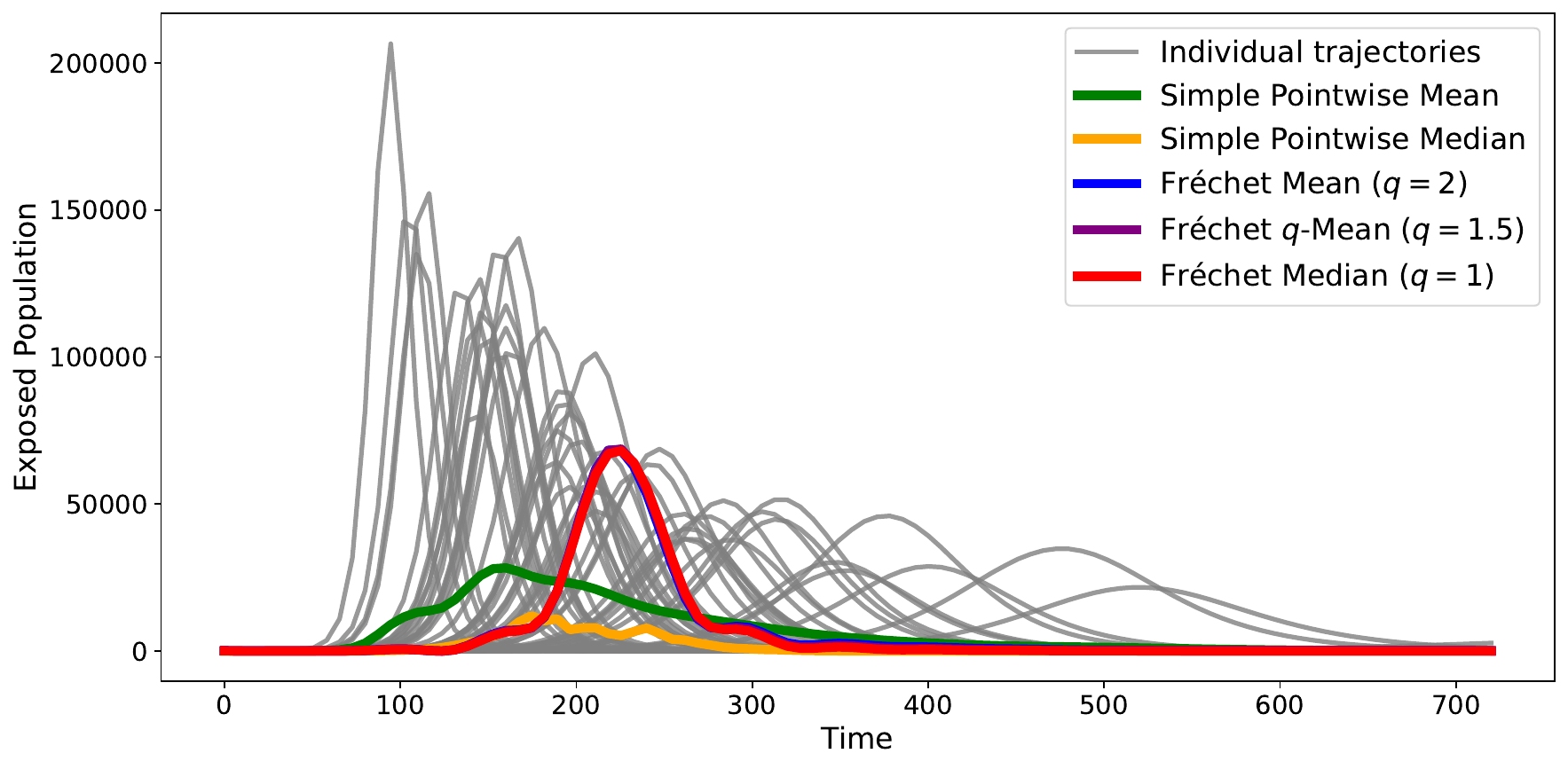}
        \caption{Exposed population ($J=50$)}
        \label{fig:trajectories_E_50}
    \end{subfigure}
    \hfill
    \begin{subfigure}[t]{0.48\textwidth}
        \centering
        \includegraphics[width=\textwidth]{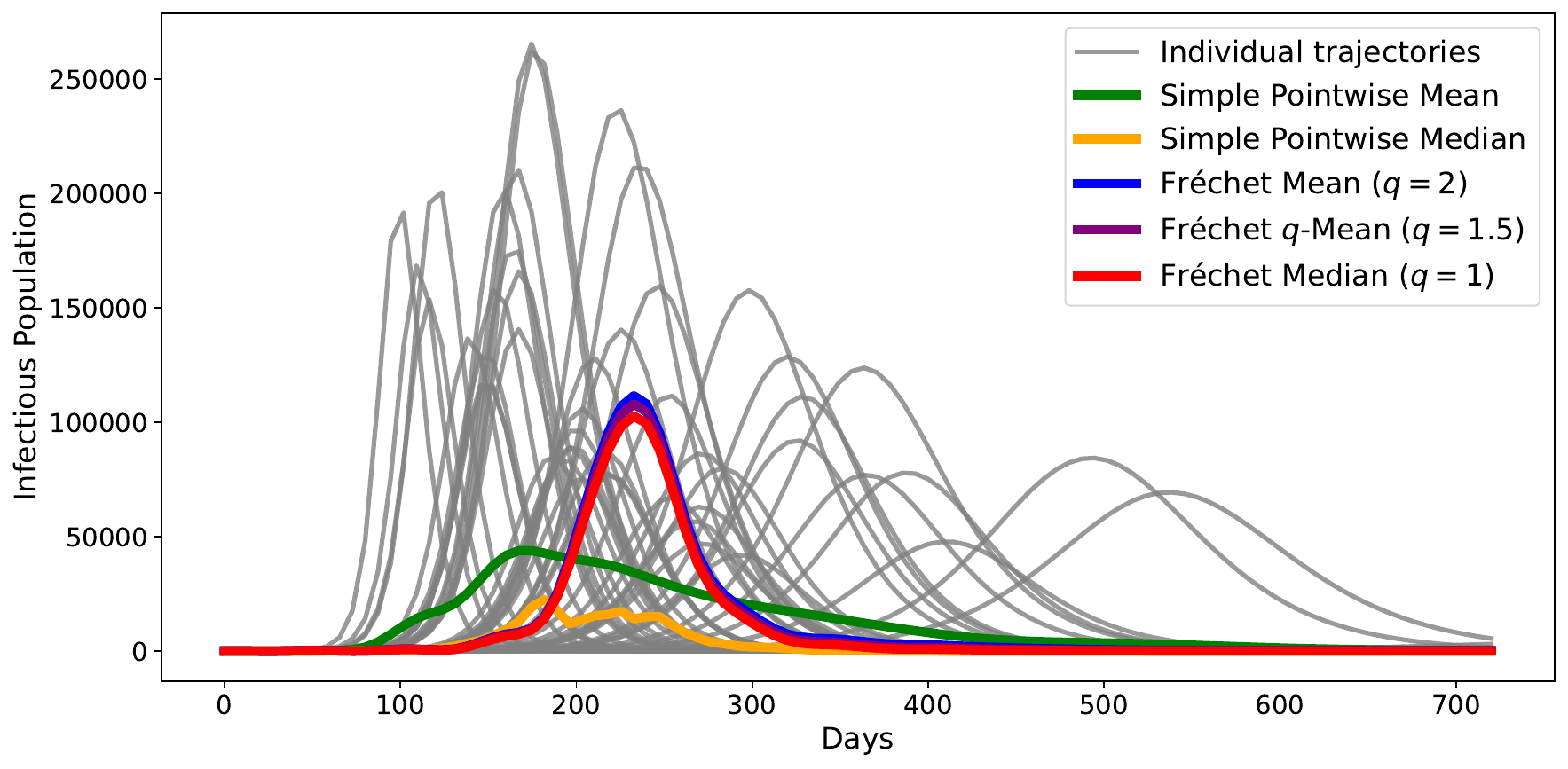}
        \caption{Infectious population ($J=50$)}
        \label{fig:trajectories_I_50}
    \end{subfigure}
    \caption{Simulated SEIR trajectories (gray) with the Fr\'echet mean (blue), the Fr\'echet $1.5$-mean (purple), the Fr\'echet median (red), the simple pointwise mean (green), and the simple pointwise median (orange). The power Fr\'echet means were calculated under $K=30$ and $\rho=1$.}
    \label{fig:trajectories_EI}
\end{figure}

\begin{figure}[htbp]
    \centering
    \includegraphics[width=0.8\textwidth]{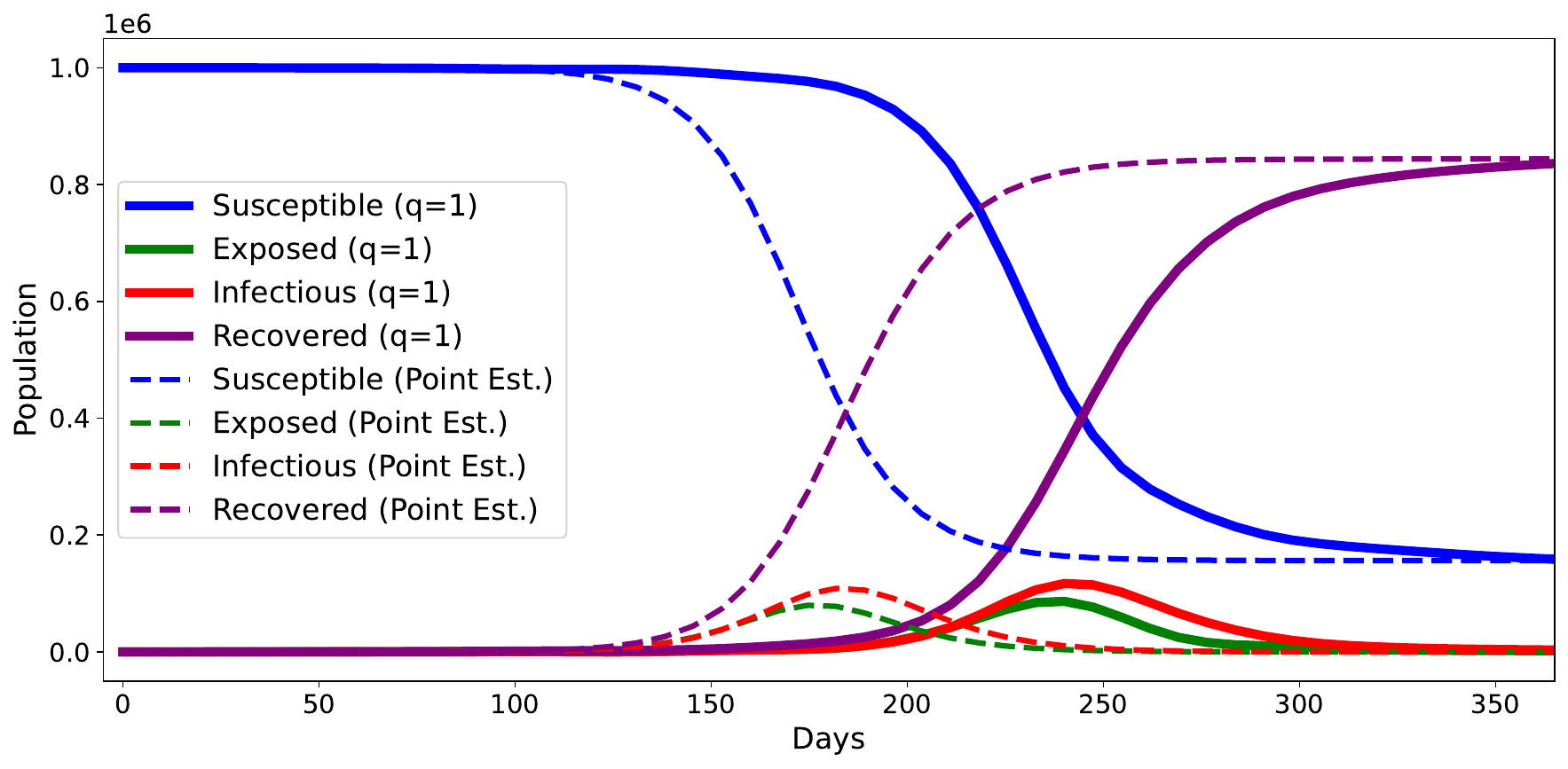}
    \caption{Full trajectory via Fr\'echet median (bold lines) and the ordinary SEIR model using point estimates of the parameters (dashed lines) for $J=10$. The power Fr\'echet means were calculated under $K=30$ and $\rho=1$.}
    \label{fig:trajectories_FULL}
\end{figure}

\begin{table}[htbp]
\centering
\caption{Estimated parameters and transmission rate under the SEIR model assumption from the consensus trajectories under different settings of $J$ and $q$. The power Fr\'echet means were calculated under $K=30$ and $\rho=1$.}
\label{tab:estimated_params_extended}
{\small
\begin{tabular}{cccccccc}
\toprule
$J$ & $q$ & $\sigma$ & $\gamma$ & Incubation ($1/\sigma$) & Infectious ($1/\gamma$) & $\beta$ & $R_0 = \beta/\gamma$\\
\midrule
5 & 1 & \makecell{0.198~(0.023)} & \makecell{0.128~(0.028)} & \makecell{5.114~(0.681)} & \makecell{8.235~(2.252)} & \makecell{0.305~(0.092)} & \makecell{2.349~(0.404)} \\
5 & 1.5 & \makecell{0.197~(0.024)} & \makecell{0.122~(0.027)} & \makecell{5.172~(0.743)} & \makecell{8.647~(2.304)} & \makecell{0.290~(0.091)} & \makecell{2.333~(0.393)} \\
5 & 2 & \makecell{0.195~(0.027)} & \makecell{0.118~(0.028)} & \makecell{5.227~(0.855)} & \makecell{9.017~(2.469)} & \makecell{0.278~(0.091)} & \makecell{2.317~(0.396)} \\
10 & 1 & \makecell{0.200~(0.018)} & \makecell{0.128~(0.021)} & \makecell{5.049~(0.502)} & \makecell{8.048~(1.524)} & \makecell{0.302~(0.075)} & \makecell{2.334~(0.317)} \\
10 & 1.5 & \makecell{0.197~(0.019)} & \makecell{0.121~(0.021)} & \makecell{5.118~(0.557)} & \makecell{8.507~(1.630)} & \makecell{0.284~(0.073)} & \makecell{2.316~(0.310)} \\
10 & 2 & \makecell{0.196~(0.021)} & \makecell{0.116~(0.021)} & \makecell{5.177~(0.642)} & \makecell{8.941~(1.793)} & \makecell{0.269~(0.072)} & \makecell{2.291~(0.312)} \\
50 & 1 & \makecell{0.202~(0.010)} & \makecell{0.131~(0.011)} & \makecell{4.952~(0.240)} & \makecell{7.687~(0.644)} & \makecell{0.307~(0.043)} & \makecell{2.333~(0.177)} \\
50 & 1.5 & \makecell{0.200~(0.010)} & \makecell{0.123~(0.010)} & \makecell{5.021~(0.253)} & \makecell{8.182~(0.693)} & \makecell{0.286~(0.041)} & \makecell{2.314~(0.171)} \\
50 & 2 & \makecell{0.197~(0.011)} & \makecell{0.116~(0.010)} & \makecell{5.084~(0.278)} & \makecell{8.676~(0.759)} & \makecell{0.266~(0.039)} & \makecell{2.280~(0.171)} \\
\bottomrule
\end{tabular}
}
\end{table}

\section{Application: Summarizing Trajectories Based on Parameters from Multiple Research Groups}
\label{sec:application}

In this section, we applied the proposed method to obtain consensus epidemic trajectories from multiple research groups, using parameter sets reported by different research groups during the early phase of the COVID-19 pandemic in 2020.
The aim is to show that, when an analyst is faced with different parameter sets chosen by independent groups, the proposed method can summarize them into a single, mechanistically interpretable consensus trajectory.

\subsection{Literature Selection and Parameter Extraction}

We first searched PubMed on April 30, 2026, using the following query:
\begin{quote}\small
\texttt{((COVID-19[TIAB] OR SARS-CoV-2[TIAB] OR 2019-nCoV[TIAB])
AND (SEIR[TIAB] OR SEIUR[TIAB] OR SEIQR[TIAB] OR SEIRD[TIAB]
OR "susceptible exposed infectious"[TIAB])
AND (model[TIAB] OR modelling[TIAB] OR modeling[TIAB]))
AND ("2020/01/01"[Date - Publication] : "2020/7/31"[Date - Publication]).}
\end{quote}
We then selected $J=6$ studies that meet the following criteria: (i) the study employs an SEIR-type compartmental model with at least the exposed and infectious compartments; (ii) the study targets COVID-19 transmission dynamics in human populations during the early phase of the pandemic; (iii) the incubation period $D_E=1/\sigma$, the infectious period $D_I=1/\gamma$, and the basic reproduction number $R_0$ are explicitly written or can be derived from the reported parameter values; and (iv) the study does not share the authors with any other selected study.
It should be noted that these $J=6$ studies do not necessarily represent an exhaustive collection of all research satisfying these criteria.
The selected studies are summarized in Table~\ref{tab:realdata_studies}.
 
For each study $j=1,\ldots,J$, we computed the rate parameters $\sigma_j = 1/D_{E,j}$, $\gamma_j = 1/D_{I,j}$, and $\beta_j = R_{0,j}\cdot\gamma_j$, and then numerically derived the solution to the corresponding SEIR solutions on $[0, T]$ with $T = 720$ days, total population $N=1,000,000$, and common initial conditions $(E_0, I_0) = (1, 0)$, using \texttt{scipy}'s \texttt{solve\_ivp} routine.

\begin{table}[htbp]
\centering
\caption{Parameter sets extracted from studies for early-phase COVID-19 (2020). $D_E = 1/\sigma$ is the incubation period, $D_I = 1/\gamma$ is the infectious period, and $R_0$ is the basic reproduction number reported by the cited study.}
\label{tab:realdata_studies}
{\small
\setlength{\tabcolsep}{4pt}
\begin{tabular}{ccccccc}
\toprule
Study & \makecell[c]{Published\\month} & Region & \makecell[c]{Model} & \makecell[c]{$D_E$\\(days)} & \makecell[c]{$D_I$\\(days)} & $R_0$ \\
\midrule
\cite{wu2020nowcasting}& February       & Wuhan, China    & SEIR & 6.00 &  2.40 & 2.68 \\
\cite{tang2020estimation}& February     & Wuhan, China    & \makecell[l]{SEIR with quarantine} & 7.00 &  2.16$^{\star}$ & 6.47 \\
\cite{peirlinck2020outbreak}& April  & United States    & SEIR                   & 2.56 & 17.82 & 5.30 \\
\cite{carcione2020simulation}& May & Lombardy, Italy & SEIRD                  & 4.25 &  4.02 & 3.00$^{\dagger}$ \\
\cite{wan2020seir}& June             & Wuhan, China    & SEIR                   & 3.00 & 14.00 & 1.44 \\
\cite{dagpunar2020sensitivity}& July & United Kingdom  & SEIR                   & 4.50 &  3.80 & 3.18 \\
\bottomrule
\end{tabular}
\\
$\star$\,2.16 was calculated by the inverse of the total removal rates from the symptomatic infectious population; $\dagger$\,$3.00$ was obtained as the initial $R_0$ value of the scheduling in simulation in the study.
}
\end{table}

\subsection{Application of the Proposed Method}

We computed the constrained power Fr\'echet means described in Section~\ref{sec:algorithm} for the $J=6$ trajectories.
We employed a B-spline basis with $K=60$ basis functions of degree $3$, the metric scaling $\rho=1$, and a maximum shift of $\delta_{\max}=120$ days, which is approximately $17\%$ of the observation window.
The alternating minimization algorithm was terminated when the maximum relative change in parameters dropped below $10^{-3}$.
We computed the consensus trajectory for $q\in\{1, 1.5, 2\}$, corresponding to the Fr\'echet median, an intermediate $q$-mean, and the standard Fr\'echet mean, respectively.
The full $(S, E, I, R)$ trajectory was then reconstructed from the estimated as described in Section~\ref{sec:algorithm}.

\subsection{Results}

Figure~\ref{fig:realdata_EI} shows the $J=6$ individual SEIR trajectories (gray) for the exposed and infectious compartments, together with the simple pointwise mean (green), the simple pointwise median (orange), the Fr\'echet mean ($q=2$, blue), the Fr\'echet $1.5$-mean ($q=1.5$, purple), and the Fr\'echet median ($q=1$, red).
The individual trajectories were highly heterogeneous: peak times ranged from approximately $50$ days for high-$R_0$ studies (e.g., \cite{tang2020estimation} with $R_0=6.47$) to more than $500$ days for the slowly evolving scenario ($R_0=1.44$) of \cite{wan2020seir}.
Furthermore, the peak heights varied by an order of magnitude across studies.
As in the simulation example, the simple pointwise mean and median were attenuated and exhibited multiple peaks, because the peaks across studies occur at very different times.
In contrast, the three Fr\'echet representative curves exhibited single epidemic waves with the peaks around day $130$, with magnitudes that were much closer to those of the typical individual trajectories.

The estimated representative parameters are reported in Table~\ref{tab:realdata_estimated}.
For the Fr\'echet median ($q=1$), the implied incubation period was approximately $4.43$ days and the implied infectious period was approximately $3.61$ days, yielding $R_0 \simeq 3.25$.
These values were consistent with the estimates from a systematic review of \cite{allen2021early}, who reported a mean incubation period of $4$--$7$ days across early COVID-19 studies and median $R_0$ estimates clustered around $2$--$3$.
As in the simulation, the standard Fr\'echet mean ($q=2$) yielded a longer estimate of the infectious period (approximately $7.11$ days), possibly reflecting the influence of the right-skewed input distribution induced by the long $D_I=17.82$ days reported by \cite{peirlinck2020outbreak}.
The Fr\'echet median may therefore arguably be a more robust summary in this setting, and the full trajectory reconstructed from the Fr\'echet median curves is shown in Figure~\ref{fig:realdata_FULL}.

\begin{table}[htbp]
\centering
\caption{Estimated parameters and the implied transmission rate from the SEIR trajectories from literature-derived parameter values ($J=6$) of early-phase COVID-19 studies, for different choices of the power $q$. The power Fr\'echet means were calculated with $K=60$ and $\rho=1$.}
\label{tab:realdata_estimated}
{\small
\begin{tabular}{cccccccc}
\toprule
$J$ & $q$ & $\sigma$ & $\gamma$ & Incubation ($1/\sigma$) & Infectious ($1/\gamma$) & $\beta$ & $R_0 = \beta/\gamma$\\
\midrule
6 & 1.0    &  0.2257   & 0.2771   & 4.43     & 3.61 &   0.9013   & 3.25      \\
6 & 1.5    &  0.1872   & 0.2249   & 5.34     & 4.45 &   0.5779   & 2.57      \\
6 & 2.0    &  0.1486   & 0.1406   & 6.73     & 7.11 &   0.2889   & 2.06      \\
\bottomrule
\end{tabular}
}
\end{table}

\begin{figure}[htbp]
    \centering
    \begin{subfigure}[t]{0.8\textwidth}
        \centering
        \includegraphics[width=\textwidth]{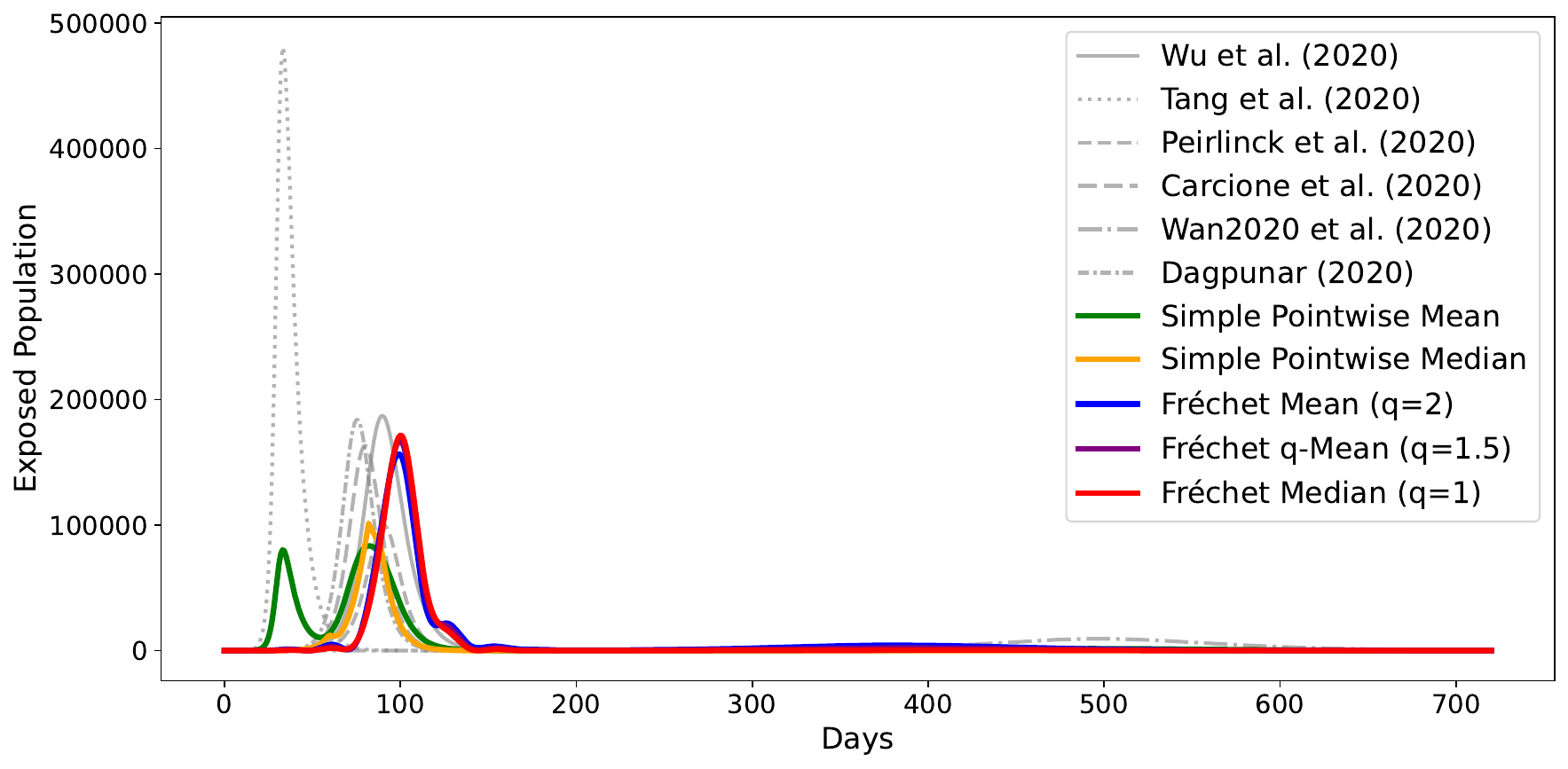}
        \caption{Exposed population}
        \label{fig:realdata_E}
    \end{subfigure}
    \\
    \begin{subfigure}[t]{0.8\textwidth}
        \centering
        \includegraphics[width=\textwidth]{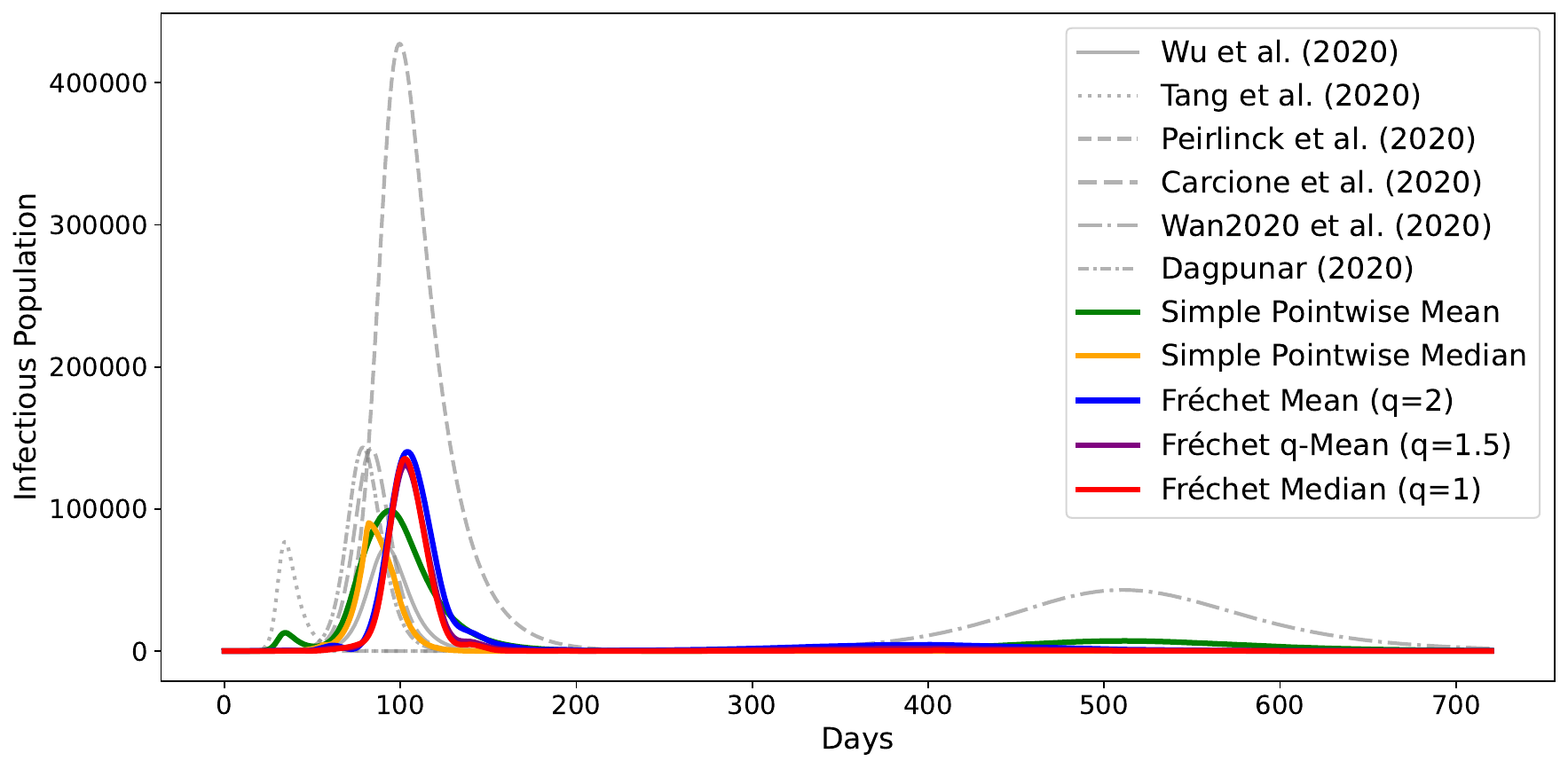}
        \caption{Infectious population}
        \label{fig:realdata_I}
    \end{subfigure}
    \caption{SEIR trajectories from literature-derived parameter values (gray, $J=6$) for the early phase of COVID-19 in 2020, together with the Fr\'echet mean (blue), the Fr\'echet $1.5$-mean (purple), the Fr\'echet median (red), the simple pointwise mean (green), and the simple pointwise median (orange). The power Fr\'echet means were computed under $K=60$ and $\rho=1$.}
    \label{fig:realdata_EI}
\end{figure}
 
\begin{figure}[htbp]
    \centering
    \includegraphics[width=0.8\textwidth]{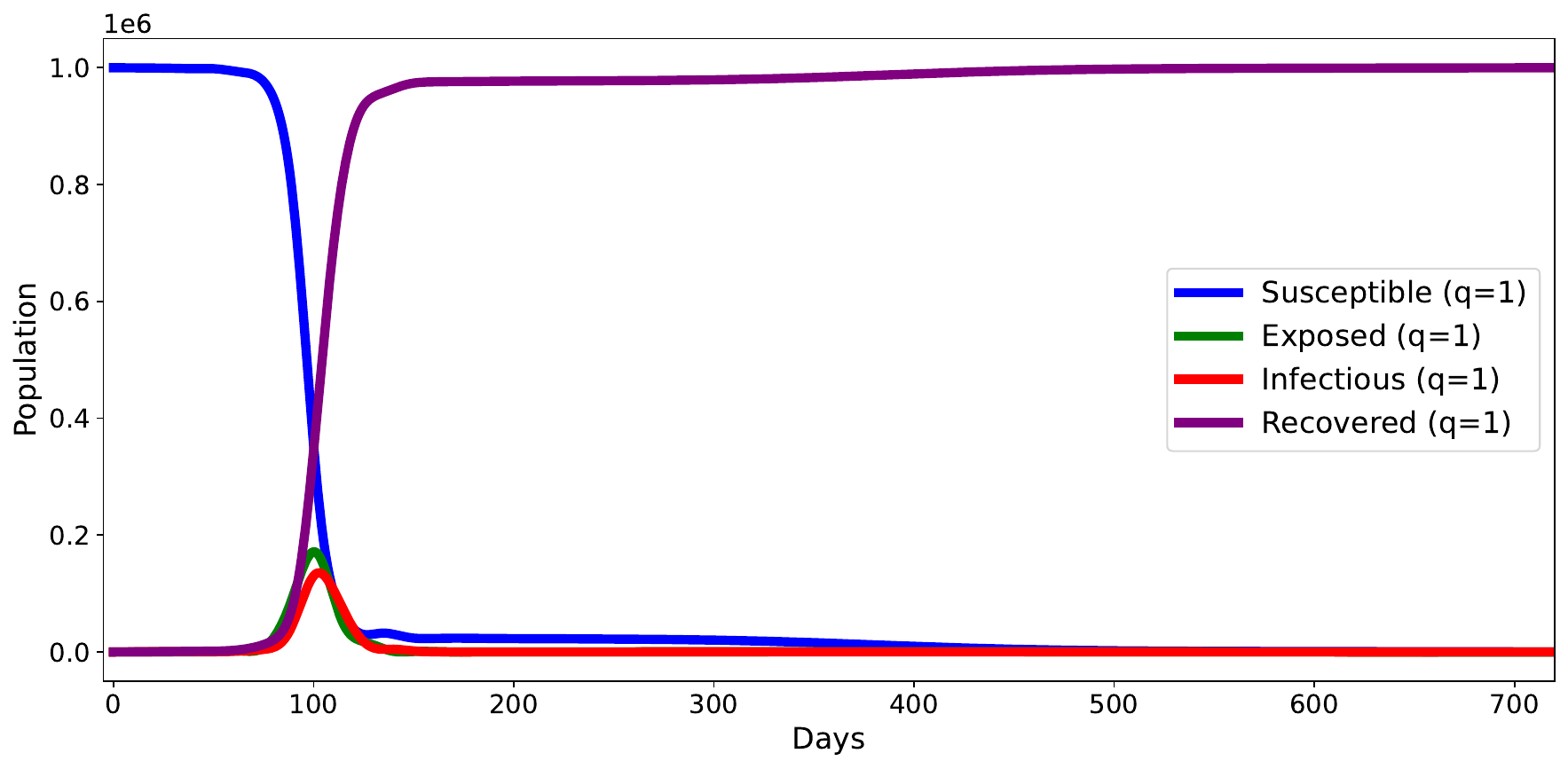}
    \caption{Reconstructed full trajectory based on the Fr\'echet median ($q=1$) ($J=6$, $K=60$, $\rho=1$).}
    \label{fig:realdata_FULL}
\end{figure}

\section{Discussion}
\label{sec:discussion}

We propose a method for summarizing multiple solutions to SEIR-type compartmental models on a functional space and constructing consensus epidemic trajectories capturing a part of the underlying dynamics of the system.
By mapping the solution curves into a functional space equipped with a metric, our approach enables direct comparison and summarization.
Specifically, solution curves are embedded into a Hilbert space.
This functional setting ensures that the dissimilarity between curves is measured by the norm induced by the inner product.
Under these constructions, we establish the existence of the consensus curve.
We further establish the robustness to an outlier curve under $q=1$.
The resulting optimization problem can be implemented using established techniques, such as basis function expansion and continuous optimization algorithms.

Our method is valuable in contexts where a unified perspective is required despite modeling uncertainties.
For instance, during the early stages of an emerging infectious disease when information is scarce, multiple research groups often provide disparate predictions based on SEIR-type models with varying assumptions.
Our method allows a national or regional researcher to integrate these diverse outputs into a single, representative curve.
Similarly, our approach is useful when seeking to obtain characteristic epidemic features from trajectories across multiple geographical regions.
By summarizing these localized trajectories, our method can construct global representative trajectories that preserve a part of the underlying dynamical features.

Our method is inspired by the integrating approach developed by \cite{bigot2013frechet}, which aims to compute a representative signal from temporally misaligned curves, such as ECG recordings.
While both approaches share the common goal of alignment and averaging in functional spaces, there are several differences.
\cite{bigot2013frechet} uses diffeomorphic transformations to achieve precise temporal alignment of signals.
In contrast, our method introduces only one shift parameter for each curve and focuses on translation.
Moreover, whereas \cite{bigot2013frechet} focuses on geometric variability in signal shape without any governing dynamics, our approach explicitly incorporates epidemiological constraints, such as the differential equation, as well as nonnegativity and population conservation conditions.
This enables not only the estimation of a representative trajectory but also the recovery of full dynamics and interpretable epidemiological parameters.

A key modeling choice in our approach is the relaxation of the full SEIR dynamics.
That is, we do not impose all four differential equations of the SEIR model as constraints.
Instead, we focus on enforcing only a subset of the model structure, namely, the differential equation governing the infectious compartment and the total population size.
This relaxation is motivated by the fact that various SEIR-type compartmental models, such as SEIUR and SEIRD, share a common structure: the infectious population increases at a rate proportional to the exposed population size and decreases at a rate proportional to its own size.
These transitions, representing the natural history of latency to infectiousness ($\sigma$) and removal ($\gamma$), are fundamentally intrinsic to the disease process across different model specifications.

Furthermore, this relaxation is advantageous in terms of the complexity of the optimization problem.
Enforcing the full SEIR system, including the nonlinear dynamics of the susceptible, infectious, and exposed compartments, such as ${d S}/{dt} = -\beta S I/N$ and ${d E}/{dt} = \beta S I/N - \sigma E$, would introduce quadratic or higher-order polynomial constraints with respect to the basis coefficients due to the multiplicative term $S(t)I(t)$.
This would make the optimization problem non-convex, thereby compromising computational efficiency and stability in solving the inner convex programs in the constructed algorithm.
Our approach thus provides a tractable compromise between dynamical fidelity and computational feasibility.
As a result, the consensus trajectory derived from our method does not necessarily correspond to an exact SEIR solution, but rather to a smoothed and constrained summary that is epidemiologically interpretable.
Moreover, once a representative curve is obtained, it is possible to perform parameter estimation in reverse, such as the transmission rate, under the assumption of the SEIR model.
This enables a form of parameter estimation, which synthesizes information across multiple trajectory sources and can yield interpretable summaries of complex or uncertain model outputs.

The choice of the power $q$ affects the summaries.
As established in Section~\ref{sec:cpfm-definition}, the minimizer under $q=1$ remains bounded even when one of the input object diverges, whereas no such bound is available for $q>1$.
For the consensus curve, we establish robustness in this sense in Section~\ref{sec:repcurve}.
This is consistent with the results in Sections~\ref{sec:simulation} and~\ref{sec:application}, where the standard Fr\'echet mean ($q=2$) yielded a longer estimate of the infectious period under the right-skewed distribution of $D_I$, while the Fr\'echet median ($q=1$) provided estimates closer to the input median.
Therefore, the Fr\'echet median may be a preferable choice when the collected trajectories are expected to be heterogeneous, such as in the early phase of an emerging infectious disease.

The use of the power Fr\'echet mean is not limited to trajectories arising from a single epidemic model.
Since our method is based on trajectories rather than parameter values, it is possible to integrate and summarize outputs from different types of infectious disease models, as long as their solutions can be defined in a function space.
This flexibility enables principled integration across model classes.

Finally, the introduction of a metric into the function space enables the use of tools from metric statistics.
Our formulation suggests that advanced tools such as Fr\'echet regression can be applied to the epidemic trajectories \citep{petersen2019frechet}.
This represents a notable expansion in the methodological toolkit available for infectious disease epidemiology.

In conclusion, we propose a method for summarizing multiple solutions from SEIR-type compartmental models on a functional space with partial mechanistic interpretability.
Our approach is valuable for integrating individual predictions from various research groups during the early phase of an emerging infectious disease, or for estimating a representative curve from trajectories across different geographical regions. 
Based on functional data analysis, we develop an optimization algorithm for constrained optimization.
Our method offers a novel approach to overcome modeling heterogeneity and parameter uncertainty in infectious disease modeling.


\section*{Funding}

This research was supported by AMED under Grant Number JP223fa627001 (UTOPIA AI Research Discovery Program) and JSPS KAKENHI Grant Number 26K02664.

\section*{Code Availability}

The Python implementation of the proposed method is available from the following GitHub repository: \url{https://github.com/t-yui/ConsensusEpidemicTrajectory}.

\section*{Disclosure statement}

The authors declare no conflicts of interest.
The authors used ChatGPT (OpenAI), Claude (Anthropic), and Gemini (Google) to assist with developing scripts for simulation and application, and editing the English language during the preparation of this manuscript.
The authors checked and edited the content and take full responsibility for this manuscript.

\appendix

\section{Proofs}
\label{sec:proof}

\subsection{Proof of Proposition~\ref{prop:existence_general}}

Before we begin the proof, we clarify a closedness condition on the set-valued map that is used in Proposition~\ref{prop:existence_general} (iii).

\begin{definition}[Weakly sequentially closed graph]
\label{def:wsc_graph}
Let $\Omega$ be a Hilbert space, let $\Theta\subset\mathbb R^{p}$, and let $\mathcal C:\Theta\rightrightarrows\Omega$ be a set-valued map.
We say that $\operatorname{Graph}(\mathcal C)$ is weakly sequentially closed if it is sequentially closed with respect to the product of the weak topology $\sigma(\Omega,\Omega^{\star})$ on $\Omega$ and the usual topology on $\Theta$.
Namely, if a sequence $((v_n,\theta_n) \mid n\in\mathbb N)\subset\operatorname{Graph}(\mathcal C)$ satisfies $v_n\rightharpoonup v$ in $\Omega$ and $\theta_n\to\theta$ in $\Theta$, then $(v,\theta)\in\operatorname{Graph}(\mathcal C)$.
\end{definition}

Here, $\Omega^{\star}$ is the dual space of $\Omega$, and $\rightharpoonup$ denotes weak convergence, namely, $f(v_n) \rightarrow f(v)$, for all linear functionals $f \in \Omega^{\star}$ (Proposition 3.5 of \cite{brezis2011functional}).

We then prove Proposition~\ref{prop:existence_general} as follows.

\begin{proof}[Proof of Proposition~\ref{prop:existence_general}.]
We prove (i), (ii), and (iii) separately.

\paragraph{Proof of (i)}
For each $j=1,\ldots,J$, the map $u \mapsto d_{\Omega}(u,u_j)^q$ is continuous on $\Omega$.
Therefore, $F_q$ is continuous.
Since $C$ is non-empty and compact, $F_q$ attains its minimum on $C$.
Thus, a minimizer $\hat{u}_q$ of \eqref{eq:const-power-frechet-q-mean} exists.

\paragraph{Proof of (ii)}
Define $m_q:=\inf_{u\in C} F_q(u)$ ($m_q < \infty$ because $C \neq \emptyset$) and take a minimizing sequence $(v_n |~n\in\mathbb N) \subset C$ satisfying $F_q(v_n) \to m_q$.
Since the convergent sequence $(F_q(v_n) |~ n \in \mathbb{N})$ is bounded, there exists $V<\infty$ such that $F_q(v_n)\leq V$ for all $n\in\mathbb N$.
Additionally, we have
\begin{align*}
    F_q(v_n)
    &\geq
    \frac{1}{J}\|v_n-u_1\|_{\Omega}^q
    \geq
    \frac{1}{J}
    \left(
        \|v_n\|_{\Omega}
        -
        \|u_1\|_{\Omega}
    \right)_+^q
    ,
\end{align*}
where $(x)_+:=\max(x,0)$.
Therefore, we obtain $\left(\|v_n\|_{\Omega}-\|u_1\|_{\Omega}\right)_{+}^{q} \leq J V$
and, hence,
\begin{align*}
    \left\|v_n\right\|_{\Omega}
    \leq
    \left\|u_1\right\|_{\Omega}
    +
    \left(J V\right)^{1/q}
    \quad
    \text{for all}
    ~
    n \in \mathbb{N}
    .
\end{align*}
This means $(v_n \mid n\in\mathbb N)$ is bounded in $\Omega$.
Since Hilbert space is reflexive, by Theorem 3.18 of \cite{brezis2011functional}, there exist a subsequence $(v_{n_{k}} |~k\in\mathbb N)$ and some $\hat{u}_q\in\Omega$ such that
\begin{align*}
    v_{n_{k}}
    \rightharpoonup
    \hat{u}_q
    .
\end{align*}
Since $C$ is closed and convex, it is weakly closed (Theorem 3.7 of \cite{brezis2011functional}).
Therefore, since $v_{n_k}\in C$ for every $k\in\mathbb N$, it follows from
$v_{n_k}\rightharpoonup\hat{u}_q$ that $\hat{u}_q \in C$.

For each $j=1,\ldots,J$, we have $v_{n_k}-u_j \rightharpoonup \hat{u}_q - u_j$. 
Then, the norm $\|\cdot\|_{\Omega}$ is weakly sequentially lower semicontinuous
(Proposition 3.5 of \cite{brezis2011functional}).
Thus, it follows that
\begin{align*}
    \left\|\hat{u}_q - u_j\right\|_{\Omega}
    \leq
    \liminf_{k \to \infty}\left\|v_{n_k} - u_j\right\|_{\Omega}
    .
\end{align*}
Since $t\mapsto t^{q}$ is continuous and nondecreasing on $t \in [0,\infty)$ for $q>0$, we have $\|\hat{u}_q-u_j\|_{\Omega}^{q}\leq\liminf_{k\to\infty}\|v_{n_k}-u_j\|_{\Omega}^{q}$.
Therefore, we obtain
\begin{align*}
    F_q(\hat{u}_q)
    &=
    \frac{1}{J}
    \sum_{j=1}^{J}
    \left\|\hat{u}_q-u_j\right\|_{\Omega}^{q}
    \\
    &\leq
    \frac{1}{J}
    \sum_{j=1}^{J}
    \liminf_{k\to\infty}
    \left\|v_{n_k}-u_j\right\|_{\Omega}^{q}
    \\
    &\leq
    \liminf_{k\to\infty}
    \frac{1}{J}
    \sum_{j=1}^{J}
    \left\|v_{n_k}-u_j\right\|_{\Omega}^{q}
    \\
    &=
    \liminf_{k\to\infty}
    F_q(v_{n_k})
    =
    m_q
    .
\end{align*}
Since $\hat{u}_q\in C$, by the definition of $m_q$, it follows that $F_q(\hat{u}_q) \geq m_q$.
Hence, $F_q(\hat{u}_q) = m_q$.
Thus, $\hat{u}_q$ is a minimizer of \eqref{eq:const-power-frechet-q-mean}.

\paragraph{Proof of (iii)}
Define $m_q := \inf_{(u,\theta)\in\operatorname{Graph}(\mathcal{C})} F_q(u,\theta)$ ($m_q<\infty$ because $\mathcal{C}(\theta)\neq\emptyset$ for each $\theta\in\Theta$ and $\Theta\neq\emptyset$) and take a minimizing sequence $((v_n,\theta_n) \mid n\in\mathbb N)\subset\operatorname{Graph}(\mathcal{C})$ satisfying $F_q(v_n,\theta_n)\to m_q$.
Since the convergent sequence $(F_q(v_n,\theta_n) \mid n\in\mathbb N)$ is bounded, there exists
$V<\infty$ such that $F_q(v_n,\theta_n)\leq V$ for all $n\in\mathbb N$.
Since $u_1$ is continuous on the compact set $\Theta$, we have $M := \sup_{\theta\in\Theta}\|u_1(\theta)\|_{\Omega} < \infty$.
In the same way as the proof of (ii), we obtain
\begin{align*}
    F_q(v_n,\theta_n)
    &\geq
    \frac{1}{J}\|v_n-u_1(\theta_n)\|_{\Omega}^q
    \geq
    \frac{1}{J}
    \left(
        \|v_n\|_{\Omega}
        -
        M
    \right)_+^q
    ,
\end{align*}
and, hence,
\begin{align*}
    \left\|v_n\right\|_{\Omega}
    \leq
    M
    +
    \left(J V\right)^{1/q}
    \quad
    \text{for all}
    ~
    n \in \mathbb{N}
    .
\end{align*}
This means $(v_n \mid n\in\mathbb N)$ is bounded in $\Omega$.

Since Hilbert space is reflexive, there exist a subsequence and some $\hat{u}_q\in\Omega$ where $v_{n_k} \rightharpoonup \hat{u}_q$.
Furthermore, since $\Theta\subset\mathbb R^{p}$ is compact, we can take a further subsequence, denoted by $(({v}_{n_k},{\theta}_{n_k}) \mid k\in\mathbb N)$, and some $\hat{\theta} \in \Theta$ such that
\begin{align*}
    v_{n_k}
    &\rightharpoonup
    \hat{u}_q
    ,
    \quad \text{and} \quad
    \theta_{n_k}
    \to
    \hat
    \theta
    .
\end{align*}
Since $v_{n_k}\in \mathcal{C}(\theta_{n_k})$ for every $k\in\mathbb N$, the weakly sequential closedness of $\operatorname{Graph}(\mathcal{C})$ yields $(\hat{u}_q,\hat\theta)\in\operatorname{Graph}(\mathcal{C})$.

For each $j=1,\ldots,J$, since $u_j$ is continuous, we have $u_j(\theta_{n_k}) \to u_j(\hat\theta)$ in $\Omega$, and strong convergence implies weak convergence (Theorem 3.5 of \cite{brezis2011functional}).
Therefore, we have $v_{n_k}-u_j(\theta_{n_k}) \rightharpoonup \hat{u}_q - u_j(\hat\theta)$.
Similarly to the proof of (ii), by the weak sequential lower semicontinuity of $\|\cdot\|_{\Omega}$ and the monotonicity of $t\mapsto t^{q}$ on $t\in[0,\infty)$, we have
\begin{align*}
    F_q(\hat{u}_q,\hat\theta)
    \leq
    \liminf_{k\to\infty} F_q(v_{n_k},\theta_{n_k})
    =
    m_q
    .
\end{align*}
We also have $F_q(\hat{u}_q,\hat\theta)\geq m_q$ because $(\hat{u}_q,\hat\theta)\in\operatorname{Graph}(\mathcal{C})$.
Therefore, we obtain $F_q(\hat{u}_q,\hat\theta) = m_q$, and, hence, $(\hat{u}_q,\hat\theta)$ is a minimizer of \eqref{eq:extended-const-power-frechet-q-mean}.
\end{proof}

\subsection{Proof of Proposition~\ref{prop:uniqueness}}

We prove Proposition~\ref{prop:uniqueness} as follows.

\begin{proof}[Proof of Proposition~\ref{prop:uniqueness}.]
Suppose that $\hat{u}_q$ and $\tilde{u}_q$ are both minimizers of \eqref{eq:const-power-frechet-q-mean} and let $m_q := F_q(\hat{u}_q) = F_q(\tilde{u}_q)$.
By Proposition~\ref{prop:existence_general}(ii), such $\hat{u}_q$ and $\tilde{u}_q$ exist.
Define
\begin{align*}
    w
    :=
    \frac{1}{2}\left(\hat{u}_q + \tilde{u}_q\right)
    ,
\end{align*}
which satisfies $w \in C$ because $C$ is convex.
For each $j=1,\ldots,J$, define $a_j := \|\hat{u}_q - u_j\|_{\Omega}$ and $b_j := \|\tilde{u}_q - u_j\|_{\Omega}$.
Then, by the triangle inequality, we have
\begin{align}
    \left\|w - u_j\right\|_{\Omega}
    &= 
    \frac{1}{2}\left(\left\|(\hat{u}_q - u_j) + (\tilde{u}_q - u_j)\right\|_{\Omega}\right)
    \nonumber
    \\
    &\leq
    \frac{1}{2}\left(\left\|\hat{u}_q - u_j\right\|_{\Omega} + \left\|\tilde{u}_q - u_j\right\|_{\Omega}\right)
    =
    \frac{1}{2}\left(a_j + b_j\right)
    ,
    \label{eq:midpoint_triangle}
\end{align}
and, since $t\mapsto t^{q}$ is nondecreasing and convex on $t \in [0,\infty)$ for $q>1$, we have
\begin{align}
    \left\|w - u_j\right\|_{\Omega}^{q}
    \leq
    \left(\frac{1}{2}\left(a_j + b_j\right)\right)^{q}
    \leq
    \frac{1}{2} \left(a_j^{q} + b_j^{q}\right)
    .
    \label{eq:midpoint_convexity}
\end{align}
Therefore, we obtain
\begin{align*}
    F_q(w)
    &=
    \frac{1}{J}\sum_{j=1}^{J} \left\|w - u_j\right\|_{\Omega}^{q}
    \\
    &\leq
    \frac{1}{2}\left(\frac{1}{J}\sum_{j=1}^{J}a_j^q + \frac{1}{J}\sum_{j=1}^{J}b_j^q\right) 
    \\
    &=
    \frac{1}{2}\left(F_q(\hat{u}_q) + F_q(\tilde{u}_q)\right)
    =
    m_q
    .
\end{align*}
Since $w \in C$, from the definition of $m_q$, it follows that $F_q(w) \geq m_q$.
Thus, we have $F_q(w) = m_q$.
Therefore, the inequalities in \eqref{eq:midpoint_triangle} and
\eqref{eq:midpoint_convexity} hold with equality for every $j=1,\ldots,J$.

We consider the second inequality in \eqref{eq:midpoint_convexity}.
Since $t\mapsto t^{q}$ is strictly convex on $t \in [0,\infty)$ for $q>1$, the equality
\begin{align*}
    \left(\frac{1}{2}\left(a_j + b_j\right)\right)^{q}
    =
    \frac{1}{2} \left(a_j^{q} + b_j^{q}\right)
    ,
    \quad \text{for}~j=1,\ldots,J,
\end{align*}
implies $a_j = b_j$ for $j=1,\ldots,J$.
Furthermore, using \eqref{eq:midpoint_triangle}, we obtain
\begin{align*}
    \left\|w - u_j\right\|_{\Omega}
    =
    a_j
    ,
    \quad
    \text{for}
    ~
    j = 1,\ldots,J
    .
\end{align*}
On the other hand, we have
\begin{align*}
    \left\|w - u_j\right\|_{\Omega}^{2}
    &= 
    \left(\frac{1}{2}\left(\left\|(\hat{u}_q - u_j) + (\tilde{u}_q - u_j)\right\|_{\Omega}\right)\right)^{2}
    \\
    &= 
    \frac{1}{4}\left(
    2\left\|\hat{u}_q - u_j\right\|_{\Omega}^2
    +
    2\left\|\tilde{u}_q - u_j\right\|_{\Omega}^2
    -
    \left\|(\hat{u}_q - u_j) - (\tilde{u}_q - u_j)\right\|_{\Omega}^{2}
    \right)
    \\
    &=
    \frac{1}{2}a_j^{2}
    +
    \frac{1}{2}b_j^{2}
    -
    \frac{1}{4}\left\|\hat{u}_q - \tilde{u}_q\right\|_{\Omega}^{2}
    .
\end{align*}
Since $a_j = b_j$ and $\|w - u_j\|_{\Omega} = a_j$, we obtain
\begin{align*}
    a_j^{2}
    =
    a_j^{2}
    -
    \frac{1}{4}\left\|\hat{u}_q - \tilde{u}_q\right\|_{\Omega}^{2}
    ,
\end{align*}
and, hence, $\|\hat{u}_q - \tilde{u}_q\|_{\Omega} = 0$.
Therefore, $\hat{u}_q = \tilde{u}_q$, and the minimizer of
\eqref{eq:const-power-frechet-q-mean} is unique.
\end{proof}

\subsection{Proof of Proposition~\ref{prop:robustness}}

We prove Proposition~\ref{prop:robustness} as follows.

\begin{proof}[Proof of Proposition~\ref{prop:robustness}.]
Fix an arbitrary $u_0 \in C$ and define
\begin{align*}
    G_n(u)
    :=
    \frac{1}{J}
    \left(
        \|u-u_J^{(n)}\|_{\Omega}
        +
        \sum_{j=1}^{J-1}\|u-u_j\|_{\Omega}
    \right)
\end{align*}
for $u \in C$.
By the definition of $\hat{u}_1^{(n)}$, we have $G_n(\hat{u}_1^{(n)}) \leq G_n(u_0)$, namely,
\begin{align*}
    \left\|\hat{u}_1^{(n)}-u_J^{(n)}\right\|_{\Omega}
    +
    \sum_{j=1}^{J-1}\left\|\hat{u}_1^{(n)}-u_j\right\|_{\Omega}
    \leq
    \left\|u_0-u_J^{(n)}\right\|_{\Omega}
    +
    \sum_{j=1}^{J-1}\left\|u_0-u_j\right\|_{\Omega}
    .
\end{align*}
By the triangle inequality, we have
\begin{align*}
    \left\|u_0-u_J^{(n)}\right\|_{\Omega}
    -
    \left\|\hat{u}_1^{(n)}-u_J^{(n)}\right\|_{\Omega}
    \leq
    \left\|\hat{u}_1^{(n)}-u_0\right\|_{\Omega}
    .
\end{align*}
Therefore, we obtain
\begin{align}
    \sum_{j=1}^{J-1}\left\|\hat{u}_1^{(n)}-u_j\right\|_{\Omega}
    \leq
    \left\|\hat{u}_1^{(n)}-u_0\right\|_{\Omega}
    +
    \sum_{j=1}^{J-1}\left\|u_0-u_j\right\|_{\Omega}
    .
    \label{eq:robust_reduced}
\end{align}
Since
\begin{align*}
     (J-1)\left\|\hat{u}_1^{(n)}\right\|_{\Omega}
    -
    \sum_{j=1}^{J-1}\left\|u_j\right\|_{\Omega}
    \leq \sum_{j=1}^{J-1}\left\|\hat{u}_1^{(n)}-u_j\right\|_{\Omega}
    ,
\end{align*}
and 
\begin{align*}
    \left\|\hat{u}_1^{(n)}-u_0\right\|_{\Omega}
    \leq 
    \left\|\hat{u}_1^{(n)}\right\|_{\Omega}
    +
    \left\|u_0\right\|_{\Omega}
    ,
\end{align*}
by the triangle inequality, \eqref{eq:robust_reduced} yields
\begin{align*}
    (J-1)\left\|\hat{u}_1^{(n)}\right\|_{\Omega}
    -
    \sum_{j=1}^{J-1}\left\|u_j\right\|_{\Omega}
    \leq
    \left\|\hat{u}_1^{(n)}\right\|_{\Omega}
    +
    \left\|u_0\right\|_{\Omega}
    +
    \sum_{j=1}^{J-1}\|u_0-u_j\|_{\Omega}
    .
\end{align*}
Since $J \geq 3$, i.e., $J-2 \geq 1$, it follows that
\begin{align*}
    \left\|\hat{u}_1^{(n)}\right\|_{\Omega}
    \leq
    \frac{1}{J-2}
    \left(
        \left\|u_0\right\|_{\Omega}
        +
        \sum_{j=1}^{J-1}\|u_0-u_j\|_{\Omega}
        +
        \sum_{j=1}^{J-1}\left\|u_j\right\|_{\Omega}
    \right)
    \quad
    \text{for}
    ~
    n\in\mathbb N
    .
\end{align*}
This upper bound of $\left\|\hat{u}_1^{(n)}\right\|_{\Omega}$ depends only on $u_1,\ldots,u_{J-1}$ and on the fixed point $u_0 \in C$, and it is independent of $n$.
Therefore, we have $\sup_{n\in\mathbb N}\|\hat{u}_1^{(n)}\|_{\Omega}<\infty$.
\end{proof}

\subsection{Proof of Corollary~\ref{cor:existence_consensus_curve}}

Define the objective function
\begin{align*}
    Q_q(y,\boldsymbol{\delta})
    :=
    \frac{1}{J}
    \sum_{j=1}^{J}
    d\left(
        y,
        \mathcal T_{\delta_j}y_j
    \right)^q
    .
\end{align*}

We first establish the continuity of the shift operator in $\mathcal Y$.

\begin{lemma}
\label{lem:shift_continuity}
For any $y\in\mathcal Y$, the map $\mathbb{R} \ni \eta \mapsto \mathcal{T}_{\eta}y \in \mathcal{Y}$
is continuous.
\end{lemma}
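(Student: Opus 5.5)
Since the squared norm on $\mathcal Y$ is the sum of the $H^1([0,T])$ norms of the two components, it suffices to prove that for a single scalar $u\in H^1([0,T])$ the map $\eta\mapsto u^{\mathrm{ext}}(\cdot+\eta)|_{[0,T]}$ is continuous into $H^1([0,T])$. We will show that $\|\mathcal T_{\eta}u-\mathcal T_{\eta_0}u\|_{H^1}\to 0$ as $\eta\to\eta_0$, treating the $L^2$ part and the derivative part separately.

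First we record that the constant extension $u^{\mathrm{ext}}$ lies in $H^1_{\mathrm{loc}}(\mathbb R)$. It is continuous, because we use the continuous representative from Theorem 8.2 of \cite{brezis2011functional}. Its weak derivative is $(u^{\mathrm{ext}})'=u'\mathbf 1_{[0,T]}$, which we can check either from the absolute-continuity identity $u^{\mathrm{ext}}(t_2)-u^{\mathrm{ext}}(t_1)=\int_{t_1}^{t_2}u'\mathbf 1_{[0,T]}\,ds$ or by integrating by parts against test functions. Consequently, on $[0,T]$ we have $(\mathcal T_\eta u)'(t)=g(t+\eta)$, where $g:=u'\mathbf 1_{[0,T]}\in L^2(\mathbb R)$. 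This confirms $\mathcal T_\eta u\in H^1([0,T])$, as the paper asserts, and reduces the derivative part to translations of one fixed function $g\in L^2(\mathbb R)$.

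For the derivative part we use continuity of translation in $L^2(\mathbb R)$:
\[
\int_0^T|g(t+\eta)-g(t+\eta_0)|^2dt\le\|g(\cdot+\eta-\eta_0)-g\|_{L^2(\mathbb R)}^2\to0 .
\]
This follows by approximating $g$ in $L^2(\mathbb R)$ by a compactly supported continuous function (Corollary 4.23 / Lemma 4.3 of \cite{brezis2011functional}) and using an $\varepsilon/3$ argument. For the $L^2$ part we use uniform continuity. The function $u^{\mathrm{ext}}$ is continuous, constant outside $[0,T]$, and hence uniformly continuous on $\mathbb R$. Therefore $\sup_{t\in[0,T]}|u^{\mathrm{ext}}(t+\eta)-u^{\mathrm{ext}}(t+\eta_0)|\to0$, and integrating over $[0,T]$ gives $L^2$ convergence, bounded by $T$ times the square of that supremum. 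Adding the two parts, multiplied by $1$ and $\rho$ respectively, and then summing over the components $E$ and $I$ yields $d(\mathcal T_\eta y,\mathcal T_{\eta_0}y)\to0$.

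The main obstacle is the derivative step. Pointwise arguments fail there because $u'$ is only in $L^2$, and the extension introduces jumps of the derivative at $0$ and $T$. We therefore have to identify the weak derivative of the extension carefully and then invoke continuity of translation in $L^2(\mathbb R)$ via density, rather than any pointwise limit.
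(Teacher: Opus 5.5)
Your proposal is correct and follows essentially the same route as the paper. Both reduce to a single component, identify the weak derivative of the constant extension as $u'\mathbf{1}_{(0,T)}\in L^2(\mathbb{R})$, bound the $L^2$ part by $T$ times the squared supremum using uniform continuity of $u^{\mathrm{ext}}$, and handle the derivative part by continuity of translations in $L^2(\mathbb{R})$; your explicit check of the weak derivative via integration by parts is a detail the paper only asserts.
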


\begin{proof}[Proof of Lemma~\ref{lem:shift_continuity}.]
Let $y=(E,I)^{\top} \in \mathcal{Y}$ and fix $\eta_0 \in \mathbb{R}$.
It is sufficient to prove the statement for each component of $(E,I)$, and we prove the statement for $E$.

As stated in Section \ref{sec:deffuncspace}, $H^1([0,T])$ is continuously embedded into $C([0,T])$, and hence we identify $E^{\mathrm{ext}} \in C(\mathbb{R})$.
$E^{\mathrm{ext}}$ is constant on $(-\infty,0)$ and $(T,\infty)$, and continuous on compact set $[0,T]$.
Since continuous functions on a compact set are bounded and uniformly continuous, and constant functions are also bounded and uniformly continuous, $E^{\mathrm{ext}}$ is bounded and uniformly continuous on $\mathbb{R}$.
The weak derivative of $E^{\mathrm{ext}}$ is
\begin{align*}
    \left(E^{\mathrm{ext}}\right)'
    =
    E'\mathbf{1}_{(0,T)}
    \in
    L^{2}(\mathbb{R})
    ,
\end{align*}
where
\begin{align*}
    \mathbf{1}_{(0, T)}(t)
    :=
    \begin{cases}
        1, & t \in(0, T)
        ,
        \\
        0, & t \notin(0, T)
        .
    \end{cases}
\end{align*}
By the definition of the shift operator, we have $(\mathcal T_{\eta}E)(t)=E^{\mathrm{ext}}(t+\eta)$ for all $t\in[0,T]$, and $(\mathcal T_{\eta}E)'(t)=(E^{\mathrm{ext}})'(t+\eta)$ for almost every $t\in[0,T]$.

We evaluate the two parts of the norm separately.
For the function $E^{\mathrm{ext}}$ part, by the uniform continuity on $\mathbb{R}$, we have
\begin{align*}
    &\int_{0}^{T}
    \left|
        E^{\mathrm{ext}}(t+\eta)-E^{\mathrm{ext}}(t+\eta_0)
    \right|^{2}
    dt
    \\
    &\quad
    \leq
    T \sup_{s\in\mathbb R}
    \left|
        E^{\mathrm{ext}}(s+\eta-\eta_0)-E^{\mathrm{ext}}(s)
    \right|^{2}
    \to
    0
\end{align*}
as $\eta \to \eta_0$.
For the derivative $\left(E^{\mathrm{ext}}\right)'$ part, we have
\begin{align*}
    &\int_{0}^{T}
    \left|
        \left(E^{\mathrm{ext}}\right)'(t+\eta)
        -
        \left(E^{\mathrm{ext}}\right)'(t+\eta_0)
    \right|^{2}
    dt
    \\
    &\quad
    \leq
    \int_{\mathbb R}
    \left|
        \left(E^{\mathrm{ext}}\right)'(s+\eta-\eta_0)
        -
        \left(E^{\mathrm{ext}}\right)'(s)
    \right|^{2}
    ds
    \to
    0
    ,
\end{align*}
as $\eta \to \eta_0$ (Lemma 4.3 of \cite{brezis2011functional}).
Combining these results for $E$ and $I$, we have
\begin{align*}
    \left\|
        \mathcal T_{\eta}y-\mathcal T_{\eta_0}y
    \right\|_{\mathcal Y}
    \to
    0
    \quad
    \text{as}
    ~
    \eta\to\eta_0
    .
\end{align*}
\end{proof}

We next establish compactness of the parameter space, weak sequential closedness of the feasible set, and that the feasible set is non-empty.

\begin{lemma}
\label{lem:F_weakly_closed}
The following statements hold:
\begin{enumerate}
    \item[(i)] $[\sigma_{\min},\sigma_{\max}] \times [\gamma_{\min},\gamma_{\max}] \times \Delta$ is a compact set.
    \item[(ii)] Take sequences $(y_n |~ n \in \mathbb{N}) \subset \mathcal{Y}$, $(\sigma_n |~ n \in \mathbb{N}) \subset [\sigma_{\min},\sigma_{\max}]$, $(\gamma_n |~ n \in \mathbb{N}) \subset [\gamma_{\min},\gamma_{\max}]$, and $(\boldsymbol{\delta}_n |~ n \in \mathbb{N}) \subset \Delta$.
    Suppose that $y_n \rightharpoonup y$ in $\mathcal Y$, $\sigma_n \to \sigma$, $\gamma_n \to \gamma$, $\boldsymbol{\delta}_n \to \boldsymbol{\delta}$, and $(y_n,\sigma_n,\gamma_n,\boldsymbol{\delta}_n) \in \mathcal F$ for all $n\in\mathbb N$.
    Then, we have $(y,\sigma,\gamma,\boldsymbol{\delta}) \in \mathcal F$.
    \item[(iii)] $\mathcal{F} \neq \emptyset$.
\end{enumerate}
\end{lemma}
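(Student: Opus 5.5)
Part (i) is immediate from Heine--Borel in a finite-dimensional space. The intervals $[\sigma_{\min},\sigma_{\max}]$ and $[\gamma_{\min},\gamma_{\max}]$ are closed and bounded. The set $\Delta$ is the intersection of the closed bounded box $[-\delta_{\max},\delta_{\max}]^J$ with the closed hyperplane $\{\sum_j\delta_j=0\}$, the preimage of $\{0\}$ under a continuous linear map. Hence the product is a closed and bounded subset of $\mathbb R^{2+J}$, and therefore compact.

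For (ii), the key tool is the compact embedding of $H^1([0,T])$ into $C([0,T])$ (Rellich--Kondrachov in one dimension, Theorem 8.8 of \cite{brezis2011functional}). Since $y_n=(E_n,I_n)\rightharpoonup y=(E,I)$ in $\mathcal Y$, each component converges weakly in $H^1$. Weakly convergent sequences are bounded, so the compact embedding gives $E_n\to E$ and $I_n\to I$ uniformly on $[0,T]$; the limit is identified because the embedding is continuous and hence weak-to-weak continuous. In addition, $I_n'\rightharpoonup I'$ weakly in $L^2$, because $v\mapsto\int_0^T v'g\,dt$ is a bounded linear functional on $H^1$ for each $g\in L^2$.

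The inequality constraints then pass to the limit pointwise. From $E_n\ge0$ and $I_n\ge0$ we get $E\ge 0$ and $I\ge0$. For the population constraint, $\gamma_n\int_0^t I_n\,ds\to\gamma\int_0^t I\,ds$ uniformly, using uniform convergence of $I_n$ together with $\gamma_n\to\gamma$. Hence $N-E-I-\gamma\int_0^\cdot I\ge0$.

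For the ODE constraint, $\sigma_nE_n-\gamma_nI_n\to\sigma E-\gamma I$ uniformly, hence strongly in $L^2$. Since $I_n'=\sigma_nE_n-\gamma_nI_n$, the weak $L^2$ limit $I'$ must coincide with the strong limit, giving $I'=\sigma E-\gamma I$ a.e. This identity then holds everywhere after taking continuous representatives, because the right-hand side is continuous and $I$ is then $C^1$. Closedness of the parameter sets from (i) gives $\sigma$, $\gamma$ and $\boldsymbol\delta$ in their ranges, so $(y,\sigma,\gamma,\boldsymbol\delta)\in\mathcal F$.

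For (iii), I would exhibit an explicit feasible point. The simplest choice is $E\equiv0$, $I\equiv0$, $\boldsymbol\delta=\boldsymbol 0$ and any admissible $\sigma,\gamma$. This satisfies nonnegativity, gives $I'=0=\sigma E-\gamma I$, and gives $N-0-0-0=N\ge0$. If one prefers a nontrivial witness, an actual SEIR solution with $\sigma,\gamma$ in range also works, since its $(E,I)$ satisfies the $I$-equation and $N-E-I-R=S\ge0$. The trivial point suffices.

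The main obstacle I anticipate is step (ii): justifying that weak $H^1$ convergence upgrades to uniform convergence, and that the nonlinear bilinear terms $\sigma_nE_n$ and $\gamma_n\int I_n$ converge. The compact embedding handles both. The remaining care is in identifying the weak limit of $I_n'$ with the strong limit of the right-hand side, which follows from uniqueness of weak limits in $L^2$.
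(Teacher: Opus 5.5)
Your proof is correct. Parts (i) and (iii) match the paper: a closed subset of a compact box, and the zero trajectory as witness. The difference is in (ii).

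You invoke the compact embedding $H^1([0,T])\hookrightarrow C([0,T])$ (Rellich--Kondrachov) to upgrade $y_n\rightharpoonup y$ to uniform convergence. The paper never uses compactness, only the continuous embedding.

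\begin{itemize}
\item For the inequalities, the paper notes that $u\mapsto u(t)$ and $u\mapsto\int_0^t u\,ds$ are bounded linear functionals on $H^1([0,T])$. Weak convergence therefore already gives $E_n(t)\to E(t)$, $I_n(t)\to I(t)$ and $\int_0^t I_n\,ds\to\int_0^t I\,ds$ for each fixed $t$. Together with $\gamma_n\to\gamma$, this passes the pointwise inequality constraints to the limit.
\item For the ODE, the paper stays with weak limits. It uses $I_n'\rightharpoonup I'$ and $\sigma_nE_n-\gamma_nI_n\rightharpoonup\sigma E-\gamma I$ in $L^2([0,T])$; the latter holds because the scalars converge and weakly convergent sequences are bounded. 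Uniqueness of weak limits then finishes.
\end{itemize}

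The paper's route is more economical. It shows that only continuity of point evaluation and linearity of the constraints in $y$ (for fixed parameters) are needed.

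Your route costs a compactness theorem but gives strong convergence directly. That would make it immediate to handle constraints nonlinear in $y$, such as the $SI$ term of the full SEIR system. There the paper's weak-$L^2$ argument for the ODE would not suffice as written.

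Your extra observation that $I\in C^1$, so the ODE holds everywhere, is a harmless strengthening. The paper states the identity in $L^2([0,T])$.
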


\begin{proof}[Proof of Lemma~\ref{lem:F_weakly_closed}.]
We prove (i), (ii), and (iii) separately.

\paragraph{Proof of (i)}
Since $\Delta$ is a closed subset of the compact set
$[-\delta_{\max},\delta_{\max}]^J$, the product set $[\sigma_{\min},\sigma_{\max}] \times [\gamma_{\min},\gamma_{\max}] \times \Delta$ is compact.
That is, (i) holds.

\paragraph{Proof of (ii)}
Let $y_n = (E_n, I_n)^\top$ and $y = (E, I)^\top$.
Since $y_n \rightharpoonup y$ in $\mathcal{Y}$, we have $E_n \rightharpoonup E$ in $H^1([0,T])$ and $I_n \rightharpoonup I$ in $H^1([0,T])$.
As stated in Section \ref{sec:deffuncspace}, $H^1([0,T])$ is continuously embedded into $C([0,T])$.
For $u \in H^1([0,T])$, let $\operatorname{ev}_t(u)=u(t)$ denote point evaluation at each fixed $t\in[0,T]$.
Since $\operatorname{ev}_t$ is a continuous linear functional, by the definition of weak convergence, we have $E_n(t) = \operatorname{ev}_t(E_n) \to \operatorname{ev}_t(E) = E(t)$ for every $t\in[0,T]$.
Similarly, we obtain $I_n(t) \to I(t)$ for every $t\in[0,T]$.
Since $E_n(t)\geq0$ and $I_n(t)\geq0$, we have $E(t) \geq 0$ and $I(t) \geq 0$ for every $t\in[0,T]$.
For every fixed $t\in[0,T]$, the map
\begin{align*}
    H^1([0,T])
    \ni
    u
    \mapsto
    \int_0^t u(s)\,ds
    \in
    \mathbb R
    ,
\end{align*}
is also a continuous linear functional.
Hence, we have
\begin{align*}
    \int_0^t I_n(s)\,ds
    \to
    \int_0^t I(s)\,ds
    ,
\end{align*}
for every $t\in[0,T]$.
Similarly, since
\begin{align*}
    N
    -
    E_n(t)
    -
    I_n(t)
    -
    \gamma_n
    \int_0^t I_n(s)\,ds
    \geq
    0
    ,
\end{align*}
for every $t\in[0,T]$ and $\gamma_n\to\gamma$, we have
\begin{align*}
    N
    -
    E(t)
    -
    I(t)
    -
    \gamma
    \int_0^t I(s)\,ds
    \geq
    0
    ,
\end{align*}
for every $t\in[0,T]$.
For the differential equation constraints, since $I_n\rightharpoonup I$ in $H^1([0,T])$, we have $I_n' \rightharpoonup I'$ in $L^2([0,T])$.
Similarly, $E_n \rightharpoonup E$ in $L^2([0,T])$ and $I_n \rightharpoonup I$ in $L^2([0,T])$.
Since $\sigma_n\to\sigma$ and $\gamma_n\to\gamma$, it follows that $\sigma_n E_n - \gamma_n I_n \rightharpoonup \sigma E - \gamma I$ in $L^2([0,T])$.
Since $I_n' = \sigma_n E_n - \gamma_n I_n$ in $L^2([0,T])$ for every $n \in \mathbb{N}$, we have
\begin{align*}
    I'
    =
    \sigma E
    -
    \gamma I
    ,
\end{align*}
in $L^2([0,T])$.

Finally, since $\boldsymbol{\delta}_n\in\Delta$ for every $n \in \mathbb{N}$, $\boldsymbol{\delta}_n\to\boldsymbol{\delta}$, and $\Delta$ is a closed set, we obtain
$\boldsymbol{\delta} \in \Delta$.
Similarly, we have $\sigma \in [\sigma_{\min},\sigma_{\max}]$ and $\gamma \in [\gamma_{\min},\gamma_{\max}]$.

Therefore, we have $(y,\sigma,\gamma,\boldsymbol{\delta}) \in \mathcal F$.

\paragraph{Proof of (iii)}
Take arbitrary $\sigma_0 \in [\sigma_{\min},\sigma_{\max}]$ and $\gamma_0 \in [\gamma_{\min},\gamma_{\max}]$.
Let $y_0 := (0,0)^\top \in \mathcal Y$ and $\boldsymbol{\delta}_0 := (0,\ldots,0) \in \Delta$.
Then we have
\begin{align*}
    \left\{\begin{array}{l}
        E_0 = 0, \quad I_0 = 0,
        \\
        N - E_0 - I_0 - \gamma_0 \int_0^\cdot I_0(s)\,ds = N \geq
        0
        ,
        \\
        I_0' = 0 = \sigma_0 E_0 - \gamma_0 I_0
        .
    \end{array}\right.
\end{align*}
Therefore, it holds that $(y_0,\sigma_0,\gamma_0,\boldsymbol{\delta}_0) \in \mathcal F$ and hence $\mathcal F\neq\emptyset$.
\end{proof}

We now prove Corollary~\ref{cor:existence_consensus_curve}.

\begin{proof}[Proof of Corollary~\ref{cor:existence_consensus_curve}.]
We apply Proposition~\ref{prop:existence_general}(iii) with $\Omega = \mathcal Y$, $\Theta = [\sigma_{\min},\sigma_{\max}] \times [\gamma_{\min},\gamma_{\max}] \times \Delta$, $\theta = (\sigma,\gamma,\boldsymbol{\delta})$, $u_j(\theta) := \mathcal T_{\delta_j} y_j$, and $\mathcal{C}(\theta) := \left\{y \in \mathcal Y \mid (y,\sigma,\gamma,\boldsymbol{\delta}) \in \mathcal F\right\}$, for $j=1,\ldots,J$.
Then, $\operatorname{Graph}(\mathcal{C}) = \mathcal F$ and $F_q(y,\theta) = Q_q(y,\boldsymbol{\delta})$.
We verify the assumptions of Proposition~\ref{prop:existence_general}(iii) as follows.

First, as stated in Section~\ref{sec:deffuncspace}, $\mathcal Y$ is a Hilbert space with $d(y_1,y_2) = \|y_1-y_2\|_{\mathcal Y}$.
By Lemma~\ref{lem:F_weakly_closed}(i), $\Theta$ is a compact set, and it is non-empty because $\boldsymbol{0} \in \Delta$.
Second, by Lemma~\ref{lem:F_weakly_closed}(iii), we have $\mathcal{F} \neq \emptyset$.
Moreover, the trajectory $y_0 := (0,0)^\top$ taken in the proof of Lemma~\ref{lem:F_weakly_closed}(iii) satisfies $y_0 \in \mathcal{C}(\theta)$ for every $\theta \in \Theta$, since the constraints of $\mathcal{F}$ hold for $y_0$ with any $(\sigma,\gamma) \in [\sigma_{\min},\sigma_{\max}] \times [\gamma_{\min},\gamma_{\max}]$ and any $\boldsymbol{\delta} \in \Delta$.
Therefore, $\mathcal{C}$ has non-empty values.
Third, for each $j=1,\ldots,J$, the map $\theta \mapsto \delta_j$ is continuous, and the map $\eta \mapsto \mathcal T_{\eta} y_j$ is continuous by Lemma~\ref{lem:shift_continuity}.
Therefore, $\theta \mapsto u_j(\theta) = \mathcal T_{\delta_j} y_j$ is continuous on $\Theta$.
Fourth, by Lemma~\ref{lem:F_weakly_closed}(ii), $\operatorname{Graph}(\mathcal{C}) = \mathcal F$ is weakly sequentially closed as defined in Definition~\ref{def:wsc_graph}.

Since all the assumptions of Proposition~\ref{prop:existence_general}(iii) are satisfied, there exists a minimizer $(\hat y_q, \hat\theta) = (\hat y_q, \hat\sigma, \hat\gamma, \hat{\boldsymbol{\delta}}) \in \operatorname{Graph}(\mathcal{C}) = \mathcal F$ of $F_q(y,\theta) = Q_q(y,\boldsymbol{\delta})$.
Thus, a minimizer of \eqref{eq:joint_shift_frechet} exists.
\end{proof}

\subsection{Proof of Corollary~\ref{cor:robustness_consensus_curve}}

We prove Corollary~\ref{cor:robustness_consensus_curve} as follows.

\begin{proof}[Proof of Corollary~\ref{cor:robustness_consensus_curve}.]
Fix $n \in \mathbb{N}$ and let $y_0 := (0,0)^\top \in \mathcal Y$.
As shown in the proof of Lemma~\ref{lem:F_weakly_closed}(iii), we have
$(y_0,\hat\sigma^{(n)},\hat\gamma^{(n)},\hat{\boldsymbol\delta}^{(n)}) \in \mathcal F$.
Therefore, similarly to the proof of Proposition \ref{prop:robustness}, by the definition of $\hat y^{(n)}$, we have
\begin{align*}
    \left\|\hat y^{(n)}-\mathcal T_{\hat\delta_J^{(n)}}y_J^{(n)}\right\|_{\mathcal Y}
    +
    \sum_{j=1}^{J-1}\left\|\hat y^{(n)}-\mathcal T_{\hat\delta_j^{(n)}}y_j\right\|_{\mathcal Y}
    \leq
    \left\|\mathcal T_{\hat\delta_J^{(n)}}y_J^{(n)}\right\|_{\mathcal Y}
    +
    \sum_{j=1}^{J-1}\left\|\mathcal T_{\hat\delta_j^{(n)}}y_j\right\|_{\mathcal Y}
    .
\end{align*}
In the same way as the proof of Proposition~\ref{prop:robustness} with $u_0 = y_0$, we obtain
\begin{align*}
    (J-2)\left\|\hat y^{(n)}\right\|_{\mathcal Y}
    \leq
    2\sum_{j=1}^{J-1}\left\|\mathcal T_{\hat\delta_j^{(n)}}y_j\right\|_{\mathcal Y}
    .
\end{align*}
Since the map $\eta \mapsto \mathcal T_{\eta}y_j$ is continuous by Lemma~\ref{lem:shift_continuity} and $[-\delta_{\max},\delta_{\max}]$ is compact, we have
\begin{align*}
    M_j
    :=
    \sup_{|\eta|\leq\delta_{\max}}\left\|\mathcal T_{\eta}y_j\right\|_{\mathcal Y}
    < \infty
    ,
    \quad
    \text{for}~j=1,\ldots,J-1
    .
\end{align*}
Therefore, since $\hat\delta_j^{(n)} \in [-\delta_{\max},\delta_{\max}]$ and $J-2\geq 1$, it follows that
\begin{align*}
    \left\|\hat y^{(n)}\right\|_{\mathcal Y}
    \leq
    \frac{2}{J-2}\sum_{j=1}^{J-1}M_j
    \quad
    \text{for}
    ~
    n\in\mathbb N
    .
\end{align*}
This upper bound does not depend on $n$, and, hence, we have $\sup_{n\in\mathbb N}\|\hat y^{(n)}\|_{\mathcal Y}<\infty$.
\end{proof}

\section{Block Optimization for $\boldsymbol{c}$ and $(\sigma,\gamma)$}
\label{sec:simplified_optimization}

For fixed $(\sigma,\gamma)$, the constraints \eqref{eq:bspline_constraints} are linear constraints in $\boldsymbol{c}$.
Hence, the optimization problem \eqref{eq:rep_update_given_shift} reduces to a quadratic programming (QP) problem when $q=2$, and to a second-order cone programming (SOCP) problem when $q=1$.
This eliminates the need to rely on general sequential quadratic programming (SQP) solvers.
In this subsection, we describe a two-level algorithm that (i) solves the inner problem by QP/SOCP and (ii) updates $(\sigma,\gamma)$.

\paragraph{(i) Convex Program for Fixed $(\sigma, \gamma)$.}
Fix $(\sigma,\gamma)$.
Given $\boldsymbol{\delta}^{(r)}$ and the coefficient vectors $\{\boldsymbol{c}_j\}_{j=1}^J$, consider the optimization problem:
\begin{align}
    \hat{\boldsymbol{c}}(\sigma,\gamma)
    :=
    \argmin_{\boldsymbol{c}\in\mathbb{R}^{2K}}
    \left\{
        \frac{1}{J}\sum_{j=1}^J D(\boldsymbol{c},\boldsymbol{c}_j)^q
    \right\}
    \quad
    \text{s.t.}\quad\text{constraints}~\eqref{eq:bspline_constraints}.
    \label{eq:inner_problem_fixed_sigma_gamma}
\end{align}
We now describe specific formulations for (1) $q=1$, (2) $q=2$, and (3) $q \neq 1$, $q \neq 2$.

\begin{itemize}
    \item[(P1)]
    When $q=1$, the objective function in \eqref{eq:inner_problem_fixed_sigma_gamma} is non-differentiable at $\boldsymbol{c}=\boldsymbol{c}_j$.
    Due to the positive semi-definiteness of $\mathbf{G}^{\mathcal{Y}}$, there exists a matrix $\mathbf{L}$ satisfying $\mathbf{G}^{\mathcal{Y}}=\mathbf{L}^\top\mathbf{L}$.
    Then, we have $D(\boldsymbol{c},\boldsymbol{c}_j)=\|\mathbf{L}(\boldsymbol{c}-\boldsymbol{c}_j)\|_2$, and \eqref{eq:inner_problem_fixed_sigma_gamma} has the SOCP formulation:
    \begin{align*}
        \argmin_{\boldsymbol{c}\in\mathbb{R}^{2K},\ \boldsymbol{u}\in\mathbb{R}^J}
        \ \frac{1}{J}\sum_{j=1}^J u_j
        \quad
        \text{s.t.}\quad
        \|\mathbf{L}(\boldsymbol{c}-\boldsymbol{c}_j)\|_2 \leq u_j,\ \text{for}~j=1,\ldots,J,
        \quad
        \text{constraints}~\eqref{eq:bspline_constraints}
        .
    \end{align*}
    \item[(P2)]
    When $q=2$, the objective function in \eqref{eq:inner_problem_fixed_sigma_gamma} becomes quadratic:
    \begin{align*}
        \frac{1}{J}\sum_{j=1}^J D(\boldsymbol{c},\boldsymbol{c}_j)^2
        =
        \frac{1}{J}\sum_{j=1}^J
        (\boldsymbol{c}-\boldsymbol{c}_j)^\top \mathbf{G}^{\mathcal{Y}}(\boldsymbol{c}-\boldsymbol{c}_j)
        =
        (\boldsymbol{c}-\bar{\boldsymbol{c}})^\top \mathbf{G}^{\mathcal{Y}}(\boldsymbol{c}-\bar{\boldsymbol{c}})
        +
        \text{const}
        ,
    \end{align*}
    where $\bar{\boldsymbol{c}}:={\sum_{j=1}^J \boldsymbol{c}_j} / J$.
    Therefore, \eqref{eq:inner_problem_fixed_sigma_gamma} is equivalent to the quadratic program
    \begin{align*}
        \hat{\boldsymbol{c}}(\sigma,\gamma)
        =
        \argmin_{\boldsymbol{c}\in\mathbb{R}^{2K}}
        (\boldsymbol{c}-\bar{\boldsymbol{c}})^\top \mathbf{G}^{\mathcal{Y}}(\boldsymbol{c}-\bar{\boldsymbol{c}})
        \quad
        \text{s.t.}\quad\text{constraints}~\eqref{eq:bspline_constraints}.
    \end{align*}
    \item[(P3)]
    For $q \neq 1$ and $q \neq 2$, we can apply an IRLS (iteratively reweighted least squares)-type method that reduces each update to a QP.
    Given an iterate $\boldsymbol{c}^{(s)}$, define
    \begin{align*}
        w_j^{(s)}
        :=
        \left\{ D(\boldsymbol{c}^{(s)},\boldsymbol{c}_j)^2 + \varepsilon \right\}^{\frac{q}{2}-1},
        \quad \text{for}~j=1,\ldots,J,
    \end{align*}
    with a small $\varepsilon>0$.
    We obtain the update $\boldsymbol{c}^{(s+1)}$ by the weighted QP:
    \begin{align*}
        \boldsymbol{c}^{(s+1)}
        =
        \argmin_{\boldsymbol{c}\in\mathbb{R}^{2K}}
        \sum_{j=1}^J w_j^{(s)}
        (\boldsymbol{c}-\boldsymbol{c}_j)^\top \mathbf{G}^{\mathcal{Y}}(\boldsymbol{c}-\boldsymbol{c}_j)
        \quad
        \text{s.t.}\quad\text{constraints}~\eqref{eq:bspline_constraints}
    \end{align*}
\end{itemize}

\paragraph{(ii) Profile Optimization for Updating $(\sigma, \gamma)$.}
We next profile out $\boldsymbol{c}$ and optimize in the two-dimensional parameter space.
Define the profile objective
\begin{align*}
    V_q(\sigma,\gamma)
    :=
    \min_{\boldsymbol{c}\in\mathbb{R}^{2K}}
    \left\{
        \frac{1}{J}\sum_{j=1}^J D(\boldsymbol{c},\boldsymbol{c}_j)^q
    \right\}
    \quad
    \text{s.t.}\quad\text{constraints}~
    \eqref{eq:bspline_constraints},
\end{align*}
and obtain
\begin{align}
    (\hat{\sigma},\hat{\gamma})
    &=
    \argmin_{(\sigma,\gamma) \in [\sigma_{\min},\sigma_{\max}] \times [\gamma_{\min},\gamma_{\max}]} V_q(\sigma,\gamma),
    \label{eq:outer_profile_minimization}
    \\
    \hat{\boldsymbol{c}}
    &:=
    \hat{\boldsymbol{c}}(\hat{\sigma},\hat{\gamma}).
    \nonumber
\end{align}

We note that the population constraint yields an explicit upper bound for $\gamma$.
Let
\begin{align*}
    E_m := (\mathbf{B}\boldsymbol{c}^{(E)})_m,
    \quad
    I_m := (\mathbf{B}\boldsymbol{c}^{(I)})_m,
    \quad
    \mathcal{I}_m := (\mathbf{\Phi}\boldsymbol{c}^{(I)})_m = \int_{0}^{t_m} I(s)\,ds.
\end{align*}
If $\mathcal{I}_m>0$, then $S(t_m)=N-E_m-I_m-\gamma \mathcal{I}_m\geq 0$ implies
\begin{align*}
    \gamma \leq \frac{N-E_m-I_m}{\mathcal{I}_m}
    .
\end{align*}
Consequently, any feasible solution satisfies
\begin{align}
    \gamma \leq
    \min_{m:~\mathcal{I}_m>0}
    \frac{N-(\mathbf{B}\boldsymbol{c}^{(E)})_m-(\mathbf{B}\boldsymbol{c}^{(I)})_m}{(\mathbf{\Phi}\boldsymbol{c}^{(I)})_m}.
    \label{eq:gamma_bound_global}
\end{align}
In the outer profile optimization for $(\sigma,\gamma)$, we use \eqref{eq:gamma_bound_global} to restrict the candidate range of $\gamma$.

When $q \neq 1$, to solve the outer minimization problem \eqref{eq:outer_profile_minimization} efficiently using gradient-based optimization methods such as L-BFGS-B, we require the gradient of the profile objective function $\nabla V_q(\sigma, \gamma)$.
Since the inner problem is a convex optimization problem with differentiable constraints with respect to parameters, we can apply the envelope theorem to compute the exact gradient analytically using the Lagrange multipliers obtained from the inner solution.
Specifically, let $\hat{\boldsymbol{c}}$ be the optimal solution of the inner problem for fixed $(\sigma, \gamma)$.
We define the Lagrangian of the inner problem as $\mathcal{L}(\boldsymbol{c}, \boldsymbol{\nu}, \boldsymbol{\lambda}, \boldsymbol{\lambda}_E, \boldsymbol{\lambda}_I, \sigma, \gamma)$.
Here, $\boldsymbol{\nu} \in \mathbb{R}^{M+1}$ is the Lagrange multiplier vector associated with the differential equation constraints.
$\boldsymbol{\lambda} \in \mathbb{R}^{M+1}_{\geq 0}$, $\boldsymbol{\lambda}_E \in \mathbb{R}^{M+1}_{\geq 0}$, and $\boldsymbol{\lambda}_I \in \mathbb{R}^{M+1}_{\geq 0}$ are the multipliers for the population upper bound constraint, the non-negativity of $E$, and the non-negativity of $I$, respectively.
The full Lagrangian is given by:
\begin{align*}
    \mathcal{L}
    &=
    \frac{1}{J}\sum_{j=1}^J D(\boldsymbol{c},\boldsymbol{c}_j)^q
    + \boldsymbol{\nu}^\top \left(\mathbf{B}'\boldsymbol{c}^{(I)} - \sigma\mathbf{B}\boldsymbol{c}^{(E)} + \gamma\mathbf{B}\boldsymbol{c}^{(I)} \right)
    \nonumber
    \\
    &
    \quad- \boldsymbol{\lambda}^\top \left( N\boldsymbol{1} - \mathbf{B}\boldsymbol{c}^{(E)} - \mathbf{B}\boldsymbol{c}^{(I)} - \gamma\mathbf{\Phi}\boldsymbol{c}^{(I)} \right)
    - \boldsymbol{\lambda}_E^\top \mathbf{B}\boldsymbol{c}^{(E)}
    - \boldsymbol{\lambda}_I^\top \mathbf{B}\boldsymbol{c}^{(I)}
    .
\end{align*}
According to the envelope theorem, the gradient of the value function $V_q(\sigma, \gamma)$ with respect to the parameters is given by the partial derivatives of the Lagrangian evaluated at the optimal primal-dual pair.
Note that the objective function and the non-negativity constraints (associated with $\boldsymbol{\lambda}_E$ and $\boldsymbol{\lambda}_I$) do not explicitly depend on $\sigma$ and $\gamma$.
Thus, their partial derivatives vanish, and we obtain:
\begin{align*}
    \frac{\partial V_q}{\partial \sigma}
    &=
    \frac{\partial \mathcal{L}}{\partial \sigma}\bigg|_{\boldsymbol{c}=\hat{\boldsymbol{c}}}
    =
    -\boldsymbol{\nu}^\top \mathbf{B}\hat{\boldsymbol{c}}^{(E)},
    \\
    \frac{\partial V_q}{\partial \gamma}
    &=
    \frac{\partial \mathcal{L}}{\partial \gamma}\bigg|_{\boldsymbol{c}=\hat{\boldsymbol{c}}}
    =
    \boldsymbol{\nu}^\top \mathbf{B}\hat{\boldsymbol{c}}^{(I)}
    +
    \boldsymbol{\lambda}^\top \mathbf{\Phi}\hat{\boldsymbol{c}}^{(I)}.
\end{align*}

Since $\mathcal{L}$ is non-differentiable if we consider $q=1$, we use $q=1+\varepsilon$ instead in our implementation for efficient profile optimization.
Here, $\varepsilon$ is a sufficiently small positive constant.

\section{Sensitivity Analysis for the Simulation Study}
\label{sec:sensanal}

We conducted sensitivity analyses for other settings of hyperparameters for the constrained power Fr\'echet means.
Specifically, we varied the number of basis functions ($K \in \{15, 30, 60\}$) and the scaling parameter for the derivative part of the metric ($\rho \in \{0.1, 0.5, 1\}$) for several settings, and applied the proposed method to $J=10$ SEIR curves.
The other settings for the experiments were the same as Section~\ref{sec:simulation}.

Across all examined hyperparameters, the obtained consensus curves and the resulting epidemiological parameters remained almost consistent.
As shown in Figure~\ref{fig:trajectories_EI_sensitivity_K} and Table~\ref{tab:estimated_params_sensitivity_K}, varying the number of B-spline basis functions from $K=15$ to $K=60$ did not yield substantial differences in the obtained curves and the point estimates of $\sigma$, $\gamma$, and $\beta$ under the assumption of SEIR-type transmission model.
Similarly, the results were not sensitive to the choice of $\rho$ (Figure~\ref{fig:trajectories_EI_sensitivity_rho} and Table~\ref{tab:estimated_params_sensitivity_rho}), with almost identical parameter estimates.
These findings suggest that the proposed approach is stable with respect to these hyperparameters.

\begin{figure}[htbp]
    \centering
    \begin{subfigure}[t]{0.48\textwidth}
        \centering
        \includegraphics[width=\textwidth]{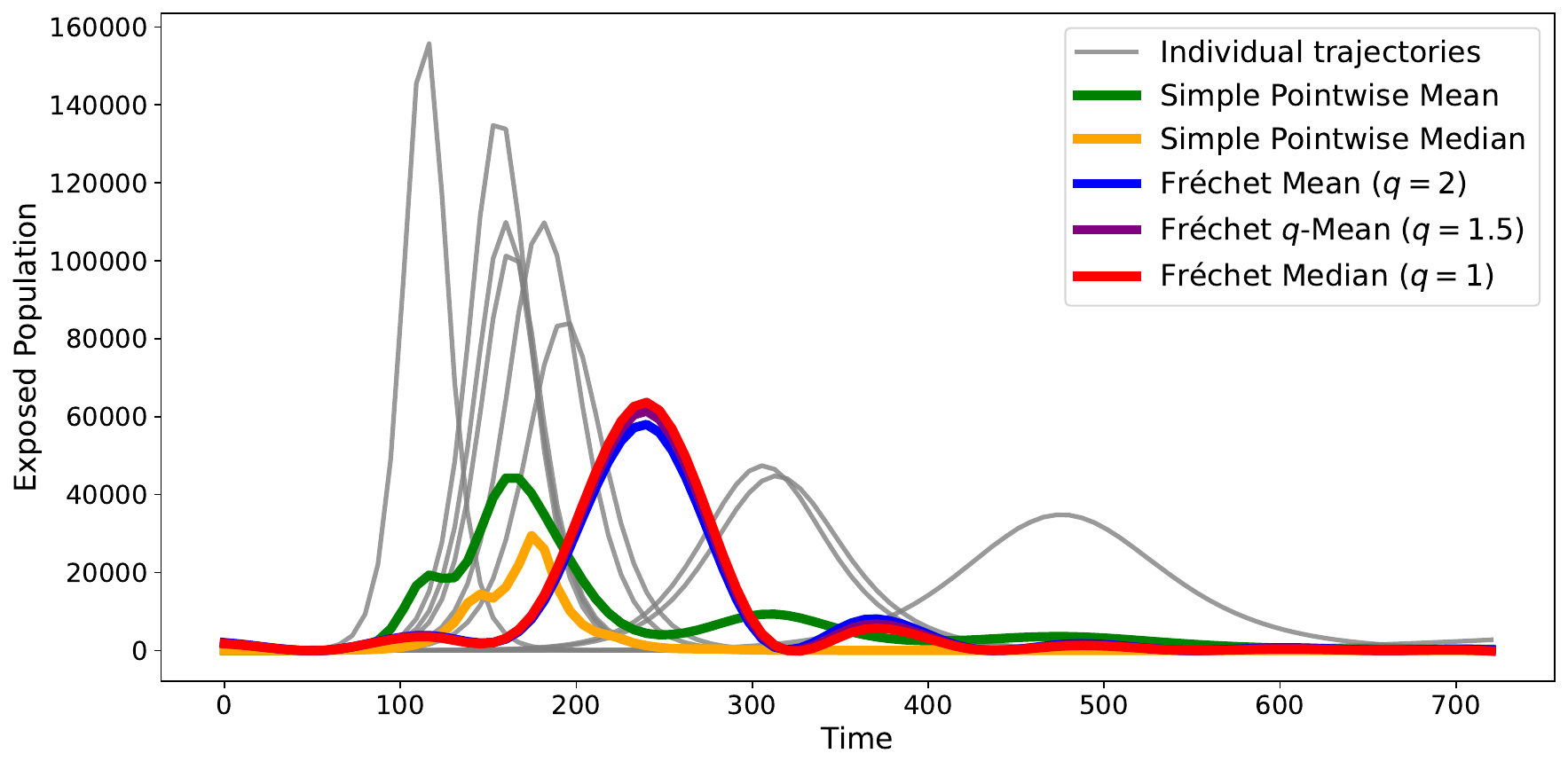}
        \caption{Exposed population ($K=15$)}
        \label{fig:trajectories_E_J10_K15}
    \end{subfigure}
    \hfill
    \begin{subfigure}[t]{0.48\textwidth}
        \centering
        \includegraphics[width=\textwidth]{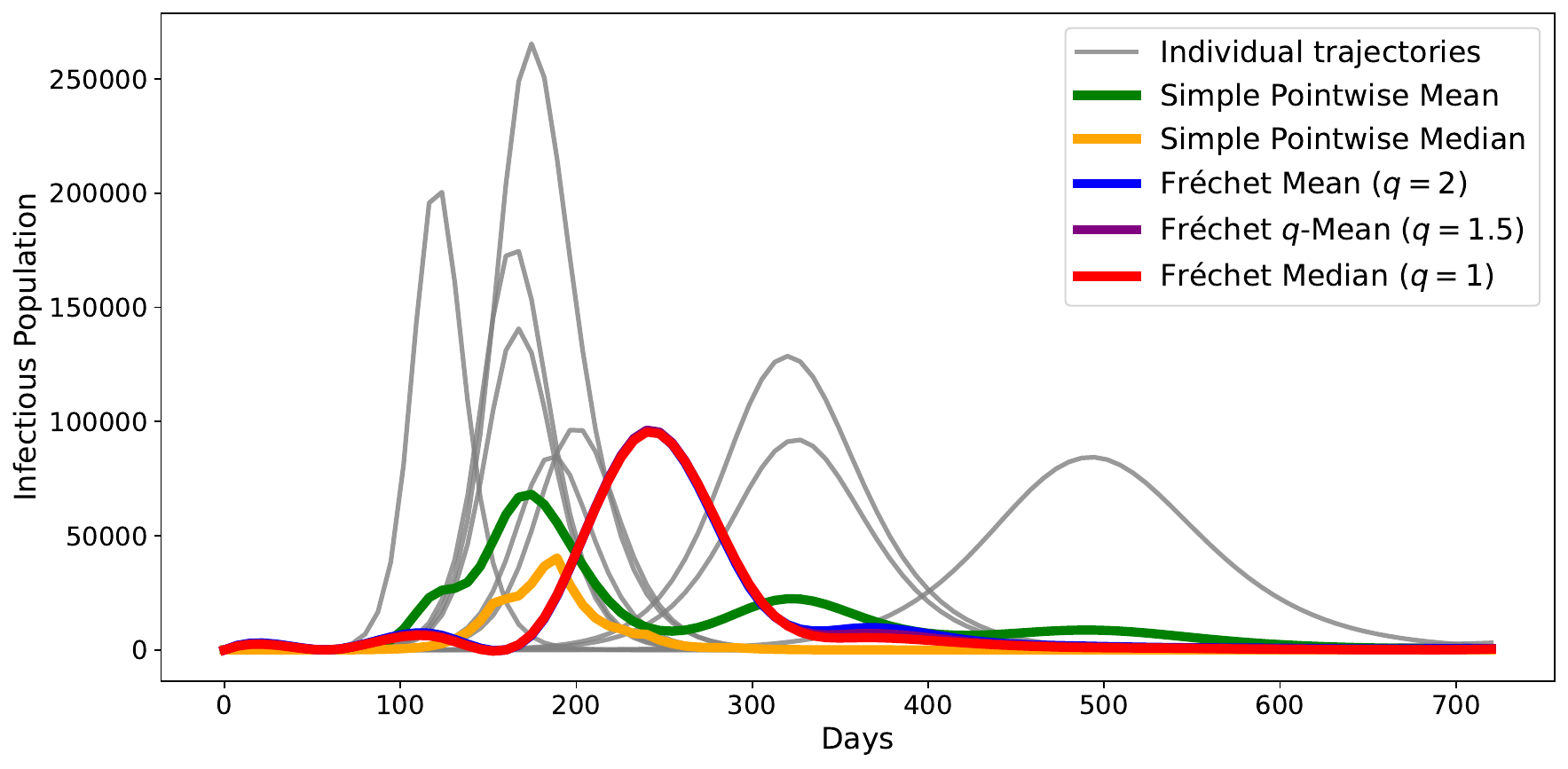}
        \caption{Infectious population ($K=15$)}
        \label{fig:trajectories_I_J10_K15}
    \end{subfigure}
    \hfill
    \begin{subfigure}[t]{0.48\textwidth}
        \centering
        \includegraphics[width=\textwidth]{images/SEIR_E_representative_J10_K30_R1.pdf}
        \caption{Exposed population ($K=30$)}
        \label{fig:trajectories_E_J10_K30}
    \end{subfigure}
    \hfill
    \begin{subfigure}[t]{0.48\textwidth}
        \centering
        \includegraphics[width=\textwidth]{images/SEIR_I_representative_J10_K30_R1.pdf}
        \caption{Infectious population ($K=30$)}
        \label{fig:trajectories_I_J10_K30}
    \end{subfigure}
    \hfill
    \begin{subfigure}[t]{0.48\textwidth}
        \centering
        \includegraphics[width=\textwidth]{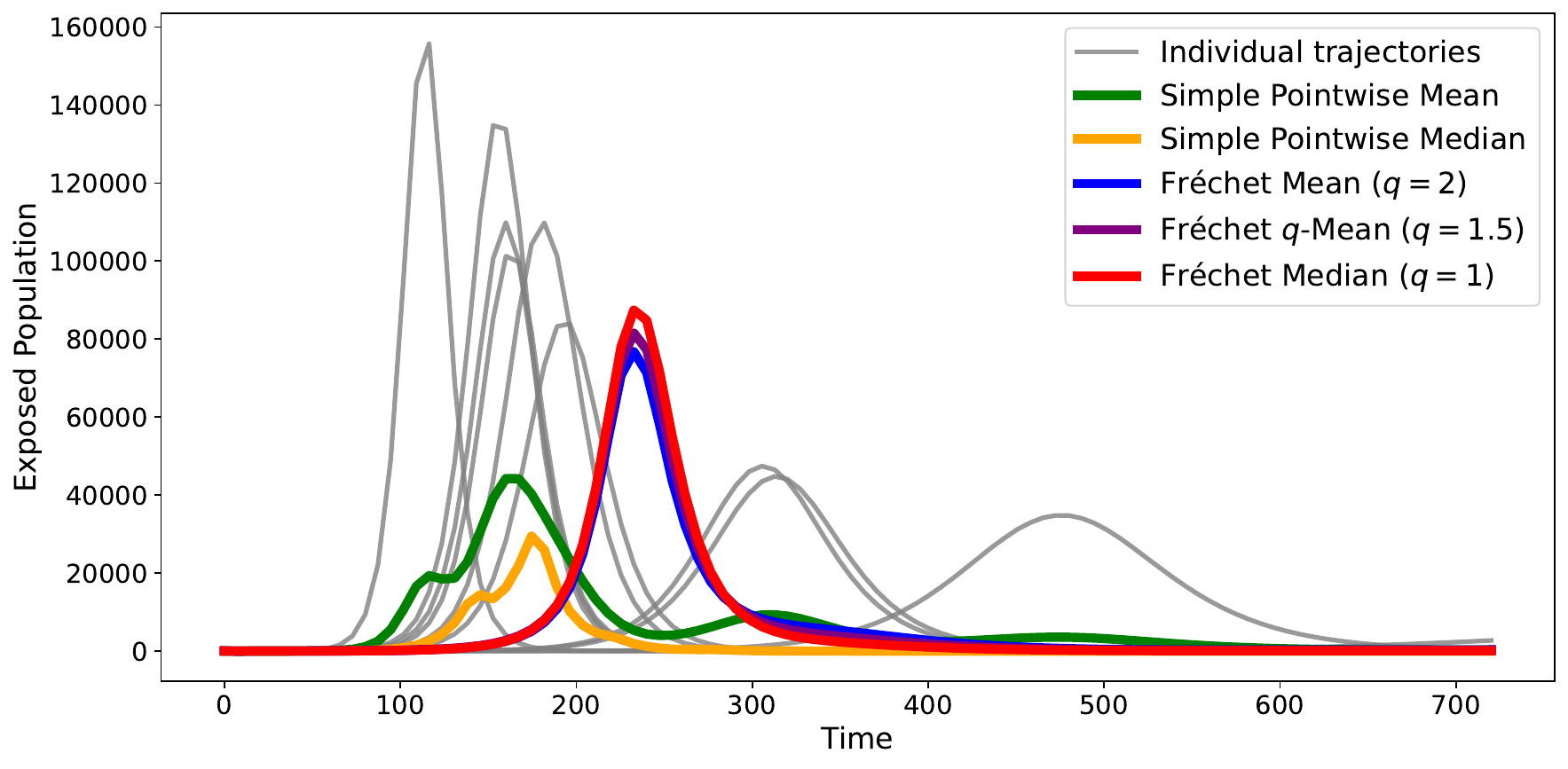}
        \caption{Exposed population ($K=60$)}
        \label{fig:trajectories_E_J10_K60}
    \end{subfigure}
    \hfill
    \begin{subfigure}[t]{0.48\textwidth}
        \centering
        \includegraphics[width=\textwidth]{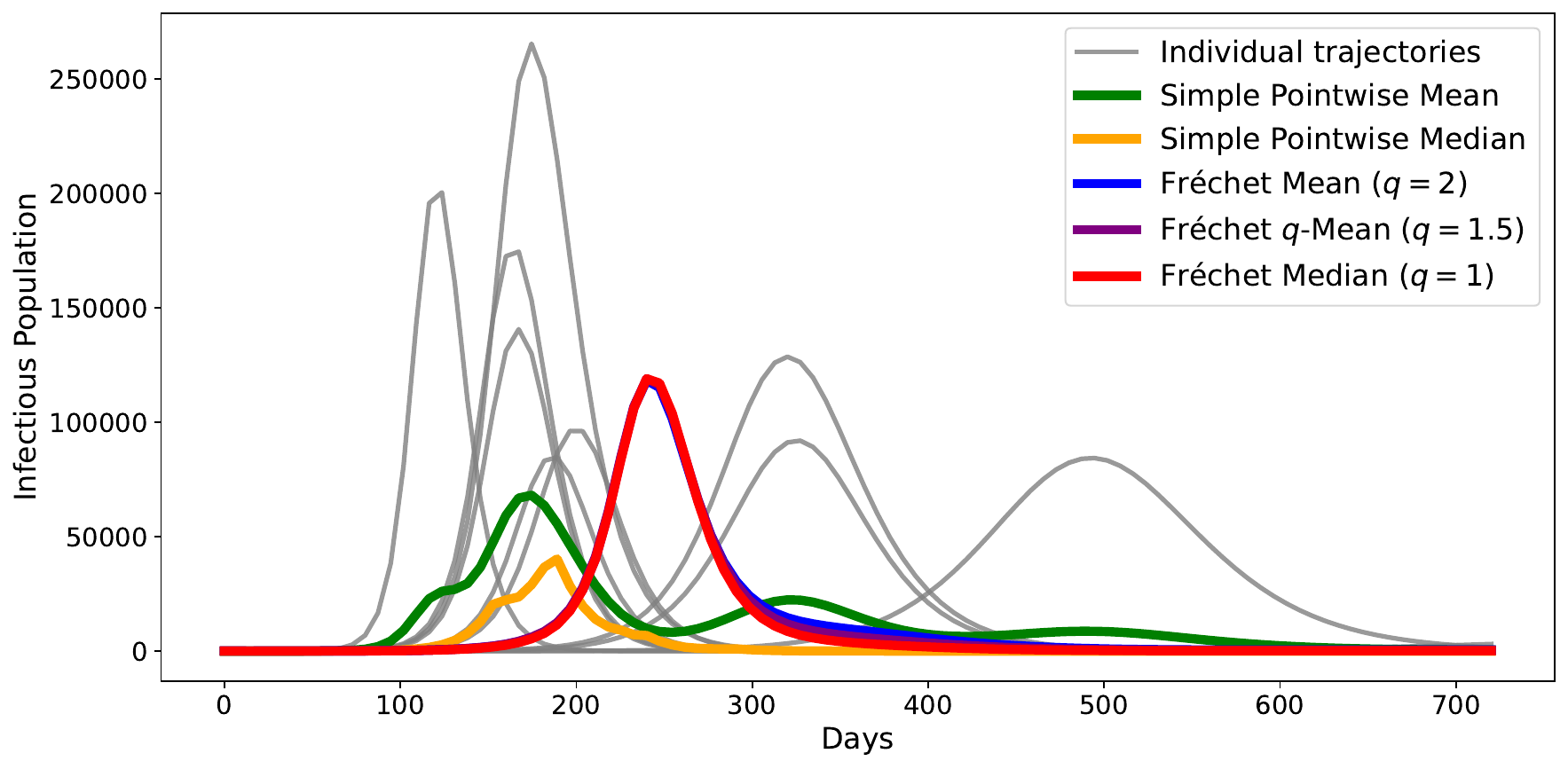}
        \caption{Infectious population ($K=60$)}
        \label{fig:trajectories_I_J10_K60}
    \end{subfigure}
    \caption{Simulated $J=10$ SEIR trajectories (gray) with the Fr\'echet mean (blue), the Fr\'echet median (red), the simple pointwise mean (green), and the simple pointwise median (orange). The power Fr\'echet means were calculated under $\rho=1$.}
    \label{fig:trajectories_EI_sensitivity_K}
\end{figure}

\begin{figure}[htbp]
    \centering
    \begin{subfigure}[t]{0.48\textwidth}
        \centering
        \includegraphics[width=\textwidth]{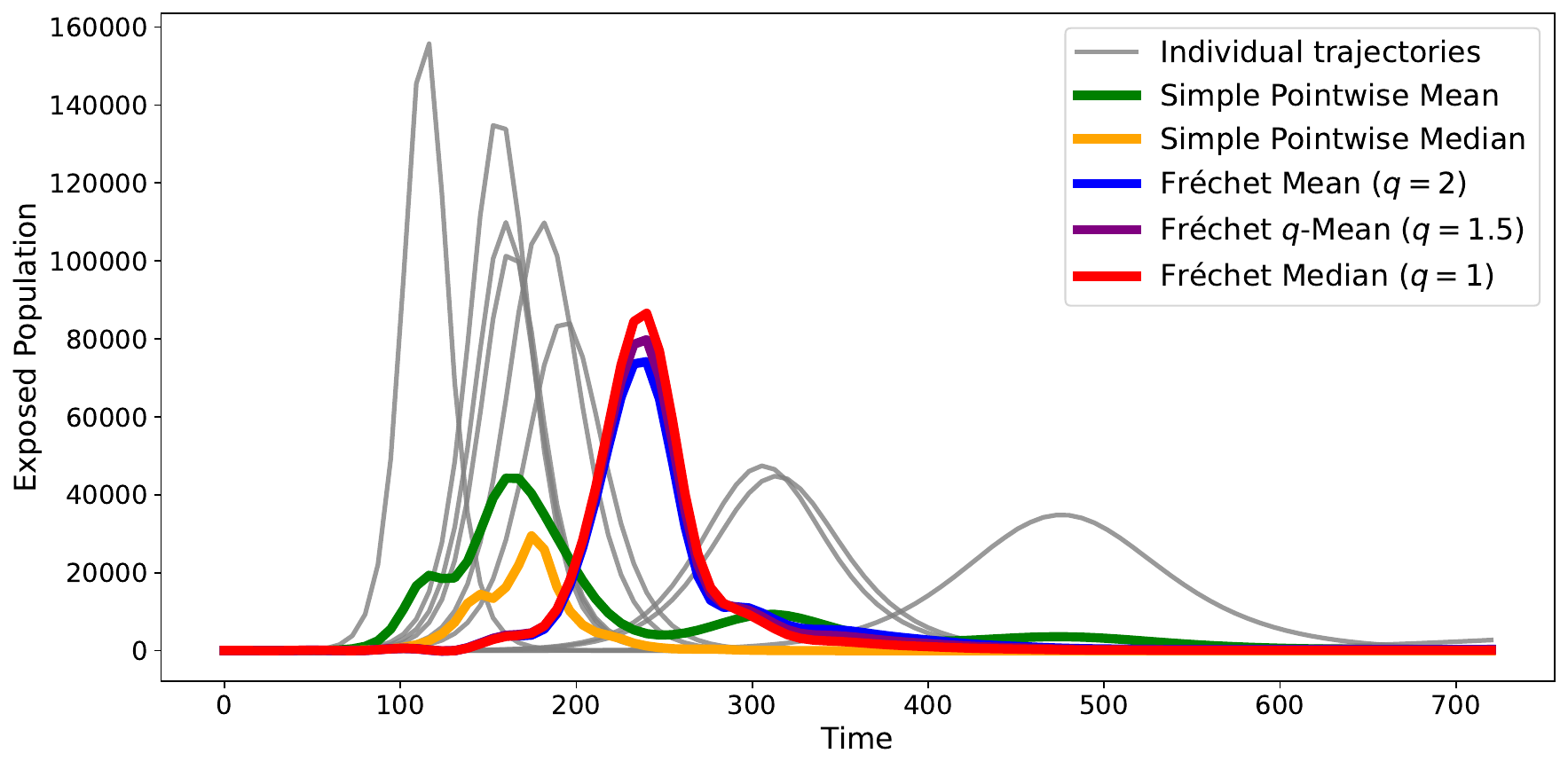}
        \caption{Exposed population ($\rho=0.1$)}
        \label{fig:trajectories_E_J10_R10}
    \end{subfigure}
    \hfill
    \begin{subfigure}[t]{0.48\textwidth}
        \centering
        \includegraphics[width=\textwidth]{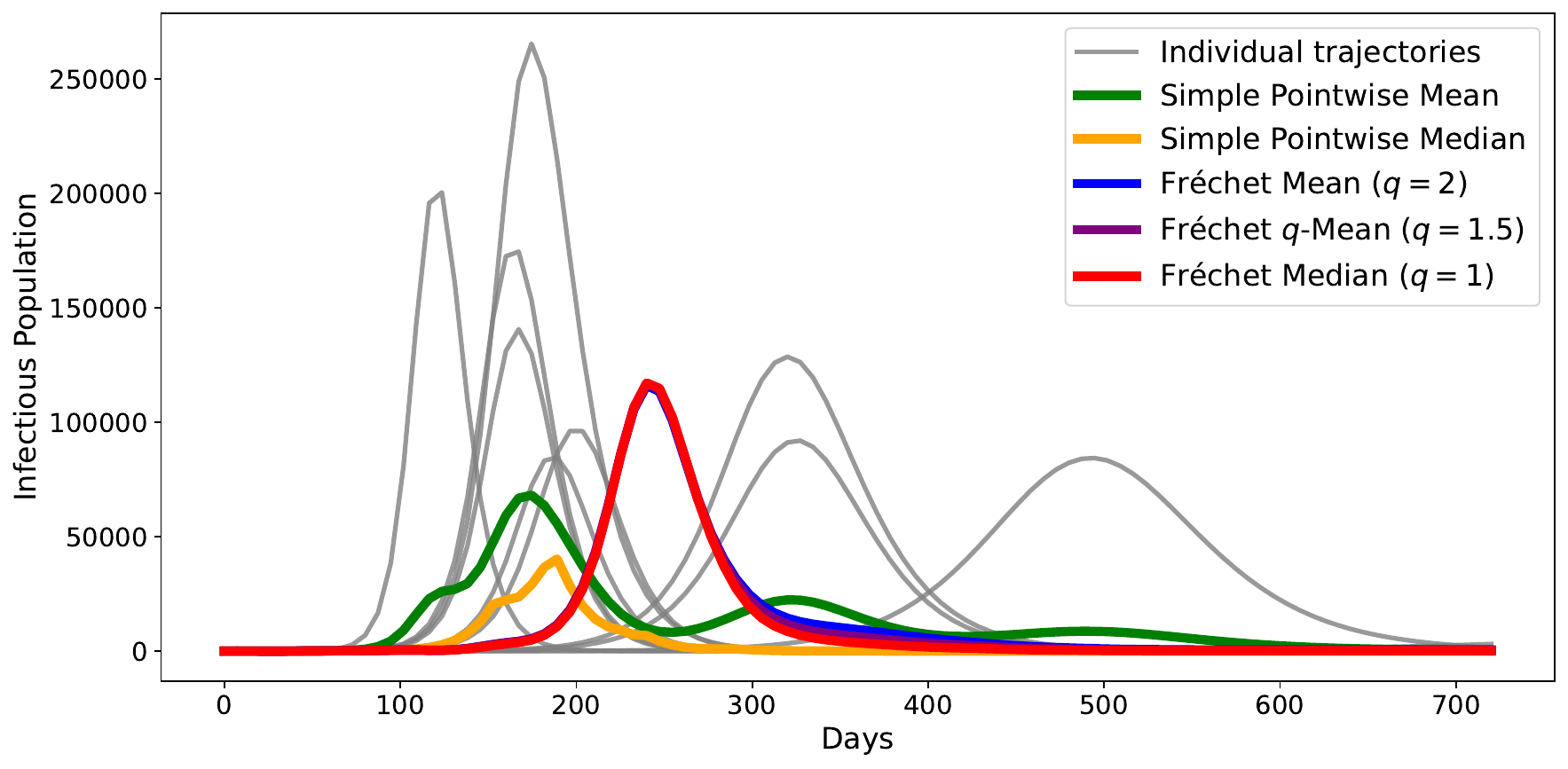}
        \caption{Infectious population ($\rho=0.1$)}
        \label{fig:trajectories_I_J10_R10}
    \end{subfigure}
    \hfill
    \begin{subfigure}[t]{0.48\textwidth}
        \centering
        \includegraphics[width=\textwidth]{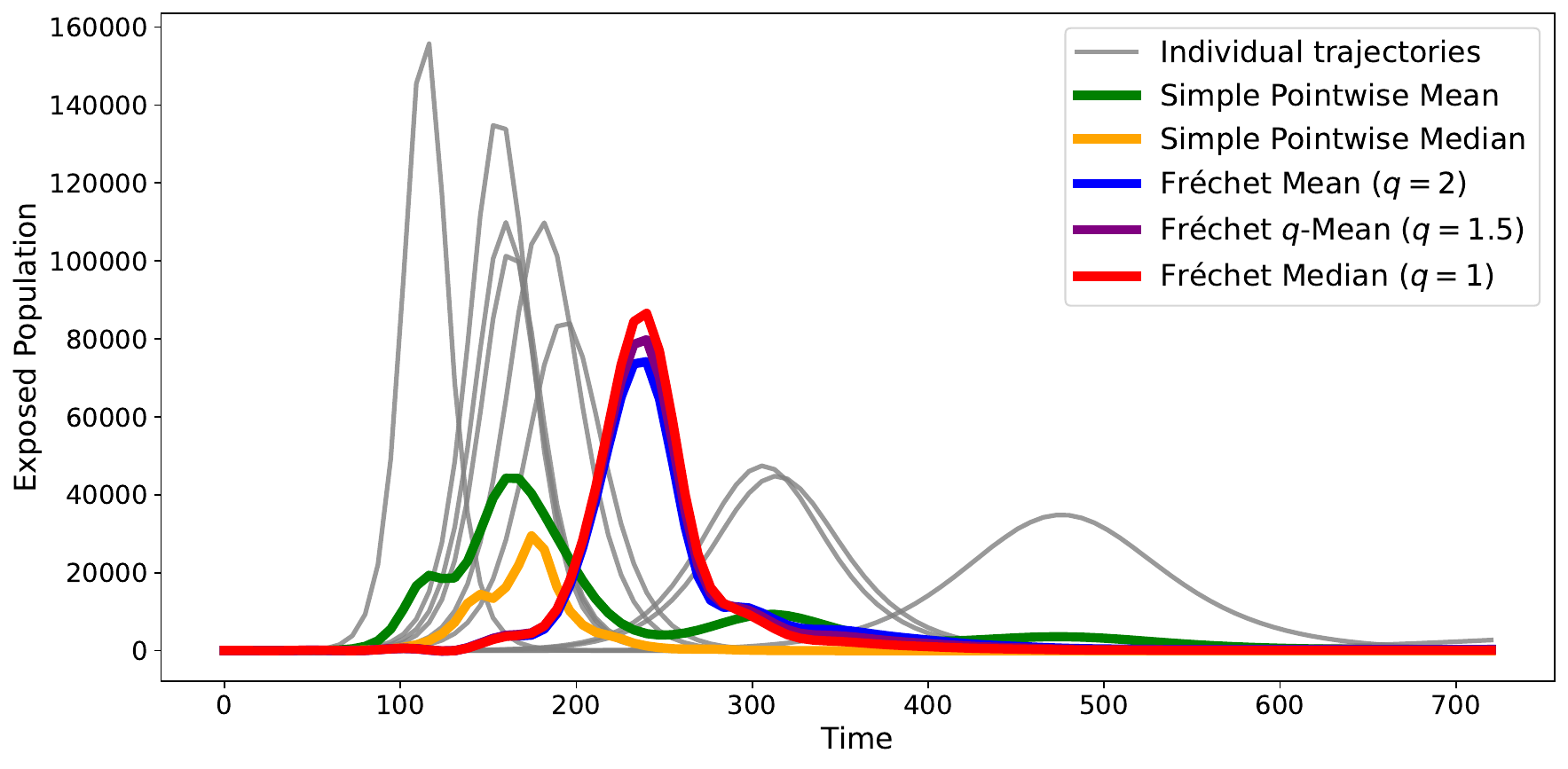}
        \caption{Exposed population ($\rho=0.5$)}
        \label{fig:trajectories_E_J10_R2}
    \end{subfigure}
    \hfill
    \begin{subfigure}[t]{0.48\textwidth}
        \centering
        \includegraphics[width=\textwidth]{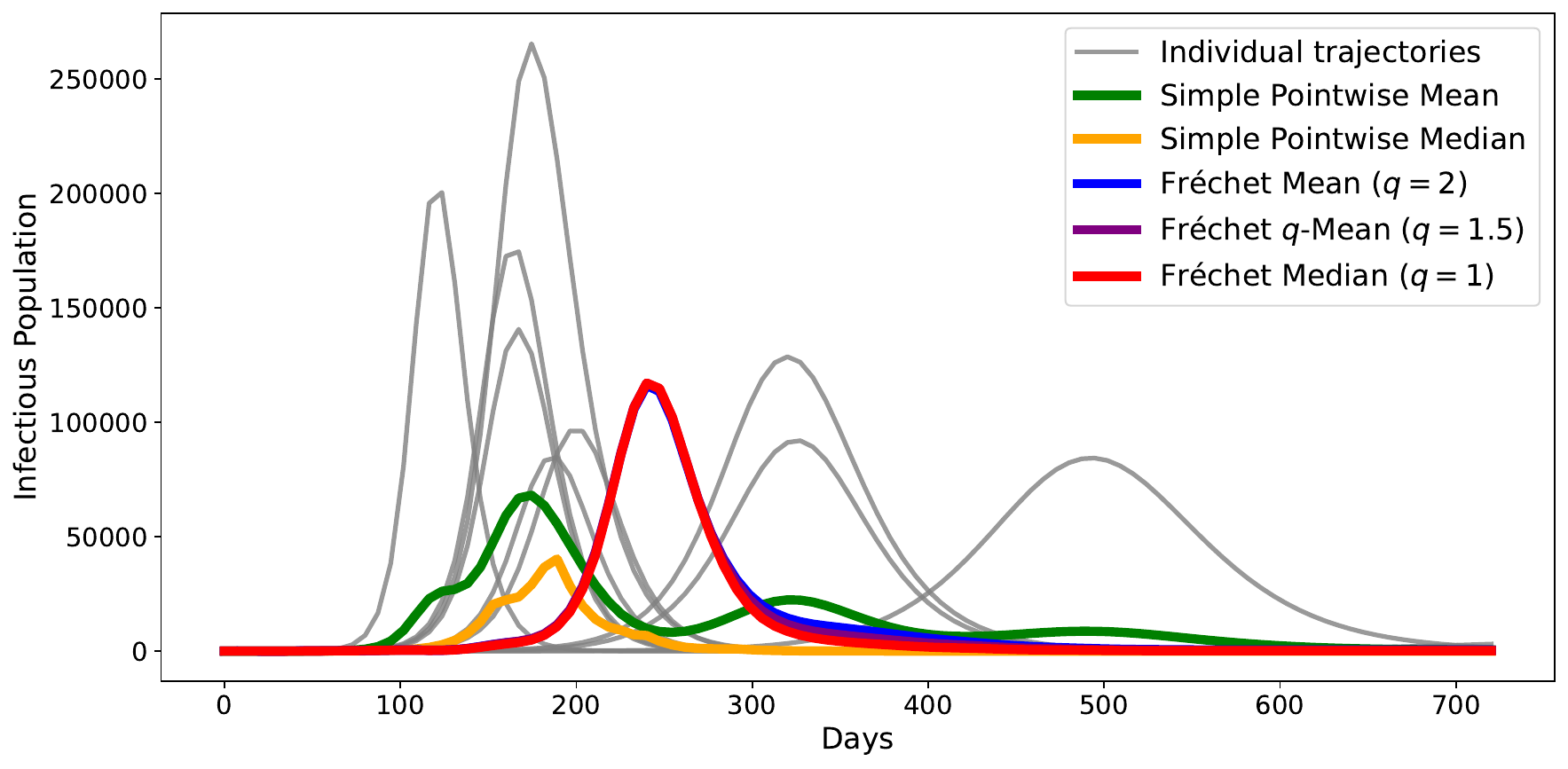}
        \caption{Infectious population ($\rho=0.5$)}
        \label{fig:trajectories_I_J10_R2}
    \end{subfigure}
    \hfill
    \begin{subfigure}[t]{0.48\textwidth}
        \centering
        \includegraphics[width=\textwidth]{images/SEIR_E_representative_J10_K30_R1.pdf}
        \caption{Exposed population ($\rho=1$)}
        \label{fig:trajectories_E_J10_R1}
    \end{subfigure}
    \hfill
    \begin{subfigure}[t]{0.48\textwidth}
        \centering
        \includegraphics[width=\textwidth]{images/SEIR_I_representative_J10_K30_R1.pdf}
        \caption{Infectious population ($\rho=1$)}
        \label{fig:trajectories_I_J10_R1}
    \end{subfigure}
    \caption{Simulated $J=10$ SEIR trajectories (gray) with the Fr\'echet mean (blue), the Fr\'echet median (red), the simple pointwise mean (green), and the simple pointwise median (orange). The power Fr\'echet means were calculated under $K=30$.}
    \label{fig:trajectories_EI_sensitivity_rho}
\end{figure}

\begin{table}[htbp]
\centering
\caption{Estimated parameters and transmission rate under the SEIR model assumption from the consensus trajectories under different settings of $K$ and $q$ for $J=10$. The power Fr\'echet means were calculated under $\rho=1$.}
\label{tab:estimated_params_sensitivity_K}
{\small
\begin{tabular}{cccccccc}
\toprule
$K$ & $q$ & $\sigma$ & $\gamma$ & Incubation ($1/\sigma$) & Infectious ($1/\gamma$) & $\beta$ & $R_0 = \beta/\gamma$\\
\midrule
15 & 1 & \makecell{0.186~(0.017)} & \makecell{0.118~(0.017)} & \makecell{5.411~(0.508)} & \makecell{8.647~(1.340)} & \makecell{0.272~(0.042)} & \makecell{2.299~(0.129)} \\
15 & 1.5 & \makecell{0.184~(0.017)} & \makecell{0.113~(0.016)} & \makecell{5.493~(0.523)} & \makecell{9.027~(1.402)} & \makecell{0.260~(0.041)} & \makecell{2.297~(0.142)} \\
15 & 2 & \makecell{0.182~(0.018)} & \makecell{0.109~(0.016)} & \makecell{5.553~(0.602)} & \makecell{9.418~(1.568)} & \makecell{0.249~(0.042)} & \makecell{2.287~(0.166)} \\
30 & 1 & \makecell{0.200~(0.018)} & \makecell{0.128~(0.021)} & \makecell{5.049~(0.502)} & \makecell{8.048~(1.524)} & \makecell{0.302~(0.075)} & \makecell{2.334~(0.317)} \\
30 & 1.5 & \makecell{0.197~(0.019)} & \makecell{0.121~(0.021)} & \makecell{5.118~(0.557)} & \makecell{8.507~(1.630)} & \makecell{0.284~(0.073)} & \makecell{2.316~(0.310)} \\
30 & 2 & \makecell{0.196~(0.021)} & \makecell{0.116~(0.021)} & \makecell{5.177~(0.642)} & \makecell{8.941~(1.793)} & \makecell{0.269~(0.072)} & \makecell{2.291~(0.312)} \\
60 & 1 & \makecell{0.186~(0.012)} & \makecell{0.120~(0.017)} & \makecell{5.410~(0.361)} & \makecell{8.518~(1.315)} & \makecell{0.260~(0.045)} & \makecell{2.168~(0.209)} \\
60 & 1.5 & \makecell{0.183~(0.012)} & \makecell{0.113~(0.016)} & \makecell{5.502~(0.395)} & \makecell{9.020~(1.368)} & \makecell{0.243~(0.042)} & \makecell{2.148~(0.193)} \\
60 & 2 & \makecell{0.181~(0.014)} & \makecell{0.108~(0.015)} & \makecell{5.576~(0.492)} & \makecell{9.479~(1.509)} & \makecell{0.230~(0.041)} & \makecell{2.126~(0.195)} \\
\bottomrule
\end{tabular}
}
\end{table}

\begin{table}[htbp]
\centering
\caption{Estimated parameters and transmission rate under the SEIR model assumption from the consensus trajectories under different settings of $\rho$ and $q$ for $J=10$. The power Fr\'echet means were calculated under $K=30$.}
\label{tab:estimated_params_sensitivity_rho}
{\small
\begin{tabular}{cccccccc}
\toprule
$\rho$ & $q$ & $\sigma$ & $\gamma$ & Incubation ($1/\sigma$) & Infectious ($1/\gamma$) & $\beta$ & $R_0 = \beta/\gamma$\\
\midrule
0.1 & 1 & \makecell{0.200~(0.018)} & \makecell{0.128~(0.021)} & \makecell{5.050~(0.502)} & \makecell{8.048~(1.523)} & \makecell{0.302~(0.075)} & \makecell{2.334~(0.317)} \\
0.1 & 1.5 & \makecell{0.197~(0.019)} & \makecell{0.121~(0.021)} & \makecell{5.117~(0.550)} & \makecell{8.506~(1.625)} & \makecell{0.284~(0.073)} & \makecell{2.316~(0.310)} \\
0.1 & 2 & \makecell{0.196~(0.021)} & \makecell{0.116~(0.021)} & \makecell{5.178~(0.642)} & \makecell{8.943~(1.793)} & \makecell{0.268~(0.072)} & \makecell{2.291~(0.311)} \\
0.5 & 1 & \makecell{0.200~(0.018)} & \makecell{0.128~(0.021)} & \makecell{5.050~(0.502)} & \makecell{8.048~(1.524)} & \makecell{0.302~(0.075)} & \makecell{2.334~(0.317)} \\
0.5 & 1.5 & \makecell{0.197~(0.019)} & \makecell{0.121~(0.021)} & \makecell{5.118~(0.557)} & \makecell{8.507~(1.630)} & \makecell{0.284~(0.073)} & \makecell{2.316~(0.310)} \\
0.5 & 2 & \makecell{0.196~(0.021)} & \makecell{0.116~(0.021)} & \makecell{5.177~(0.642)} & \makecell{8.942~(1.793)} & \makecell{0.269~(0.072)} & \makecell{2.291~(0.311)} \\
1.0 & 1 & \makecell{0.200~(0.018)} & \makecell{0.128~(0.021)} & \makecell{5.049~(0.502)} & \makecell{8.048~(1.524)} & \makecell{0.302~(0.075)} & \makecell{2.334~(0.317)} \\
1.0 & 1.5 & \makecell{0.197~(0.019)} & \makecell{0.121~(0.021)} & \makecell{5.118~(0.557)} & \makecell{8.507~(1.630)} & \makecell{0.284~(0.073)} & \makecell{2.316~(0.310)} \\
1.0 & 2 & \makecell{0.196~(0.021)} & \makecell{0.116~(0.021)} & \makecell{5.177~(0.642)} & \makecell{8.941~(1.793)} & \makecell{0.269~(0.072)} & \makecell{2.291~(0.312)} \\
\bottomrule
\end{tabular}
}
\end{table}

\bibliography{bibliography}

@article{aron1984seasonality,
    author    = {Aron, J.~L. and Schwartz, I.~B.},
    title     = {Seasonality and period-doubling bifurcations in an epidemic model},
    journal   = {Journal of Theoretical Biology},
    volume    = {110},
    number    = {4},
    pages     = {665--679},
    year      = {1984},
    publisher = {Elsevier},
}

@article{bigot2013frechet,
    author  = {Bigot, J.},
    title   = {Fr{\'e}chet means of curves for signal averaging and application to {ECG} data analysis},
    journal = {The Annals of Applied Statistics},
    volume  = {7},
    number  = {4},
    pages   = {2384--2401},
    year    = {2013},
}

@article{boschi2021functional,
    author    = {Boschi, T. and Di Iorio, J. and Testa, L. and Cremona, M.~A. and Chiaromonte, F.},
    title     = {Functional data analysis characterizes the shapes of the first {COVID}-19 epidemic wave in {Italy}},
    journal   = {Scientific Reports},
    volume    = {11},
    pages     = {17054},
    year      = {2021},
    publisher = {Nature Publishing Group UK London},
}

@article{chowell2020real,
    author    = {Chowell, G. and Luo, R. and Sun, K. and Roosa, K. and Tariq, A. and Viboud, C.},
    title     = {Real-time forecasting of epidemic trajectories using computational dynamic ensembles},
    journal   = {Epidemics},
    volume    = {30},
    pages     = {100379},
    year      = {2020},
    publisher = {Elsevier},
}

@article{etbaigha2018seir,
    author    = {Etbaigha, F. and Willms, A.~R. and Poljak, Z.},
    title     = {An {SEIR} model of influenza {A} virus infection and reinfection within a farrow-to-finish swine farm},
    journal   = {PLoS One},
    volume    = {13},
    number    = {9},
    pages     = {e0202493},
    year      = {2018},
    publisher = {Public Library of Science San Francisco, CA USA},
}

@article{fox2024optimizing,
    author  = {Fox, S.~J. and Kim, M. and Meyers, L.~A. and Reich, N.~G. and Ray, E.~L.},
    title   = {Optimizing disease outbreak forecast ensembles},
    journal = {Emerging Infectious Diseases},
    volume  = {30},
    number  = {9},
    pages   = {1967--1969},
    year    = {2024},
}

@article{he2020seir,
    author    = {He, S. and Peng, Y. and Sun, K.},
    title     = {{SEIR} modeling of the {COVID}-19 and its dynamics},
    journal   = {Nonlinear Dynamics},
    volume    = {101},
    pages     = {1667--1680},
    year      = {2020},
    publisher = {Springer},
}

@article{kermack1927contribution,
    author    = {Kermack, W.~O. and McKendrick, A.~G.},
    title     = {A contribution to the mathematical theory of epidemics},
    journal   = {Proceedings of the Royal Society of London. Series A, Containing Papers of a Mathematical and Physical Character},
    volume    = {115},
    number    = {772},
    pages     = {700--721},
    year      = {1927},
    publisher = {The Royal Society London},
}

@article{li2020early,
    author    = {Li, Q. and Guan, X. and Wu, P. and Wang, X. and Zhou, L. and Tong, Y. and Ren, R. and Leung, K.~S. and Lau, E.~H. and Wong, J.~Y. and Xing, X. and Xiang, N. and Wu, Y. and Li, C. and Chen, Q. and Li, D. and Liu, T. and Zhao, J. and Liu, M. and Tu, W. and Chen, C. and Jin, L. and Yang, R. and Wang, Q. and Zhou, S. and Wang, R. and Liu, H. and Luo, Y. and Liu, Y. and Shao, G. and Li, H. and Tao, Z. and Yang, Y. and Deng, Z. and Liu, B. and Ma, Z. and Zhang, Y. and Shi, G. and Lam, T.~T. and Wu, J.~T. and Gao, G.~F. and Cowling, B.~J. and Yang, B. and Leung, G.~M. and Feng, Z.},
    title     = {Early transmission dynamics in {Wuhan}, {China}, of novel coronavirus--infected pneumonia},
    journal   = {New England Journal of Medicine},
    volume    = {382},
    number    = {13},
    pages     = {1199--1207},
    year      = {2020},
    publisher = {Mass Medical Soc},
}

@article{mcandrew2021adaptively,
    author    = {McAndrew, T. and Reich, N.~G.},
    title     = {Adaptively stacking ensembles for influenza forecasting},
    journal   = {Statistics in Medicine},
    volume    = {40},
    number    = {30},
    pages     = {6931--6952},
    year      = {2021},
    publisher = {Wiley Online Library},
}

@article{mcgowan2019collaborative,
    author    = {McGowan, C.~J. and Biggerstaff, M. and Johansson, M. and Apfeldorf, K.~M. and Ben-Nun, M. and Brooks, L. and Convertino, M. and Erraguntla, M. and Farrow, D.~C. and Freeze, J. and Ghosh, S. and Hyun, S. and Kandula, S. and Lega, J. and Liu, Y. and Michaud, N. and Morita, H. and Niemi, J. and Ramakrishnan, N. and Ray, E.~L. and Reich, N.~G. and Riley, P. and Shaman, J. and Tibshirani, R. and Vespignani, A. and Zhang, Q. and Reed, C. and {Influenza Forecasting Working Group}},
    title     = {Collaborative efforts to forecast seasonal influenza in the {United} {States}, 2015--2016},
    journal   = {Scientific Reports},
    volume    = {9},
    pages     = {683},
    year      = {2019},
    publisher = {Nature Publishing Group UK London},
}

@article{petersen2019frechet,
    author    = {Petersen, A. and M{\"u}ller, H.-G.},
    title     = {Fr{\'e}chet regression for random objects with {Euclidean} predictors},
    journal   = {The Annals of Statistics},
    volume    = {47},
    number    = {2},
    pages     = {691--719},
    year      = {2019},
    publisher = {Institute of Mathematical Statistics},
}

@article{ray2018prediction,
    author    = {Ray, E.~L. and Reich, N.~G.},
    title     = {Prediction of infectious disease epidemics via weighted density ensembles},
    journal   = {PLoS Computational Biology},
    volume    = {14},
    number    = {2},
    pages     = {e1005910},
    year      = {2018},
    publisher = {Public Library of Science San Francisco, CA USA},
}

@article{ray2023comparing,
    author    = {Ray, E.~L. and Brooks, L.~C. and Bien, J. and Biggerstaff, M. and Bosse, N.~I. and Bracher, J. and Cramer, E.~Y. and Funk, S. and Gerding, A. and Johansson, M.~A. and Rumack, A. and Wang, Y. and Zorn, M. and Tibshirani, R.~J. and Reich, N.~G.},
    title     = {Comparing trained and untrained probabilistic ensemble forecasts of {COVID}-19 cases and deaths in the {United} {States}},
    journal   = {International Journal of Forecasting},
    volume    = {39},
    number    = {3},
    pages     = {1366--1383},
    year      = {2023},
    publisher = {Elsevier},
}

@article{reich2019accuracy,
    author    = {Reich, N.~G. and McGowan, C.~J. and Yamana, T.~K. and Tushar, A. and Ray, E.~L. and Osthus, D. and Kandula, S. and Brooks, L.~C. and Crawford-Crudell, W. and Gibson, G.~C. and Moore, E. and Silva, R. and Biggerstaff, M. and Johansson, M.~A. and Rosenfeld, R. and Shaman, J.},
    title     = {Accuracy of real-time multi-model ensemble forecasts for seasonal influenza in the {US}},
    journal   = {PLoS Computational Biology},
    volume    = {15},
    number    = {11},
    pages     = {e1007486},
    year      = {2019},
    publisher = {Public Library of Science San Francisco, CA USA},
}

@article{taylor2022interval,
    author    = {Taylor, K.~S. and Taylor, J.~W.},
    title     = {Interval forecasts of weekly incident and cumulative {COVID}-19 mortality in the {United} {States}: {A} comparison of combining methods},
    journal   = {PLoS One},
    volume    = {17},
    number    = {3},
    pages     = {e0266096},
    year      = {2022},
    publisher = {Public Library of Science San Francisco, CA USA},
}

@article{taylor2023combining,
    author    = {Taylor, J.~W. and Taylor, K.~S.},
    title     = {Combining probabilistic forecasts of {COVID}-19 mortality in the {United} {States}},
    journal   = {European Journal of Operational Research},
    volume    = {304},
    number    = {1},
    pages     = {25--41},
    year      = {2023},
    publisher = {Elsevier},
}

@article{wang2016functional,
    author    = {Wang, J.-L. and Chiou, J.-M. and M{\"u}ller, H.-G.},
    title     = {Functional data analysis},
    journal   = {Annual Review of Statistics and Its Application},
    volume    = {3},
    pages     = {257--295},
    year      = {2016},
    publisher = {Annual Reviews},
}

@incollection{white2010mathematical,
    author    = {White, P.~J. and Enright, M.~C.},
    title     = {Chapter 5 - {Mathematical} models in infectious disease epidemiology},
    booktitle = {Infectious Diseases (Third Edition)},
    pages     = {70--75},
    year      = {2010},
    publisher = {Mosby},
}

@article{yamana2016superensemble,
    author    = {Yamana, T.~K. and Kandula, S. and Shaman, J.},
    title     = {Superensemble forecasts of dengue outbreaks},
    journal   = {Journal of the Royal Society Interface},
    volume    = {13},
    number    = {123},
    pages     = {20160410},
    year      = {2016},
    publisher = {The Royal Society},
}

@article{yamana2017individual,
    author    = {Yamana, T.~K. and Kandula, S. and Shaman, J.},
    title     = {Individual versus superensemble forecasts of seasonal influenza outbreaks in the {United} {States}},
    journal   = {PLoS Computational Biology},
    volume    = {13},
    number    = {11},
    pages     = {e1005801},
    year      = {2017},
    publisher = {Public Library of Science San Francisco, CA USA},
}

@book{brezis2011functional,
    author    = {Br{\'e}zis, H.},
    title     = {Functional Analysis, Sobolev Spaces and Partial Differential Equations},
    publisher = {Springer},
    year      = {2011},
}

@book{reed2012methods,
    author    = {Reed, M.},
    title     = {Methods of Modern Mathematical Physics: Functional Analysis},
    publisher = {Elsevier},
    year      = {2012},
}

@article{griette2021can,
    author   = {Griette, Q. and Demongeot, J. and Magal, P.},
    title    = {What can we learn from {COVID-19} data by using epidemic models with unidentified infectious cases?},
    journal  = {Mathematical Biosciences and Engineering},
    volume   = {19},
    number   = {1},
    pages    = {537--594},
    year     = {2022},
    issn     = {1551-0018},
    doi      = {10.3934/mbe.2022025},
    url      = {https://www.aimspress.com/article/doi/10.3934/mbe.2022025},
}

@article{weitz2015modeling,
    author    = {Weitz, J.~S. and Dushoff, J.},
    title     = {Modeling post-death transmission of {Ebola}: Challenges for inference and opportunities for control},
    journal   = {Scientific Reports},
    volume    = {5},
    number    = {1},
    pages     = {8751},
    year      = {2015},
    publisher = {Nature Publishing Group UK London},
}

@article{gerberry2009seiqr,
    author    = {Gerberry, D.~J. and Milner, F.~A.},
    title     = {An {SEIQR} model for childhood diseases},
    journal   = {Journal of Mathematical Biology},
    volume    = {59},
    number    = {4},
    pages     = {535--561},
    year      = {2009},
    publisher = {Springer},
}

@article{dubey2024metric,
    author    = {Dubey, P. and Chen, Y. and M{\"u}ller, H.-G.},
    title     = {Metric statistics: Exploration and inference for random objects with distance profiles},
    journal   = {The Annals of Statistics},
    volume    = {52},
    number    = {2},
    pages     = {757--792},
    year      = {2024},
    publisher = {Institute of Mathematical Statistics},
}

@article{muller2016peter,
    author    = {M{\"u}ller, H.-G.},
    title     = {{Peter Hall}, functional data analysis and random objects},
    journal   = {The Annals of Statistics},
    pages     = {1867--1887},
    year      = {2016},
    publisher = {JSTOR},
}

@book{ramsay2005functional,
    author    = {Ramsay, J.~O. and Silverman, B.~W.},
    title     = {Functional Data Analysis},
    publisher = {Springer},
    year      = {2005},
    edition   = {2nd},
    series    = {Springer Series in Statistics},
    address   = {New York, NY},
    doi       = {10.1007/b98888},
    isbn      = {978-0-387-40080-8},
}

@book{hsing2015theoretical,
    author    = {Hsing, T. and Eubank, R.~L.},
    title     = {Theoretical Foundations of Functional Data Analysis, With an Introduction to Linear Operators},
    publisher = {Wiley Online Library},
    volume    = {997},
    year      = {2015},
}

@article{schotz2022strong,
    author    = {Sch{\"o}tz, C.},
    title     = {Strong laws of large numbers for generalizations of {Fr{\'e}chet} mean sets},
    journal   = {Statistics},
    volume    = {56},
    number    = {1},
    pages     = {34--52},
    year      = {2022},
    publisher = {Taylor \& Francis},
}

@article{allen2021early,
    author  = {Allen, K. and Parry, A.~E. and Glass, K.},
    title   = {Early reports of epidemiological parameters of the {COVID-19} pandemic},
    journal = {Western Pacific Surveillance and Response Journal: WPSAR},
    volume  = {12},
    number  = {2},
    pages   = {65},
    year    = {2021},
}

@article{wan2020seir,
    author  = {Wan, K. and Chen, J. and Lu, C. and Dong, L. and Wu, Z. and Zhang, L.},
    title   = {When will the battle against novel coronavirus end in {Wuhan}: {A} {SEIR} modeling analysis},
    journal = {Journal of Global Health},
    volume  = {10},
    number  = {1},
    pages   = {011002},
    year    = {2020},
}

@article{dagpunar2020sensitivity,
    author    = {Dagpunar, J.~S.},
    title     = {Sensitivity of {UK} {COVID-19} deaths to the timing of suppression measures and their relaxation},
    journal   = {Infectious Disease Modelling},
    volume    = {5},
    pages     = {525--535},
    year      = {2020},
    publisher = {Elsevier},
}

@article{carcione2020simulation,
    author    = {Carcione, J.~M. and Santos, J.~E. and Bagaini, C. and Ba, J.},
    title     = {A simulation of a {COVID-19} epidemic based on a deterministic {SEIR} model},
    journal   = {Frontiers in Public Health},
    volume    = {8},
    pages     = {230},
    year      = {2020},
    publisher = {Frontiers Media SA},
}

@article{peirlinck2020outbreak,
    author    = {Peirlinck, M. and Linka, K. and Sahli Costabal, F. and Kuhl, E.},
    title     = {Outbreak dynamics of {COVID-19} in {China} and the {United States}},
    journal   = {Biomechanics and Modeling in Mechanobiology},
    volume    = {19},
    number    = {6},
    pages     = {2179--2193},
    year      = {2020},
    publisher = {Springer},
}

@article{wu2020nowcasting,
    author    = {Wu, J.~T. and Leung, K. and Leung, G.~M.},
    title     = {Nowcasting and forecasting the potential domestic and international spread of the {2019-nCoV} outbreak originating in {Wuhan}, {China}: A modelling study},
    journal   = {The Lancet},
    volume    = {395},
    number    = {10225},
    pages     = {689--697},
    year      = {2020},
    publisher = {Elsevier},
}

@article{tang2020estimation,
    author    = {Tang, B. and Wang, X. and Li, Q. and Bragazzi, N.~L. and Tang, S. and Xiao, Y. and Wu, J.},
    title     = {Estimation of the transmission risk of the {2019-nCoV} and its implication for public health interventions},
    journal   = {Journal of Clinical Medicine},
    volume    = {9},
    number    = {2},
    pages     = {462},
    year      = {2020},
    publisher = {MDPI},
}
\bibliographystyle{apalike}

\end{document}